\newcommand{\indep}{\perp \!\!\! \perp} 
\theoremstyle{definition}
\newtheorem{assumption}{\textbf{Assumption}}
\newtheorem*{example}{\textbf{Example}}
\theoremstyle{plain}
\newtheorem{theorem}{\textbf{Theorem}}
\newtheorem{corollary}{\textbf{Corollary}}
\newtheorem{lemma}{\textbf{Lemma}}
\theoremstyle{definition}
\newcommand{\continuation}{??}
\def \bbE {\mathbb{E}}
\title{Correcting invalid regression discontinuity designs with multiple time period data\thanks{We thank Itay Saporta-Eksten, Analia Schlosser, Yoav Goldstein, Kirill Borusyak, Xavier Jaravel and seminar participants at EUROCIM 2024 and Tel Aviv University for
their helpful comments and suggestions. This work was supported by a grant from the Tel Aviv University Center for AI and Data Science (TAD). DL gratefully acknowledges financial support from The Israel Pollak Fellowship Program for Excellence.}}
\author[1]{Dor Leventer}
\author[2]{Daniel Nevo}
\affil[1]{Eitan Berglas School of Economics, Tel Aviv University}
\affil[2]{Department of Statistics and Operations Research, Tel Aviv University}
\date{\today}
\begin{document}

\maketitle

\begin{abstract}
Regression Discontinuity (RD) designs rely on the continuity of potential outcome means at the cutoff, but this assumption often fails when other treatments or policies are implemented at this cutoff. We characterize the bias in sharp and fuzzy RD designs due to violations of continuity, and develop a general identification framework that leverages multiple time periods to estimate local effects on the (un)treated. We extend the framework to settings with carry-over effects and time-varying running variables, highlighting additional assumptions needed for valid causal inference. We propose an estimation framework that extends the conventional and bias-corrected single-period local linear regression framework to multiple periods and different sampling schemes, and study its finite-sample performance in simulations. Finally, we revisit a prior study on fiscal rules in Italy to illustrate the practical utility of our approach.
\end{abstract}

\section{Introduction}

Regression Discontinuity (RD) is a widely used design for causal inference that exploits an observable discontinuity in treatment assignment at a cutoff \citep{imbens2008regression, lee2010regression}. A common identification assumption in RD is that the mean potential outcomes are continuous at the cutoff \citep{hahn2001identification}. However, this assumption may fail when other policies or treatments are also implemented at the same cutoff. In such cases, researchers often incorporate identification assumptions similar to difference-in-differences (DID) designs, but without sufficient theoretical justification. This lack of formal grounding may lead to incorrect empirical conclusions. In this paper, we develop a general identification and estimation framework for such settings, which we term RD-DID.

To illustrate these challenges, consider the following two applications where the continuity assumption might fail. \citet{bertrand2021improving} studied an educational reform in Norway, where children born after January 1, 1978, entered a reformed schooling system. However, birth timing before and after January 1 also affects school cohort placement, creating an additional discontinuity at the cutoff. Similarly, \citet{bennedsen2022firms} examined Danish pay transparency laws, where firms above a 35-employee threshold faced new reporting requirements. A potential concern is that other regulations also take effect at the 35-employee threshold. In both of these examples, there is a concern the continuity assumption does not hold and the authors utilized time periods prior to the reform, where no units were treated, to correct the RD design. 

We first analyze sharp RD designs when the continuity assumption does not hold, and show that the observed outcome discontinuity at the cutoff can be decomposed into a causal effect plus a bias term. The causal effect is the average treatment effect on the treated (ATT) or untreated (ATU) local to the cutoff, while the bias term is the discontinuity in treated or untreated potential outcomes at the cutoff. In fuzzy RD, we find that the bias terms are a weighted average of the treated and untreated potential outcome discontinuities.

To identify the local ATT and ATU when the the continuity assumption is violated, we develop a general identification framework for sharp RD treatment assignment using multiple time periods. Specifically, we incorporate time periods in which all units are treated or all units are untreated. In time periods where all units are treated (untreated) the discontinuity in treated (untreated) potential outcome means is identified. Assuming how these discontinuities change over time--e.g., assuming they remain constant or follow a linear trend--we can recover the bias in time periods with RD treatment assignment, enabling identification of the local ATT and ATU.

Building on the general RD-DID framework, we study three extensions to the identification framework which are common in applications. First, we analyze RD-DID with fuzzy RD treatment assignment. Unlike single-period fuzzy RD, which requires additional assumptions such as local conditional independence or monotonicity (\citet{cattaneo2022regression}), we show that fuzzy RD-DID also demands more data: observations from both time periods where all units are treated and where all units are untreated. 
Second, we show that carry-over effects complicate identification in RD-DID, especially if periods where all units share the same treatment status occur after RD assignment periods. We leverage an assumption similar to no-anticipation in DID \citep{callaway2021difference}, to obtain identification.
Lastly, we consider settings where the running variable changes over time. We show this introduces composition effects. While removing treatment switchers might mitigate composition effects, this practice alters the population near the cutoff, creating an analysis similar to ''donut-hole'' RD (\citet{bajari2011regression}). 

A common alternative in applied research when under sharp RD assignment uses DID methods (e.g., \citet{schonberg2014expansions, lalive2014parental, danzer2018paid}). We highlight two differences between RD-DID and this alternative.\footnote{The comparison between RD-DID and DID with an RD treatment assignment relates to literature on the justification and testing of the parallel trends assumption (e.g., \citet{ghanem2022selection,rambachan2023more,roth2023parallel}).}
First, DID identifies a global causal effect, whereas RD-DID target local effects. Moreover, the parallel trends assumption in DID is much stronger than the local assumptions required for RD-DID, making it less plausible in many RD settings. Therefore, we argue that researchers should not apply DID methods in RD contexts without careful justification.

Building on the contemporary single-period RD estimation framework (for a recent review see \citet{cattaneo2022regression}), we develop an estimation approach suited for multiple periods in RD-DID. Our approach applies local linear regression estimators on both sides of the cutoff at each time period, and then aggregates the estimated outcome discontinuities according to their assumed trend over time. We implement this approach for both conventional and bias-corrected estimators (\citet{calonico2014robust}). A R package \href{https://github.com/dorleventer/rddid}{\texttt{rddid}} that implements the proposed estimation framework is provided for general use. 

We derive variance estimators for different  sampling schemes. We consider repeated cross-section (CS), where different units are observed in each period, and panel data, where the same units are tracked over time. For panel data, we distinguish between two cases:  time-constant running variable (PC), and  time-varying running variable (PV). We show that the variance differs across sampling schemes due to  the form of the covariance of outcome discontinuity estimators across periods.
Through Monte Carlo simulations, we evaluate the finite-sample performance of our variance estimators. Coverage rates align with known results in single-period RD, with bias-corrected confidence intervals outperforming conventional ones. Finally, we show that correctly specifying the sampling scheme is crucial--assuming CS sampling while the data is PC leads to overestimated standard errors.

To illustrate our results, we revisit \citet{grembi2016fiscal}, which studies the impact of fiscal rules on financial outcomes in Italy. The 2001 reform imposed deficit targets on municipalities above 5,000 residents, creating a sharp RD design. However, other regulations also change discontinuously at the same cutoff, calling for an RD-DID design.
We test the assumption of time-constant potential outcome discontinuities using equivalence tests (\citet{hartman2018equivalence}), and find the data reject this identification assumption. Estimating the ATU under a linear-in-time discontinuity instead, we find larger treatment effects in absolute terms compared to estimates under constant discontinuities, suggesting underestimation of the effects fiscal rule in the original study. Finally, we find that 11\% of municipalities switch treatment status, potentially introducing composition effects. Direct estimates confirm the composition effects are non-negligible. However, removing switchers distorts the running variable density, creating a drop in the density at the cutoff.

This paper contributes to the literature on identification in confounded RD designs (\citet{mealli2012evaluating,wing2013strengthening,grembi2016fiscal,eggers2018regression,galindo2018fuzzy,picchetti2024difference}). \citet{grembi2016fiscal} highlight the role of additional time periods in RD settings, while our approach generalizes RD-DID to accommodate time-varying discontinuities, moving beyond the standard two-period differencing approach. We further extend the framework to fuzzy RD-DID, carry-over effects, and time-varying running variables, expanding the range of applications where RD-DID can be applied.
A related issue arises in studies that implement RD-DID using differenced outcomes (e.g., \citet{bagues2021can}), as formalized in \citet{picchetti2024difference}. However, when the running variable changes over time, such regressions lack a clear interpretation. Our framework remains well-defined under both constant and time-varying running variables, ensuring broader applicability and clearer causal interpretation.

Beyond identification, this paper also contributes to the literature on estimation in RD designs (see, e.g., \citet{imbens2012optimal,calonico2014robust,gelman2019high}). We extend the contemporary local linear framework to multiple time periods, providing a systematic approach for estimation and inference in such designs. These contributions are particularly relevant for empirical studies that estimate outcome discontinuities using pooled regressions across multiple periods, a common practice in applied work (e.g., \citet{bertrand2021improving,baltrunaite2019let,bennedsen2022firms,avdic2018modern}).\footnote{By pooled regression, we refer to a linear regression that combines multiple time periods while allowing intercepts and slopes to vary across periods.} Unlike pooled regressions, which implicitly impose strong assumptions on the evolution of potential outcome discontinuities, our approach allows for greater flexibility in modeling how discontinuities evolve over time. Additionally, our framework facilitates bias correction and variance estimation, while clarifying the role of fully treated and untreated periods, which pooled approaches often implicitly combine rather than use to identify distinct parameters.

The paper is structured as follows. Section \ref{sec:setup} introduces the setup and notation. Section \ref{sec:RDbasic} examines bias in single-period RD under continuity violations. Section \ref{sec:RD-DID} develops the RD-DID framework and extends it to fuzzy RD, carry-over effects, and time-varying running variables. Section \ref{sec:did_comarison} compares RD-DID to DID with RD treatment assignment. Section \ref{sec:est} discusses estimation and inference, while Section \ref{sec:simulations} evaluates finite-sample performance via simulations. Section \ref{sec:app} illustrates the results in an application. Section \ref{sec:conc} concludes. Proofs of all theorems are presented in Appendix \ref{sec:app_proofs}. A R package \href{https://github.com/dorleventer/rddid}{\texttt{rddid}} implements the estimation framework.

\section{Setup and Notation}
\label{sec:setup}

Let $\mathcal{T}=\{1,..,T\}$ be a finite set of $T$ time periods. For each time period $t\in\mathcal{T}$, data on $n_t$ units is collected. Let $R_{it}$ and $W_{it}$ be the running variable and treatment value for unit $i$ at time $t$, respectively. We  assume all units are $iid$, so we occasionally omit the index $i$ to improve clarity. We denote by $c$ a time-invariant cutoff.

Let $p_t(r)=\Pr\left(W_t=1\mid R_t=r\right)$ be the probability of receiving the treatment conditionally on the value of the running variable. We denote limits of the conditional treatment probabilities above and below the cutoff $c$ by $p_{t,(+)}=\lim\limits_{r\rightarrow c^{+}}p_{t}\left(r\right)$ and $p_{t,(-)}=\lim\limits_{r\rightarrow c^{-}}p_{t}\left(r\right)$, respectively. In any time period $t$ when treatment is allocated according to an RD design,  $p_{t,(+)}\neq p_{t,(-)}$. In a ``sharp RD'', one of the limits equals one and the other zero, and WLOG in such designs  $p_{t,(+)}=1$ and $p_{t,(-)}=0$. All other RD designs are termed ``fuzzy RD''. 

Using the potential outcomes framework, let $Y_{i,t}(0)$ and $Y_{i,t}(1)$ be the potential outcomes under non-treatment and treatment, respectively, for unit $i$ at time $t$. We take the standard assumptions that the observed and potential outcomes are connected via the realized treatment at time $t$, i.e., $Y_{i,t}=Y_{i,t}(W_{i,t})$. 
This setup assumes no carry-over effects of the treatment, an assumption we relax in 
Section \ref{sec:carryover}.
Let $\mu_{w,t}(r)= \bbE[Y_{t}(w)|R_{t}=r]$ and $\mu_t(r)=\bbE[Y_{t}|R_{t}=r]$ be the mean potential and observed outcomes, respectively, at $R_{i,t}=r$ and time $t$. 
Similar to the conditional probability of the treatment, we use plus and minus subscripts to denote limits of the conditional expectations of the outcome above and below $c$, respectively. That is, $\mu_{w,t,(+)}=\lim\limits_{r\rightarrow c^{+}}\mu_{w,t}\left(r\right)$ and $\mu_{w,t,(-)}=\lim\limits_{r\rightarrow c^{-}}\mu_{w,t}\left(r\right)$ for the mean potential outcomes, and, $\mu_{t,(+)}=\lim\limits_{r\rightarrow c^{+}}\mu_t\left(r\right)$ and $\mu_{t,(-)}=\lim\limits_{r\rightarrow c^{-}}\mu_t\left(r\right)$ for the observed outcome mean.

\section{Bias in the RD Estimator without Continuity}
\label{sec:RDbasic}

A leading approach for identification in a single time period RD designs relies on the assumption of continuity of the conditional expectations of the potential outcomes at the cutoff (\citet{hahn2001identification}).\footnote{An alternative identification framework is local randomization (see \citet{cattaneo2022regression}).} In this section, we examine the bias arising when the continuity assumption, defined formally below, does not hold in a classical single-period RD.\footnote{The bias discussed in this section is due to identification. It is inherently different from the widely discussed estimation bias analyzed in the RD literature.} 
We return to the multiple-period setup in Section \ref{sec:RD-DID}, where we present remedies for this bias. We therefore omit in this section the subscript $t$ so, for example, $\mu_{w}(r)=\mu_{w,t}(r)$, and similarly $\mu_{w,(+)}, \mu_{w,(-)}, \mu_{(+)}$ and $\mu_{(-)}$ all represent the respective quantities at a specific time period with an RD assignment.

\subsection{Relaxing the Continuity Assumption}

In RD designs, researchers center their efforts around the average treatment effect (ATE) local to the cutoff, $ATE(c)= \mu_{1}(c) - \mu_{0}(c)$. As discussed above, a leading approach for identification relies on the following continuity assumption. 

\begin{assumption}
\label{A.cont}
(i) 
$\mu_1(r)$ is  continuous at $r=c$.
(ii) 
$\mu_0(r)$ is  continuous at $r=c$. 
\end{assumption}
\noindent Under Assumption \ref{A.cont}, the $ATE(c)$ is identified in a sharp RD design by $ATE(c)=\mu_{(+)}-\mu_{(-)}$. To provide intuition about violations\footnote{It is important to distinguish between a discontinuity in potential outcomes at the cutoff due to a confounded treatment or policy, and a discontinuity due to manipulation of the running variable (\citet{mccrary2008manipulation,lee2010regression}). 
In the analysis that follows, we assume away the case of manipulation. We return to this point in Section \ref{sec:time_vary_r} which considers the case of time-varying running variables.} of Assumption \ref{A.cont},
we visualize in Figure \ref{fig:example_discon} a stylized example. 
As a benchmark, the dashed dark-blue line represents a case where $\mu_1(r)$ is continuous at $r=c$, and hence Assumption \ref{A.cont} is not violated. If, however, the mean treated potential outcomes is as described by one of the dotted light-blue lines, then there is a discontinuity at the cutoff, and Assumption \ref{A.cont}(i) does not hold. A similar discussion can be made with respect to the mean potential outcomes under no treatment, by comparing the dashed and dotted orange lines.

\begin{figure}[t!]
    \centering
    \caption{Examples of Discontinuities in Means of Potential Outcome}
    \label{fig:example_discon}
    \includegraphics[width=\textwidth]{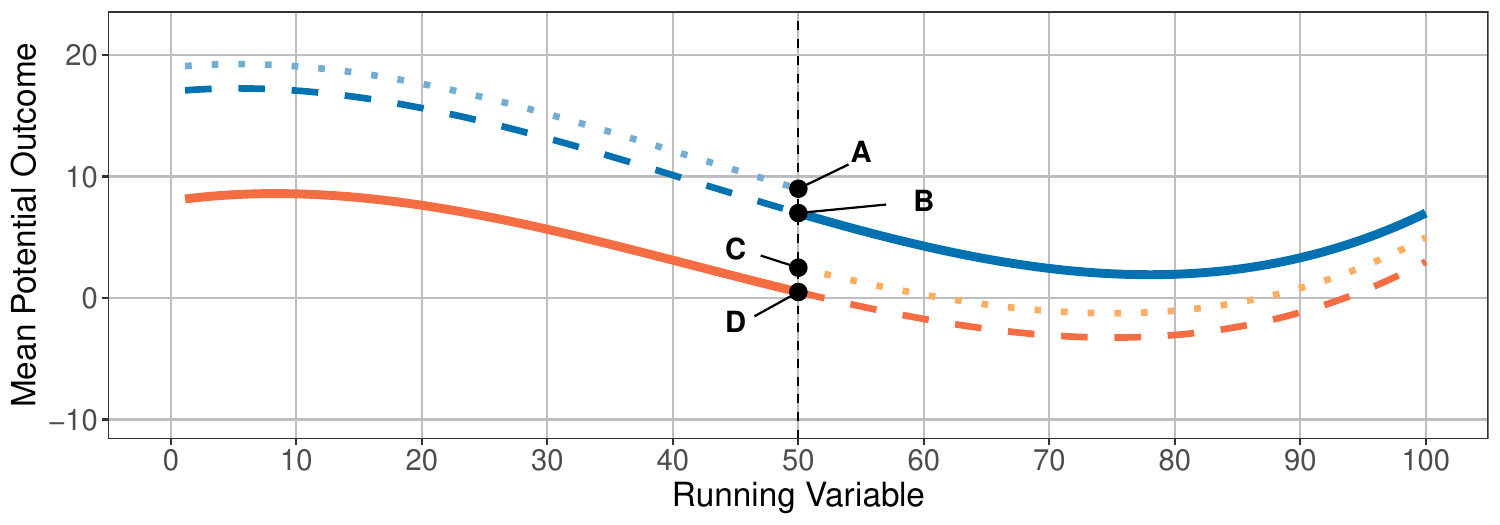}
    \caption*{\footnotesize \textit{Notes:}
    The figure presents an example of conditional expectations of potential outcomes in an RD setting. The blue and orange lines represent treated and untreated potential outcomes, respectively. The solid lines represent observed outcomes, and the dashed and dotted lines represent unobserved potential outcome scenarios where the continuity assumption holds and does not hold, respectively. The letters A--D are used for demonstrating the different estimands and biases. 
    }
\end{figure}

To formalize the impact of wrongfully assuming continuity of the mean potential outcomes at the cutoff, we consider the following assumption, which is considerably weaker than Assumption \ref{A.cont}. 
\begin{assumption}\label{A.discont_y} 
(i) 
The limits $\mu_{1,(+)}$ and $\mu_{1,(-)}$ exist. 
(ii)
The limits $\mu_{0,(+)}$ and $\mu_{0,(-)}$ exist. 
\end{assumption}
\noindent This assumption is mainly technical, and is expected to hold whenever the potential outcomes and the functions $\mu_{w}(r)$ are well-defined around the cutoff.
Under Assumption \ref{A.discont_y}, the difference between the limits of the mean potential outcomes under no treatment and treatment are well-defined for each side of the cutoff. We denote these differences by $\alpha_{0} =\mu_{0,(+)} - \mu_{0,(-)}$ and $\alpha_{1}  =\mu_{1,(+)} - \mu_{1,(-)}$. 
Assumption \ref{A.cont} can be viewed as a special case of Assumption \ref{A.discont_y}, with $\alpha_0=\alpha_1=0$. 

We can now define two new causal estimands of interest which are well-defined under Assumption \ref{A.discont_y}. 
Let $ATT(c)=E[Y(1)-Y(0)|W=1, R=c]$ be the average treatment effect \textit{on the treated} (ATT) local to the cutoff and $ATU(c)=E[Y(1)-Y(0)|W=0, R=c]$ be the average treatment effect \textit{on the untreated} (ATU) local to the cutoff. Under Assumption \ref{A.discont_y}, these causal estimands can be written as $ATT(c) = \mu_{1,(+)} - \mu_{0,(+)}$ and $ATU(c) = \mu_{1,(-)} - \mu_{0,(-)}$.


\subsection{Bias in Sharp RD without Continuity}

The following theorem provides a characterization of the bias in the single time-period sharp RD design under the more general Assumption \ref{A.discont_y}. Let $D^Y = \mu_{(+)}-\mu_{(-)}$ be the discontinuity of the mean observed outcome at the cutoff.  Under Assumption~\ref{A.discont_y}, $D^Y$ can be decomposed into a causal effect and a bias term.

\begin{theorem}
\label{thm:bias}
Under Assumption \ref{A.discont_y}, in a sharp RD design
\begin{align*}
D^Y  &= ATT(c) + \alpha_0, \\ 
D^Y &= ATU(c) + \alpha_1.
\end{align*}
\end{theorem}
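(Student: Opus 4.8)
The plan is to translate the sharp-RD assignment into a pointwise relationship between the observed and potential outcome means on each side of the cutoff, and then to re-center the observed discontinuity $D^Y$ by adding and subtracting the appropriate one-sided limit.

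First, I would use $Y = Y(W)$ together with the law of total expectation given $R = r$ to write $\mu(r) = p(r)\,\bbE[Y(1)\mid W = 1, R = r] + (1 - p(r))\,\bbE[Y(0)\mid W = 0, R = r]$. In a sharp RD, $p_{(+)} = 1$ and $p_{(-)} = 0$: for $r$ just above $c$, $W = 1$ almost surely given $R = r$, so the events $\{W = 1, R = r\}$ and $\{R = r\}$ agree and $\mu(r) = \mu_1(r)$; symmetrically $\mu(r) = \mu_0(r)$ for $r$ just below $c$. (If one insists on reading the sharp design only through the limits of $p(\cdot)$ rather than through a neighborhood, the same conclusion follows because Assumption \ref{A.discont_y} forces $\mu_0$ and $\mu_1$ to be bounded near $c$, so the vanishing-weight terms drop out in the limit.) Passing to one-sided limits --- which exist by Assumption \ref{A.discont_y} --- yields $\mu_{(+)} = \mu_{1,(+)}$ and $\mu_{(-)} = \mu_{0,(-)}$, hence
\[
D^Y = \mu_{(+)} - \mu_{(-)} = \mu_{1,(+)} - \mu_{0,(-)}.
\]

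Second, I would obtain the two decompositions by inserting a common term. Adding and subtracting $\mu_{0,(+)}$,
\[
D^Y = \bigl(\mu_{1,(+)} - \mu_{0,(+)}\bigr) + \bigl(\mu_{0,(+)} - \mu_{0,(-)}\bigr) = ATT(c) + \alpha_0,
\]
using $ATT(c) = \mu_{1,(+)} - \mu_{0,(+)}$ and $\alpha_0 = \mu_{0,(+)} - \mu_{0,(-)}$ from the text; adding and subtracting $\mu_{1,(-)}$ instead,
\[
D^Y = \bigl(\mu_{1,(+)} - \mu_{1,(-)}\bigr) + \bigl(\mu_{1,(-)} - \mu_{0,(-)}\bigr) = \alpha_1 + ATU(c).
\]

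I do not anticipate a genuine obstacle: the entire content sits in the first step, i.e.\ the identification $\mu_{(+)} = \mu_{1,(+)}$ and $\mu_{(-)} = \mu_{0,(-)}$ under sharp assignment, after which the claim is two lines of algebra. The one point requiring care is to keep the one-sided limits $\mu_{1,(\pm)}, \mu_{0,(\pm)}$ distinct from the values $\mu_1(c), \mu_0(c)$ at the cutoff itself, since continuity of the potential-outcome means (Assumption \ref{A.cont}) is precisely what is being relaxed --- this is why $ATT(c)$ and $ATU(c)$ must be read as the one-sided limits $\mu_{1,(+)} - \mu_{0,(+)}$ and $\mu_{1,(-)} - \mu_{0,(-)}$, as already defined in the text.
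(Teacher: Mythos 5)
Your proposal is correct and follows essentially the same route as the paper's proof: establish $\mu_{(+)}=\mu_{1,(+)}$ and $\mu_{(-)}=\mu_{0,(-)}$ from the sharp assignment, then add and subtract $\mu_{0,(+)}$ (respectively $\mu_{1,(-)}$) to obtain the $ATT(c)$ and $ATU(c)$ decompositions. Your first step simply spells out in more detail the identification of the observed limits with the potential-outcome limits, which the paper takes as immediate.
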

\noindent Note that under the assumption that $\alpha_0=0$, the outcome discontinuity $D^Y$ can be interpreted as the $ATT(c)$. An analogous statement about the $ATU(c)$ can be made when researchers believe $\alpha_1=0$. When $\alpha_0=\alpha_1=0$, and hence Assumption \ref{A.cont} holds, $ATE(c)=ATT(c)=ATU(c)$.\footnote{The case where only one of the $\alpha$'s is zero and the other is not, is technically possible, but presumably quite rare. For example, we will have that $\alpha_0=0$ and $\alpha_1 \ne0$ if at the cutoff, a second policy is implemented (in addition to $W$), however this policy only affects the outcome if $W=1$.}

Returning to Figure \ref{fig:example_discon}, the letters depict a hypothetical example for  the true limit points of the potential outcome conditional means at the cutoff $c=50$. 
In this example, the $ATT(c)$ is equal to the difference in the y-coordinates between point B (representing $\mu_{1,(+)}$) and point C ($\mu_{0,(+)}$), while the $ATU(c)$ is the difference between point A ($\mu_{1,(-)}$) and point D ($\mu_{0,(-)}$). Lastly, the observed discontinuity $D^Y$ is the difference between points B and D. In this example, the observed discontinuity is larger than the $ATT(c)$ and smaller than the $ATU(c)$. In terms of the result in Theorem \ref{thm:bias}, these differences between $D^Y$ and $ATT(c)$ or $ATU(c)$ are due to $\alpha_1<0$, since A is above B, and $\alpha_0>0$, since C is above D.

\subsection{Bias in Fuzzy RD without Continuity}
\label{sec:fuzzy}

We now characterize the bias in fuzzy RD designs. Let $D^W=p_{(+)} - p_{(-)}$ be the discontinuity in the observed treatment probability at the cutoff. In fuzzy RD, the $ATE(c)$ is identified by $\frac{D^Y}{D^W}$ under continuity and the additional conditional independence assumption (CIA) between treatment and potential outcomes at the cutoff (\citet{hahn2001identification}).\footnote{Because a local conditional independence assumption might be too restrictive in practice, alternative assumptions have been considered, such as monotonicity, which lead to identification of alternative causal estimands, such as the local average treatment effect of compliers (see \citet{cattaneo2022regression} and references therein).} For clarity, we present here the characterization under CIA, defined formally as $Y(0),Y(1)\indep W\mid R=r$, for $r$ near $c$. The following theorem shows that whenever Assumption \ref{A.cont} does not hold, but Assumption \ref{A.discont_y} does, then $\frac{D^Y}{D^W}$ equals to a causal effect plus a bias term.

\begin{theorem}
\label{thm:bias_fuzzy}
Under Assumption \ref{A.discont_y} and CIA, in a fuzzy RD design
\begin{align*}
\frac{D^Y}{D^W} &= ATT(c) - \frac{\alpha_{0}\left(1- p_{(-)} \right) + \alpha_{1}p_{(-)}}{D^W},\\
\label{eq:rd_sens_fuzzy2}
\frac{D^Y}{D^W} &= ATU(c) - \frac{\alpha_{0}\left(1- p_{(+)} \right) + \alpha_{1} p_{(+)}}{D^W}.
\end{align*}
\end{theorem}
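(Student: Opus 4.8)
The plan is to reduce everything to a single conditioning identity. Under CIA the observed conditional outcome mean is a propensity-weighted mixture of the two potential-outcome means, so the whole proof is: write that mixture, pass to one-sided limits at $c$, and rearrange. Concretely, for $r$ near $c$ write $Y = WY(1) + (1-W)Y(0)$ and condition on $(W,R)$:
\[
\mu(r) = \bbE[Y\mid R=r] = p(r)\,\bbE[Y(1)\mid W=1,R=r] + \big(1-p(r)\big)\,\bbE[Y(0)\mid W=0,R=r].
\]
By CIA, $\bbE[Y(w)\mid W=w,R=r] = \bbE[Y(w)\mid R=r] = \mu_w(r)$, hence $\mu(r) = p(r)\mu_1(r) + (1-p(r))\mu_0(r)$.

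Next I would take $r\to c^{+}$ and $r\to c^{-}$. The limits $p_{(\pm)}$ exist because the design is an RD design and the limits $\mu_{w,(\pm)}$ exist by Assumption \ref{A.discont_y}, so
\[
\mu_{(+)} = p_{(+)}\mu_{1,(+)} + (1-p_{(+)})\mu_{0,(+)}, \qquad
\mu_{(-)} = p_{(-)}\mu_{1,(-)} + (1-p_{(-)})\mu_{0,(-)}.
\]
Substituting $\mu_{w,(-)} = \mu_{w,(+)} - \alpha_w$ into $D^Y = \mu_{(+)} - \mu_{(-)}$ and collecting, the coefficients on $\mu_{1,(+)}$ and $\mu_{0,(+)}$ become $p_{(+)}-p_{(-)} = D^W$ and $-D^W$ respectively, leaving
\[
D^Y = D^W\big(\mu_{1,(+)} - \mu_{0,(+)}\big) + \alpha_0(1-p_{(-)}) + \alpha_1 p_{(-)}.
\]
Since $ATT(c) = \mu_{1,(+)} - \mu_{0,(+)}$ and $D^W \neq 0$ in a fuzzy design, dividing by $D^W$ gives the first identity of Theorem \ref{thm:bias_fuzzy}. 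For the second identity I would rerun the last step substituting $\mu_{w,(+)} = \mu_{w,(-)} + \alpha_w$ instead, so that $ATU(c) = \mu_{1,(-)} - \mu_{0,(-)}$ is exposed and the remainder becomes $\alpha_0(1-p_{(+)}) + \alpha_1 p_{(+)}$; dividing by $D^W$ completes the proof. As a check, specializing to a sharp design ($p_{(+)}=1$, $p_{(-)}=0$, $D^W=1$) collapses both identities to $D^Y = ATT(c)+\alpha_0 = ATU(c)+\alpha_1$, consistent with Theorem \ref{thm:bias}.

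The argument has no real analytic content, so the ``hard part'' is only bookkeeping. Two points deserve care: first, CIA must be invoked exactly on $\bbE[Y(w)\mid W=w,R=r]$ --- without it the decomposition retains the treatment-conditional means, which are generally not point-identified from the RD limits alone; second, one must keep straight which one-sided propensity limit multiplies the $\alpha$'s in each representation --- the ATT form carries $p_{(-)}$ and the ATU form carries $p_{(+)}$, because anchoring the $\alpha$-substitution at the $(+)$ side leaves the $(-)$-side weights attached to the $\alpha$'s and vice versa.
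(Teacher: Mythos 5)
Your proof is correct and follows essentially the same route as the paper's: express the observed conditional mean as the propensity-weighted mixture of $\mu_1(r)$ and $\mu_0(r)$ via CIA, take one-sided limits, anchor the $\alpha$-substitution at the $(+)$ side for the ATT and the $(-)$ side for the ATU, and divide by $D^W$. One caveat: your derivation (like the paper's own proof, and like the sharp-RD specialization to Theorem \ref{thm:bias}) yields $\frac{D^Y}{D^W} = ATT(c) + \frac{\alpha_0(1-p_{(-)})+\alpha_1 p_{(-)}}{D^W}$ with a \emph{plus} sign, whereas the displayed statement of Theorem \ref{thm:bias_fuzzy} carries a minus sign; the statement's sign appears to be a typo, but you should not assert that your formula ``gives the first identity'' as printed without flagging that discrepancy.
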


\noindent A main difference between sharp and fuzzy RD designs when the continuity assumption is violated, is that in the fuzzy setting the bias terms for the $ATT(c)$ and $ATU(c)$ each depends on both $\alpha_0$ and $\alpha_1$. Therefore, even if researchers believe, for example, that $\alpha_0=0$, neither the $ATT(c)$ nor the $ATU(c)$ are identifiable from the data without the additional assumption that $\alpha_1=0$. This property has consequences on identification in the multiple time periods setup, as we show in Section~\ref{sec:fuzzyRDDID}.

\section{Identification with Multiple Time Periods}
\label{sec:RD-DID}
In the previous section, we derived the bias of the RD estimator when the continuity assumption is violated. The results of Theorems \ref{thm:bias} and \ref{thm:bias_fuzzy} direct us to target the $ATT(c)$ or $ATU(c)$, by the means of  identifying of $\alpha_0$ and $\alpha_1$. To this end, we present in this section a general identification framework that utilizes data from multiple periods. Similar to the difference-in-differences (DID) design, we formulate assumptions on how mean potential outcomes are related across these periods. Therefore, we term this identification framework RD-DID. We further refer to \textit{the canonical RD-DID design} as the design with two time periods, where all units are untreated in the first time period and a sharp RD treatment assignment takes place in the second time period. This creates two cohorts: a never-treated cohort (not treated in the RD), and a treated cohort  (treated in the  second period RD). In terms of the notations introduced in Section \ref{sec:setup}, in a canonical RD-DID design $T=2$, $\mathcal{T}_0=\left\{1\right\}$, $\mathcal{T}_1=\emptyset$, and $\mathcal{T}_{\mathrm{RD}}=\{2\}$.

Bringing back the $t$ notation, and following Section \ref{sec:RDbasic}, we define the following period-specific causal estimands. Let $ATE(c,t) = \mu_{1,t,(+)} - \mu_{0,t,(-)}$, $ATT(c,t) = \mu_{1,t,(+)} - \mu_{0,t,(+)}$ and $ATU(c,t) = \mu_{1,t,(-)} - \mu_{0,t,(-)}$ denote the ATE, ATT and ATU local to the cutoff $c$ at time period $t$, respectively.

\subsection{A General Identification Framework}
\label{SubSec:GenIdentFrame}

We begin by generalizing Assumption \ref{A.discont_y} to the setting of multiple periods.

\begin{assumption}
\label{A.discont_y_period} (i)
The limits $\mu_{1,t,(+)}$ and $\mu_{1,t,(-)}$ exist for all $t\in\mathcal{T}_1\cup\mathcal{T}_{\mathrm{RD}}$. (ii) The limits $\mu_{0,t,(+)}$ and $\mu_{0,t,(-)}$ exist for all $t\in\mathcal{T}_0\cup\mathcal{T}_{\mathrm{RD}}$.  
\end{assumption}

\noindent Differently from Assumption \ref{A.discont_y}, Assumption \ref{A.discont_y_period} also assumes the existence of the limits for the periods in either $\mathcal{T}_0$ or $\mathcal{T}_1$. Nevertheless, it is, like Assumption \ref{A.discont_y}, a technical and rather weak assumption. Whenever Assumption \ref{A.discont_y_period}(i) holds, let also $\alpha_{1,t}=\mu_{1,t,(+)}-\mu_{1,t,(-)}$ be the period-specific discontinuity of $\mu_{1,t}$ at the cutoff $c$. Similarly, define $\alpha_{0,t}=\mu_{0,t,(+)}-\mu_{0,t,(-)}$ at each $t\in\mathcal{T}$ Assumption \ref{A.discont_y_period}(ii) holds for.  

Let $t^\star\in \mathcal{T}_{\mathrm{RD}}$ be a time period of interest. Assume the RD assignment follows a sharp RD design, and that Assumption \ref{A.discont_y_period} holds. According to Theorem \ref{thm:bias}, the observed outcome discontinuity in period $t^\star$ is a biased estimator of $ATT(c,t^\star)$ (or of $ATU(c,t^\star)$) with a bias term $\alpha_{0,t^\star}$ ($\alpha_{1,t^\star}$). If data from other periods can be used to identify and estimate $\alpha_{0,t^\star}$ and $\alpha_{1,t^\star}$, those parameters can be used to identify  $ATU(c,t^*)$ and $ATT(c,t^*)$. 
The following assumption formalizes the general assertion that $\alpha_{0,t^\star}$ and $\alpha_{1,t^\star}$ can be learned from data obtained at other periods.
\begin{assumption}
    \label{A.learn.alpha}
    (i) For $t^\star \in \mathcal{T}_{\mathrm{RD}}$, 
    $\alpha_{0,t^\star}=g_0(\{\alpha_{0,t}:t \in \mathcal{T}_0\})$ for some function $g_0$.
    (ii) For $t^\star \in \mathcal{T}_{\mathrm{RD}}$, 
    $\alpha_{1,t^\star}=g_1(\{\alpha_{1,t}:t \in \mathcal{T}_1\})$ for some  function $g_1$.
\end{assumption}
\noindent 
Assumption \ref{A.learn.alpha} asserts that $\alpha_{0,t^\star}$ and $\alpha_{1,t^\star}$ can be written as functions of similar potential outcome discontinuities in periods where no unit is treated or all units are treated, respectively. We give two specific examples of such functions below. 

We are now ready to present our main identification result. Let $D^Y_t=\mu_{t,(+)}-\mu_{t,(-)}$ denote the discontinuity in the mean observed outcome at the cutoff at time $t$.

\begin{theorem}\label{thm:main_id}
Assume the RD design is sharp for all time periods in $\mathcal{T}_{\mathrm{RD}}$, Assumption \ref{A.discont_y_period} holds, there are no carry-over effects, and the running variable does not vary with time. For $t^\star\in\mathcal{T}_{\mathrm{RD}}$,
 \begin{enumerate}[label = (\roman{enumi})]
    \item If Assumption \ref{A.learn.alpha}(i) holds for a function $g_0$ and $\mathcal{T}_0$ is non-empty, then
    \begin{equation*}
      ATT(c,t^\star) =   D^Y_{t^\star} - g_0(\{D^Y_t:t \in \mathcal{T}_0\}).
    \end{equation*}
    \item If Assumption \ref{A.learn.alpha}(ii) holds for a function $g_1$ and  $\mathcal{T}_1$ is non-empty, then
 \begin{equation*}
      ATU(c,t^\star) =  D^Y_{t^\star} - g_1(\{D^Y_t:t \in \mathcal{T}_1\}).
    \end{equation*}
\end{enumerate}
\end{theorem}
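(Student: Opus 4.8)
The plan is to reduce the statement to Theorem \ref{thm:bias} together with one elementary observation: in a period where every unit shares the same treatment status, the observed outcome mean \emph{is} a potential outcome mean, so the observed outcome discontinuity at the cutoff coincides with the corresponding $\alpha$. Concretely, fix $t\in\mathcal{T}_0$. Since all units are untreated in such a period, $W_{i,t}=0$ for every $i$, and the consistency/no-carry-over convention $Y_{i,t}=Y_{i,t}(W_{i,t})$ gives $Y_{i,t}=Y_{i,t}(0)$. Hence $\mu_t(r)=\mu_{0,t}(r)$ for every $r$ in a neighborhood of $c$, and taking one-sided limits, which exist by Assumption \ref{A.discont_y_period}(ii), yields $D^Y_t=\mu_{t,(+)}-\mu_{t,(-)}=\mu_{0,t,(+)}-\mu_{0,t,(-)}=\alpha_{0,t}$. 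The identical argument applied to $t\in\mathcal{T}_1$, now using Assumption \ref{A.discont_y_period}(i), gives $D^Y_t=\alpha_{1,t}$. These two identities are the only place the structure of $\mathcal{T}_0,\mathcal{T}_1$, the absence of carry-over, and the time-invariance of the running variable enter: the last guarantees that the cutoff $c$, and therefore the objects $\alpha_{w,t}$, refer to a common population feature across periods, so that Assumption \ref{A.learn.alpha} is meaningful.

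Next I would invoke Theorem \ref{thm:bias} at $t^\star\in\mathcal{T}_{\mathrm{RD}}$: Assumption \ref{A.discont_y_period} implies Assumption \ref{A.discont_y} holds at $t^\star$, and the design there is sharp, so $D^Y_{t^\star}=ATT(c,t^\star)+\alpha_{0,t^\star}$ and $D^Y_{t^\star}=ATU(c,t^\star)+\alpha_{1,t^\star}$. For part (i), Assumption \ref{A.learn.alpha}(i) gives $\alpha_{0,t^\star}=g_0(\{\alpha_{0,t}:t\in\mathcal{T}_0\})$; substituting the termwise identity $\alpha_{0,t}=D^Y_t$ from the first step (the indexed collections $\{\alpha_{0,t}:t\in\mathcal{T}_0\}$ and $\{D^Y_t:t\in\mathcal{T}_0\}$ coincide, so $g_0$ returns the same value) yields $\alpha_{0,t^\star}=g_0(\{D^Y_t:t\in\mathcal{T}_0\})$, and rearranging the first display from Theorem \ref{thm:bias} gives $ATT(c,t^\star)=D^Y_{t^\star}-g_0(\{D^Y_t:t\in\mathcal{T}_0\})$. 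Part (ii) is symmetric: use Assumption \ref{A.learn.alpha}(ii), the identity $\alpha_{1,t}=D^Y_t$ on $\mathcal{T}_1$, and the second display from Theorem \ref{thm:bias}.

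I do not expect a serious obstacle; the argument is a substitution. The one point that warrants an explicit line is the bookkeeping in the previous paragraph — replacing each $\alpha_{w,t}$ by the numerically equal $D^Y_t$ inside the otherwise arbitrary function $g_w$ is legitimate precisely because the two indexed collections are equal as inputs to $g_w$, not because $g_w$ has any special form. Non-emptiness of $\mathcal{T}_0$ (resp. $\mathcal{T}_1$) is used only to ensure this collection is non-vacuous, so that $g_0$ (resp. $g_1$) is genuinely pinned down by observed discontinuities rather than being a degenerate constant.
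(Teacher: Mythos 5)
Your proposal is correct and follows essentially the same route as the paper's own proof: establish $D^Y_t=\alpha_{0,t}$ on $\mathcal{T}_0$ (resp.\ $D^Y_t=\alpha_{1,t}$ on $\mathcal{T}_1$) from the fact that all units share a treatment status in those periods, substitute into Assumption \ref{A.learn.alpha} to recover $\alpha_{0,t^\star}$ (resp.\ $\alpha_{1,t^\star}$), and rearrange the decomposition of Theorem \ref{thm:bias} at $t^\star$. Your added remark about the legitimacy of the termwise substitution inside the arbitrary function $g_w$ is a small clarity bonus over the paper's version, but the argument is the same.
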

\noindent The intuition behind Theorem \ref{thm:main_id}(i) is as follows. In time periods $t\in\mathcal{T}_0$, observed outcomes from both sides of the cutoff are equal to the untreated potential outcomes, and hence $\alpha_{0,t}$ is identifiable. Under Assumption \ref{A.learn.alpha},  the bias parameter $\alpha_{0,t^\star}$ is obtained by applying $g_0$ to the $\alpha_{0,t}$'s in time periods $t\in\mathcal{T}_0$. By Theorem \ref{A.discont_y}, the  $ATT(c,t^\star)$ is then identified as the sum of  $D^Y_{t^\star}$ and the bias parameter $\alpha_{0,t^\star}$. The intuition behind Theorem \ref{thm:main_id}(ii) is similar. 

Although without the formal framework presented here, it is common in practice to assume the discontinuities $\alpha_{0,t}$ or $\alpha_{1,t}$ are constant across time. The following corollary underpins this approach, and extends it to linear-in-time trends for $\alpha_{0,t}$ or $\alpha_{1,t}$.

\begin{corollary}
\label{coro:id_special}
Assume the conditions stated in Theorem \ref{thm:main_id} hold.
 \begin{enumerate}[label = (\roman{enumi})]
    \item Let $\pi_t$ be a set of weights such that  $\sum_{t \in \mathcal{T}_0} \pi_t=\sum_{t \in \mathcal{T}_1} \pi_t=1$. If $\alpha_{0,t}=\alpha_0$  is constant across $t\in  \{\mathcal{T}_0\cup \{t^\star\}\}$, and $\alpha_{1,t}=\alpha_1$  is constant across $t\in  \{\mathcal{T}_1\cup \{t^\star\}\}$ then
    \begin{equation*}
    \begin{array}{c}
    ATT(c,t^{\star})=D_{t^{\star}}-\sum_{t\in\mathcal{T}_{0}}\pi_{t}D_{t},\\
    ATU(c,t^{\star})=D_{t^{\star}}-\sum_{t\in\mathcal{T}_{1}}\pi_{t}D_{t}.
    \end{array}
    \end{equation*}
    \item If $\alpha_{0,t}$  are linear in  $t\in  \{\mathcal{T}_0\cup \{t^\star\}\}$, and $\alpha_{1,t}$ are linear in $t\in  \{\mathcal{T}_1\cup \{t^\star\}\}$ then for $t_0\in\mathcal{T}_0$ and $t_1\in\mathcal{T}_1$, then for some $m_0,m_1\in\mathbb{R}$,
    \begin{equation*}
    \begin{array}{c}
    \ensuremath{ATT(c,t^{\star})=D_{t^{\star}}-\left(m_{0}\left(t^{\star}-t_0\right)+\alpha_{0,t_0}\right),}\\
    ATU(c,t^{\star})=D_{t^{\star}}-\left(m_{1}\left(t^{\star}-t_1\right)+\alpha_{1,t_1}\right).
    \end{array}
    \end{equation*}
\end{enumerate}
\end{corollary}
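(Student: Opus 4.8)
The plan is to obtain both parts as direct specializations of Theorem \ref{thm:main_id}: since all the hypotheses of that theorem are assumed here, the only work is to exhibit, for each stated trend, a function $g_0$ (resp.\ $g_1$) that satisfies Assumption \ref{A.learn.alpha} and to evaluate the resulting identifying formula. Throughout I would use the observation already recorded in the discussion following Theorem \ref{thm:main_id}: for $t\in\mathcal{T}_0$ every unit is untreated, so $\mu_{t,(\pm)}=\mu_{0,t,(\pm)}$ and hence $D^Y_t=\alpha_{0,t}$; symmetrically $D^Y_t=\alpha_{1,t}$ for $t\in\mathcal{T}_1$. (Existence of all the $\alpha$'s involved is guaranteed by Assumption \ref{A.discont_y_period}, which is among the maintained conditions.)

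For part (i), suppose $\alpha_{0,t}=\alpha_0$ is constant on $\mathcal{T}_0\cup\{t^\star\}$ and fix weights $\{\pi_t\}_{t\in\mathcal{T}_0}$ with $\sum_{t\in\mathcal{T}_0}\pi_t=1$. The affine combination $g_0(\{a_t\}):=\sum_{t\in\mathcal{T}_0}\pi_t a_t$ satisfies $g_0(\{\alpha_{0,t}:t\in\mathcal{T}_0\})=\alpha_0=\alpha_{0,t^\star}$, so Assumption \ref{A.learn.alpha}(i) holds with this $g_0$. Theorem \ref{thm:main_id}(i) then gives $ATT(c,t^\star)=D^Y_{t^\star}-g_0(\{D^Y_t:t\in\mathcal{T}_0\})=D^Y_{t^\star}-\sum_{t\in\mathcal{T}_0}\pi_t D^Y_t$. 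The $ATU$ identity follows verbatim with $\mathcal{T}_1$, $\alpha_1$, and Theorem \ref{thm:main_id}(ii).

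For part (ii), suppose $t\mapsto\alpha_{0,t}$ is affine on $\mathcal{T}_0\cup\{t^\star\}$, i.e.\ $\alpha_{0,t}=m_0 t+b_0$ for constants $m_0,b_0$, and fix $t_0\in\mathcal{T}_0$. Provided $|\mathcal{T}_0|\ge 2$, the slope is recoverable from the data as $m_0=(\alpha_{0,t}-\alpha_{0,t_0})/(t-t_0)$ for any other $t\in\mathcal{T}_0$ — all such quotients agreeing by affinity — so the map $g_0(\{a_t\}):=m_0(\{a_t\})\,(t^\star-t_0)+a_{t_0}$ is a well-defined function of $\{a_t:t\in\mathcal{T}_0\}$ with $g_0(\{\alpha_{0,t}\})=m_0(t^\star-t_0)+\alpha_{0,t_0}=\alpha_{0,t^\star}$, verifying Assumption \ref{A.learn.alpha}(i). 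Theorem \ref{thm:main_id}(i), together with $D^Y_{t_0}=\alpha_{0,t_0}$, then yields $ATT(c,t^\star)=D^Y_{t^\star}-\bigl(m_0(t^\star-t_0)+\alpha_{0,t_0}\bigr)$, and the $ATU$ statement is obtained symmetrically.

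The only genuinely substantive point — hence the step to be careful about — is the linear case: recovering the single slope parameter requires at least two fully-untreated (resp.\ fully-treated) periods, and when there are more than two the affinity assumption is precisely what makes the overdetermined recovery of $m_0$ consistent so that $g_0$ is well-defined. With only one period in $\mathcal{T}_0$ the trend is unidentified and part (ii) is vacuous. Everything else is bookkeeping layered on top of Theorem \ref{thm:main_id}.
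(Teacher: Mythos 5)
Your proof is correct and follows exactly the route the paper intends: the corollary is a direct specialization of Theorem \ref{thm:main_id} obtained by exhibiting $g_0$ (resp.\ $g_1$) as a weighted average in the constant case and as linear extrapolation from two periods in the linear case, which is precisely how the paper's surrounding discussion frames it (the paper gives no separate proof). Your added remark that part (ii) requires $|\mathcal{T}_0|\ge 2$ for the slope to be identified matches the paper's own caveat that $m_0$ and $m_1$ ``can be identified using any two or more time periods.''
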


\noindent The results for constant $\alpha_{w,t}$  hold for any set of weights $\pi_t$. However, as we discuss in Section~\ref{sec:est}, for estimation certain weighting schemes may be preferable. Nevertheless in certain contexts, the constant $\alpha_{w,t}$ assumption might be plausible only for a subset of periods in $\mathcal{T}_0$ or $\mathcal{T}_1$. In these situations, one can simply use these subsets of time periods instead of the whole set of time periods. Note also that Corollary \ref{coro:id_special} states that identification in the canonical RD-DID is equivalent to a difference between two outcome discontinuities (\citet{grembi2016fiscal}).

In some applications, the constant $\alpha_{w,t}$ assumption cannot be justified, and researchers might opt for a linear trend across time in $\alpha_{0,t}$ or $\alpha_{1,t}$, as formally presented Corollary \ref{coro:id_special}(ii). The parameters $m_0$ or $m_1$ can be identified using any two or more time periods in $t\in\mathcal{T}_0$ or $t\in\mathcal{T}_1$, respectively.

For a specific dataset, one can test whether the chosen functional forms of $g_0$ and/or $g_1$ are supported by the data. In Section \ref{sec:app}, we consider an equivalence testing procedure for the constant discontinuities assumption in the application, and show the data does not support this assumption for the considered outcomes. We hence opt for a linear-in-time discontinuities assumption for that outcome. An alternative approach to pre-specifying $g_0$ and/or $g_1$, which may be useful in applications with a large number of time periods, is to avoid strong assumptions on these functions and learn them from the data in a non- or semi-parametric manner.


Similar to the discussion in \citet{callaway2021difference}, we can consider aggregation of $ATT(c,t)$s or $ATU(c,t)$s. Due to the time-invariance of the cutoff there is only one treatment cohort. Hence the aggregations by calendar time and length of exposure are equivalent. Also, the aggregation to an overall treatment effect is equivalent whether by group, time or length of exposure.   

The setup in Theorem \ref{thm:main_id} is quite common in practice. Nevertheless, it is overly simplistic with respect to certain applications. The following subsections consider more complex setups, including fuzzy RD designs, carry-over effects, and  time-varying running variable.

\subsection{Fuzzy RD-DID}
\label{sec:fuzzyRDDID}

As with sharp RD, the identification framework for the causal estimands $ATT(c,t^\star)$ and $ATU(c,t^\star)$ for $t^\star\in\mathcal{T}_{\mathrm{RD}}$ under a fuzzy RD treatment design is motivated by Theorem \ref{thm:bias_fuzzy}. Let $D^W_t = p_{t,(+)} - p_{t,(-)}$ denote the discontinuity in treatment probability at the cutoff at time $t$.

\begin{theorem}\label{thm:main_id_fuzzy}
Assume the RD design is fuzzy for all time periods in $\mathcal{T}_{\mathrm{RD}}$, Assumption \ref{A.discont_y_period} holds, there are no carry-over effects, the running variable does not vary with time, local CIA holds, both $\mathcal{T}_0$ and  $\mathcal{T}_1$ are non-empty and Assumption \ref{A.learn.alpha} holds for functions $g_0$ and $g_1$. For $t^\star\in\mathcal{T}_{\mathrm{RD}}$,
\begin{align*}
  ATT(c,t^{\star})&=\frac{D_{t^{\star}}^{Y}}{D_{t^{\star}}^{W}}-\frac{g_{0}\big(\{ D_{t}^{Y}:t\in\mathcal{T}_{0}\}\big)(1-p_{t^{\star},(-)})+g_{1}\big(\{ D_{t}^{Y}:t\in\mathcal{T}_{1}\} \big)p_{t^{\star},(-)}}{D_{t^{\star}}^{W}},\\[1ex]
  ATU(c,t^{\star})&=\frac{D_{t^{\star}}^{Y}}{D_{t^{\star}}^{W}}-\frac{g_{0}\big(\{ D_{t}^{Y}:t\in\mathcal{T}_{0}\} \big)(1-p_{t^{\star},(+)})+g_{1}\big(\{ D_{t}^{Y}:t\in\mathcal{T}_{1}\} \big)p_{t^{\star},(+)}}{D_{t^{\star}}^{W}}.
\end{align*}
\end{theorem}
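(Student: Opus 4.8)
The plan is to combine the single-period fuzzy-RD decomposition of Theorem \ref{thm:bias_fuzzy}, applied at the period of interest $t^\star$, with the identification of the bias parameters $\alpha_{0,t^\star}$ and $\alpha_{1,t^\star}$ furnished by the fully-untreated and fully-treated periods. The key observation is that, unlike the sharp case, the bias term in the fuzzy setting mixes \emph{both} $\alpha_{0,t^\star}$ and $\alpha_{1,t^\star}$, so we will need to identify each of them separately --- which is precisely why the theorem's hypotheses require both $\mathcal{T}_0$ and $\mathcal{T}_1$ non-empty and both functions $g_0$ and $g_1$.

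\medskip

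First I would recall from Theorem \ref{thm:bias_fuzzy}, instantiated at $t=t^\star$, that under Assumption \ref{A.discont_y_period} and local CIA,
\[
\frac{D^Y_{t^\star}}{D^W_{t^\star}} = ATT(c,t^\star) - \frac{\alpha_{0,t^\star}\big(1-p_{t^\star,(-)}\big)+\alpha_{1,t^\star}\,p_{t^\star,(-)}}{D^W_{t^\star}},
\]
and likewise for $ATU(c,t^\star)$ with $p_{t^\star,(-)}$ replaced by $p_{t^\star,(+)}$. Rearranging gives $ATT(c,t^\star)$ and $ATU(c,t^\star)$ each as $D^Y_{t^\star}/D^W_{t^\star}$ plus a term involving $\alpha_{0,t^\star}$ and $\alpha_{1,t^\star}$. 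So it remains to show $\alpha_{0,t^\star}=g_0(\{D^Y_t:t\in\mathcal{T}_0\})$ and $\alpha_{1,t^\star}=g_1(\{D^Y_t:t\in\mathcal{T}_1\})$.

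\medskip

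Second, I would establish the identification of the $\alpha$'s from the auxiliary periods, exactly as in the proof of Theorem \ref{thm:main_id}. For $t\in\mathcal{T}_0$ every unit is untreated, so $Y_{i,t}=Y_{i,t}(0)$ and hence $\mu_{t,(\pm)}=\mu_{0,t,(\pm)}$, giving $D^Y_t=\mu_{0,t,(+)}-\mu_{0,t,(-)}=\alpha_{0,t}$; here I would note that "no carry-over effects" and "time-invariant running variable" are what guarantee $Y_{i,t}=Y_{i,t}(0)$ and that $\mu_{0,t}$ is the same function across periods so the limits are comparable. Then Assumption \ref{A.learn.alpha}(i) yields $\alpha_{0,t^\star}=g_0(\{\alpha_{0,t}:t\in\mathcal{T}_0\})=g_0(\{D^Y_t:t\in\mathcal{T}_0\})$. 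The mirror-image argument on $\mathcal{T}_1$ (where $Y_{i,t}=Y_{i,t}(1)$, so $D^Y_t=\alpha_{1,t}$) and Assumption \ref{A.learn.alpha}(ii) give $\alpha_{1,t^\star}=g_1(\{D^Y_t:t\in\mathcal{T}_1\})$. Substituting both expressions into the rearranged identities from the first step and collecting terms over $D^W_{t^\star}$ produces the two displayed formulas.

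\medskip

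The main obstacle is not any hard computation but making precise \emph{why} both $g_0$ and $g_1$ must appear even when one only cares about, say, the ATT: this is inherited from Theorem \ref{thm:bias_fuzzy}, where the ATT bias already depends on $\alpha_{1,t^\star}$ through the compliance probability $p_{t^\star,(-)}$. I would flag that this is the fuzzy-specific subtlety already anticipated in the discussion following Theorem \ref{thm:bias_fuzzy}, and that the requirement $D^W_{t^\star}\neq 0$ (built into "the RD design is fuzzy") is needed for the division to be well-defined. A secondary point worth stating carefully is that local CIA is invoked only at $t^\star$ (the auxiliary periods have degenerate treatment assignment, so CIA is vacuous there), which I would mention in one sentence rather than belabor.
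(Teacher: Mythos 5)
Your proposal is correct and follows essentially the same route as the paper, whose own argument for this theorem is exactly the two-step sketch you give: (i) in each $t\in\mathcal{T}_0$ (resp.\ $\mathcal{T}_1$) all units are untreated (resp.\ treated), so $D^Y_t=\alpha_{0,t}$ (resp.\ $\alpha_{1,t}$), and Assumption \ref{A.learn.alpha} then delivers $\alpha_{0,t^\star}$ and $\alpha_{1,t^\star}$; (ii) substitute into the fuzzy bias decomposition of Theorem \ref{thm:bias_fuzzy} at $t^\star$. Your side remarks (why both $g_0$ and $g_1$ are unavoidable, $D^W_{t^\star}\neq 0$, CIA being vacuous in the degenerate periods) are all apt.

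One sign subtlety to fix in the write-up: if you literally rearrange the identity you quote from Theorem \ref{thm:bias_fuzzy} (with the bias entering with a minus sign), you obtain $ATT(c,t^\star)=D^Y_{t^\star}/D^W_{t^\star}+\big(\alpha_{0,t^\star}(1-p_{t^\star,(-)})+\alpha_{1,t^\star}p_{t^\star,(-)}\big)/D^W_{t^\star}$, which has the opposite sign from the displayed target formula. The appendix proof of Theorem \ref{thm:bias_fuzzy} (and its sharp-RD limit $p_{(+)}=1$, $p_{(-)}=0$, which must reduce to $D^Y=ATT(c)+\alpha_0$) shows the decomposition is actually $D^Y/D^W=ATT(c)+(\cdot)/D^W$, i.e.\ the minus in the stated Theorem \ref{thm:bias_fuzzy} is a typo; you should rearrange that corrected version, which then yields the minus signs appearing in the present theorem.
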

\noindent The proof is similar to the proof of Theorem \ref{thm:main_id}. If Assumption \ref{A.discont_y_period} holds, then $\alpha_{0,t_0}$ and $\alpha_{1,t_1}$ are identified in time periods $t_0\in\mathcal{T}_0$ and  $t_1\in\mathcal{T}_1$, respectively. If Assumption \ref{A.learn.alpha} also holds then $\alpha_{0,t^\star}$ and $\alpha_{1,t^\star}$ are identified through the functions $g_0$ and $g_1$. From Theorem \ref{thm:bias_fuzzy}, the $ATU(c,t^\star)$ and the $ATT(c,t^\star)$ are hence identified.

Theorem \ref{thm:main_id_fuzzy} reveals that unlike identification in the sharp RD-DID design (Theorem \ref{thm:main_id}), identification under the fuzzy RD-DID design requires both sets $\mathcal{T}_0$ and $\mathcal{T}_1$ to be non-empty, since both $\alpha_{0,t^\star}$ and $\alpha_{1,t^\star}$ are needed to identify the causal estimands of interest. Put differently, when a fuzzy RD suffers from a violation of continuity (with respect to potential outcomes), both time periods when all units are treated and time periods when all units are untreated are needed for identification.

An important implication of the above result is that in the two-period canonical RD-DID, the target causal estimands are not identifiable. Because in a canonical RD-DID $\mathcal{T}_1$ is empty, it is impossible to identify the bias parameter $\alpha_{1,2}$. Thus, 
stronger assumptions are needed for identification. For example, if one is willing to assume that $\alpha_{0,t}=\alpha_{1,t}=\alpha$ for $t=1,2$, i.e., that the discontinuities for untreated and treated potential outcomes are equal and constant across time, then the single time period $t=1$ suffices for identification, and $ATT(c,2)=ATU(c,2)$.

\subsection{Carry-over Effects}\label{sec:carryover}

Carry-over effects arise when potential outcomes depend on treatment status in prior time periods. To accommodate carry-over effects, potential outcomes are defined as functions of the treatment path, i.e., of the vector of treatment statuses at each time period. In a setup with a time-invariant running variable and cutoff, only two treatment paths are possible, since treatment is the same for all units at time periods $t\in\mathcal{T}_0$ or $t\in \mathcal{T}_1$, and is different between units at time periods $t\in\mathcal{T}_{\mathrm{RD}}$. 
Consequentially, the causal estimands and bias parameters are defined using these two treatment paths. The following example illustrates that carry-over effects introduce a problem for identification in RD-DID designs.

\begin{example}[Carry-over effects]
Consider a a sharp RD setup with three time periods $\mathcal{T}=\{1,2,3\}$, where $\mathcal{T}_0=\{1\}$, $\mathcal{T}_{\mathrm{RD}}=\{2\}$ and $\mathcal{T}_1=\{3\}$. For units above the cutoff $Y_{i,t}=Y_{i,t}(0,1,1)$ and for units below the cutoff $Y_{i,t}=Y_{i,t}(0,0,1)$. Define the causal estimands
\begin{align*}
    ATT(c,2) &= \lim_{r\rightarrow c^+}\mathbb{E}\left[Y_{i,2}(0,1,1)\mid R=r\right] - \lim_{r\rightarrow c^+}\mathbb{E}\left[Y_{i,2}(0,0,1)\mid R=r\right], \\
    ATU(c,2) &= \lim_{r\rightarrow c^-}\mathbb{E}\left[Y_{i,2}(0,1,1)\mid R=r\right] - \lim_{r\rightarrow c^-}\mathbb{E}\left[Y_{i,2}(0,0,1)\mid R=r\right].
\end{align*}
Can we follow a scheme similar to Theorem \ref{thm:main_id} to identify $ATU(c,2)$? In time period 3 we observe $Y_{i,3}(0,1,1)$ for units above the cutoff and $Y_{i,3}(0,0,1)$ for units below the cutoff. The difference in the limits of the means of the observed outcomes in time period 3, after adding and subtracting $\lim_{r\rightarrow c^{-}}\mathbb{E}\left[Y_{3}\left(0,1,1\right)\mid R=r\right]$, can be written as 
\begin{equation}
\label{eq:carry_over}
D_{3}^{Y}=\alpha_{1,3}+\lim_{r\rightarrow c^{-}}\mathbb{E}\left[Y_{3}\left(0,1,1\right)\mid R=r\right]-\lim_{r\rightarrow c^{-}}\mathbb{E}\left[Y_{3}\left(0,0,1\right)\mid R=r\right],
\end{equation}
where $\alpha_{1,3}=\lim_{r\rightarrow c^{+}}\mathbb{E}\left[Y_{3}\left(0,1,1\right)\mid R=r\right]-\lim_{r\rightarrow c^{-}}\mathbb{E}\left[Y_{3}\left(0,1,1\right)\mid R=r\right]$ is the potential outcome discontinuity at period 3 under the treatment path $(0,1,1)$.
As can be seen from \eqref{eq:carry_over}, the observed difference in time period 3, when all units are treated, is composed of two terms. The first is the discontinuity in the mean potential outcomes under treatment path $(0,1,1)$. The second is the effect of period 2 treatment on period 3 outcomes for units just below the cutoff. If the second term is non-zero, $\alpha_{1,3}$ is not identifiable from the data. Since $\alpha_{1,3}$ is not identified, $\alpha_{1,2}$ cannot be identified, regardless on the choice of $g_1$. Hence $ATU(c,2)$ is not identifiable using Theorem \ref{thm:main_id}. 
\end{example}
The example illustrates that in the presence of carry-over effects, identification of the mean potential outcome discontinuities $\alpha_{w,t}$ requires additional assumptions.
We now present such an assumption.
Let $tp_1$ ($tp_0$) denote the treatment path where a unit is treated (untreated) in $\mathcal{T}_{\mathrm{RD}}$. In the above example, $tp_1= (0,1,1)$ and $tp_0=(0,0,1)$.
Let $Y_{i,t}(tp)$ denote the potential outcome of unit $i$ at time $t$ under treatment path $tp$, and let $\mu_{tp,t}(r)=\mathbb{E}[Y_t(tp)\mid R=r]$. Similar to before, let $\mu_{tp,t,(+)}$ and $\mu_{tp,t,(-)}$ represent the limits above and below the cutoff, respectively. The following assumption states that the limits of the potential outcome means are equivalent under both treatment paths, below the cutoff for time periods $t\in\mathcal{T}_1$ and above the cutoff for time periods $t\in\mathcal{T}_0$. 

\begin{assumption}
\label{A.no_antic} 
(i)
$\mu_{tp_{0},t,(-)}=\mu_{tp_{1},t,(-)}$ for time periods $t\in\mathcal{T}_1$.
(ii)
$\mu_{tp_{0},t,(+)}=\mu_{tp_{1},t,(+)}$ for time periods $t\in\mathcal{T}_0$.     
\end{assumption}

\noindent Assumption \ref{A.no_antic}(ii) states that local to the cutoff $c$, the potential outcome means under treatment and non-treatment are equivalent for units above the cutoff, i.e., the treatment group in RD, in time periods where no unit is treated. Assumption \ref{A.no_antic}(i) makes a similar claim for units below the cutoff, i.e., the untreated group in RD, in time periods where all units are treated. 
An alternative interpretation of Assumption \ref{A.no_antic}(i) is that $ATU(c,t)=0$ for $t\in\mathcal{T}_1$.\footnote{Allowing carry-over effects, the ATU and ATT are defined as $ATU(c,t)=\mu_{tp_{1},t,(-)}-\mu_{tp_{0},t,(-)}$ and $ATT(c,t)=\mu_{tp_{1},t,(+)}-\mu_{tp_{0},t,(+)}$.
}
Similarly, Assumption \ref{A.no_antic}(ii) can be interpreted as assuming $ATT(c,t)=0$ for $t\in\mathcal{T}_0$. Finally, Assumption \ref{A.no_antic}(ii) is a local version of the no-anticipation assumption in the DID identification framework (\citet{roth2023s}).
In DID designs, researchers are usually willing to assume Assumption \ref{A.no_antic}(ii) for time periods before treatment takes place. Whether the researcher can make such claims on time periods after treatment takes place depends on the context. 
Returning to the above example, $ATU(c,2)$ is not identifiable if Assumption \ref{A.no_antic}(i) does not hold in $t=3$. However, since in $t=1$ all units are untreated, and assuming differences in future treatment status do not affect current potential outcomes, then it is plausible that Assumption \ref{A.no_antic}(ii) holds. This implies that $D_{1}^{Y}=\alpha_{0,1}$, and hence $ATT(c,2)$ is identifiable under Assumption~\ref{A.learn.alpha}.



For completeness, Appendix \ref{sec:carryover} presents a version of the bias characterization theorem and identification theorem when carry-over effects are present. The results are essentially unchanged, except that the studied causal estimands are those defined  in this section, and that the additional Assumption \ref{A.no_antic} is required for identification.

\subsection{Time-varying Running Variable}
\label{sec:time_vary_r}

We now extend the setup to scenarios where the running variable $R_{i,t}$ may change over time within units. Consider the setup with no carry-over effects. If $R_{i,t}$ may change over time, then units can cross above and below the cutoff in different time periods. These changes in the treatment allocation resulting from changes in the running variable can introduce composition effects. Furthermore, such changes and effects may be the result of manipulation of the running variable. We first consider composition effects in general, and then discuss testing procedures for manipulation using multiple time periods.

\subsubsection{Composition Effects}

The following example illustrates how time-varying running variables can introduce composition effects. By composition effects, we mean changes in expectations of potential outcomes due to changes in the composition of units that the expectation is taken over. 

\begin{example}[Time-varying running variable]
Consider the canonical RD-DID  setup, namely $\mathcal{T}_0=\{1\}$, $\mathcal{T}_{\mathrm{RD}}=\{2\}$,  with a time-varying running variable $R_{t}$.
Let
\begin{equation}
\label{eq:time_vary_r_1}
    \begin{split}
        \Delta Y_{s,t} & =\lim_{r\rightarrow c^{+}}\mathbb{E}\left[Y_{s}\left(0\right)\mid R_{s}=r\right]-\lim_{r\rightarrow c^{-}}\mathbb{E}\left[Y_{s}\left(0\right)\mid R_{s}=r\right]\\
 & -\big(\lim_{r\rightarrow c^{+}}\mathbb{E}\left[Y_{t}\left(0\right)\mid R_{s}=r\right]-\lim_{r\rightarrow c^{-}}\mathbb{E}\left[Y_{t}\left(0\right)\mid R_{s}=r\right]\big)
    \end{split}
\end{equation}
be the change in the discontinuity of the mean untreated potential outcomes between periods $s$ and $t$, conditional on the observed running variable at time $s$.
If the running variable does not vary over time,  then $R_1=R_2$, and therefore $\alpha_{0,2}-\alpha_{0,1}=\Delta Y_{2,1}$. Consequently, Assumption \ref{A.learn.alpha} translates to an assumption on $\Delta Y_{2,1}$. For example, assuming $\alpha_{0,2}=\alpha_{0,1}$ implies $\Delta Y_{2,1} = 0$.
However, if the running variable may change between $t=1$ and $t=2$, $\alpha_{0,2}-\alpha_{0,1}$ will no longer equal solely to $\Delta Y_{2,1}$. Let
\begin{equation}\label{eq:time_vary_r_2}
    \begin{split}
        \Delta R_{s,t} & =\lim_{r\rightarrow c^{+}}\mathbb{E}\left[Y_{t}\left(0\right)\mid R_{s}=r\right]-\lim_{r\rightarrow c^{-}}\mathbb{E}\left[Y_{t}\left(0\right)\mid R_{s}=r\right]\\
 & -\big(\lim_{r\rightarrow c^{+}}\mathbb{E}\left[Y_{t}\left(0\right)\mid R_{t}=r\right]-\lim_{r\rightarrow c^{-}}\mathbb{E}\left[Y_{t}\left(0\right)\mid R_{t}=r\right]\big)
    \end{split}
\end{equation}
be the difference in the discontinuity of mean untreated potential outcomes at time period $t$ at the cutoff, due to a composition change between periods $s$ and $t$. If the running variable changes over time, $\alpha_{0,2}-\alpha_{0,1}=\Delta Y_{2,1} + \Delta R_{2,1}$. Therefore, Assumption \ref{A.learn.alpha} now imposes constraints on $\Delta R_{2,1}$. For example, assuming $\alpha_{0,2}=\alpha_{0,1}$ now implies $\Delta Y_{2,1} + \Delta R_{2,1} = 0$, which in most cases will entail an assumption that $\Delta Y_{2,1} = \Delta R_{2,1} = 0$, since assuming that these terms cancel out seems unlikely in most cases.
\end{example}

The above example illustrates that, in a setup with a time-varying running variable, assumptions placed on how $\alpha_{w,t}$ changes (or does not change) over time, i.e., specifying the functions $g_0$ or $g_1$ in Assumption \ref{A.learn.alpha}, entails an implicit strong assumption on the effect of a composition change on potential outcomes, namely on $\Delta R_{s,t} $. 
This suggests, that in applications with time-varying running variables, researchers should test for composition effects. Using the notation of the above example, this amounts to testing whether $\Delta R_{s,t} $ is equal to zero. This is a viable approach, since in time periods where no units is treated $\Delta_{s,t} R$ is identifiable from the data. Such a testing procedure is shown in the application in Section~\ref{sec:app_switchers}. 

As indicated above, assuming that $\Delta Y_{s,t} = 0$ immediately implicates that $\alpha_{w,t}$ is constant, a viable identification assumption when the running variable does not change over time (Corollary \ref{coro:id_special}). If, however, $\Delta R_{s,t} \ne 0$, the composition around the cutoff varies across periods. In that case, assuming $\Delta Y_{s,t}  = 0$ effectively means that any change in discontinuities between periods is driven by the groups composition above and below the cutoff, and not by changes in the outcome due to the treatment. Therefore, assuming $\alpha_{w,s}-\alpha_{w,t}=\Delta R_{s,t}$ is a more subtle assumption, because we are allowing composition effects but ruling out shifts in how potential outcomes evolve for any given composition.



Due to these potential composition effects, researchers are faced with a trade-off when the running variable may change over time.  Assuming away these composition effects might be an implausible assumption.
A possible alternative is to omit units which cross the cutoff $c$ between time periods due to a change in their running variable. Such units are often termed \textit{treatment switchers}. If the probability of crossing the cutoff is higher for units that are initially close to the cutoff, omitting treatment switchers creates a drop in the density of the running variable at the cutoff, hence mechanically creating a situation similar to ``donut-hole'' RD (\citet{bajari2011regression,noack2023donut}). In RD-DID we expect this to be the common case, since minor changes in the running variable may cause units that are close to the cutoff to cross the cutoff. Such a drop can create problems in both the interpretation and estimation of the estimand. Removing units close to the cutoff affects the composition of units near the cutoff and thus the estimand's population. For estimation, since RD leverages local regression around the cutoff and gives more weight to units closest to the cutoff, omitting treatment switchers introduces bias and increases variance.  

When deciding whether to include or omit treatment switchers, one must consider how units crossing the cutoff differ from units who did not. If in some sense these units are different in their potential outcomes, both omitting (selecting on units who did not cross) and including (composition effects) makes interpretation of the results harder. We illustrate this trade-off in the application in Section \ref{sec:app_switchers}.

\subsubsection{Manipulation}

Changes in the running variable over time may stem from manipulation, where units strategically affect their running variable to influence treatment assignment.
While not the focus of this paper, multiple period data can help to detect manipulation. 
Existing approaches to detect manipulation in a multiple time-period setting extend methods developed for single-period RD designs. One method examines the continuity of covariates and outcomes around the cutoff in pre-treatment periods (\citet{cellini2010value}), complementing traditional tests of covariate continuity (\citet{lee2008randomized}). A second approach evaluates the continuity of the running variable’s density at the cutoff (\citet{mccrary2008manipulation}), both within and across time periods (\citet{grembi2016fiscal}). 

We propose two new methods to detect manipulation with multiple time-period data. First, we propose to examine treatment switchers. Treatment switching may reflect manipulation if such changes correlate with covariates. Therefore, we propose to model treatment switching using pre-treatment covariates. If pre-treatment covariates are predictive of treatment switching, this may suggest the presence of manipulation. 

Our second suggestion is motivated by  the definition of manipulation given by \citet{mccrary2008manipulation}. Let $\widetilde{R}_{i,t}$ be the  value of the running variable had  the RD treatment not taken place. We say that unit $i$ manipulated her running variable at time $t$ if $R_{i,t}\neq \widetilde{R}_{i,t}$. With this definition in mind, we propose to model the running variable non-manipulated  values. First, fit a model for the running variable using observations from the pre-treatment time periods. Then, predict the running variable in the RD period. Under the assumption of no manipulation in the pre-treatment period, the model predictions  can be interpreted as estimates of $\widetilde{R}_{i,t}$. If the model is correctly specified, the difference between observed and predicted running variable values provides a measure of manipulation, which can be further used. For example, for testing whether the mean difference at the cutoff is zero, or, characterizing units with a large difference measure.

\section{DID, RD and RD-DID}
\label{sec:did_comarison}

So far, we have considered identification from an RD perspective. In some empirical studies with multiple time period data and a sharp RD treatment assignment, researchers employ a standard DID analysis, ignoring the RD treatment assignment (for recent reviews on DID see \citet{roth2023s,de2023two}). Such analyses are carried out, for example, in studies of parental leave reforms (e.g., \citet{schonberg2014expansions, lalive2014parental, danzer2018paid}). In this section, we compare between DID, RD and RD-DID when the underlying treatment mechanism is a sharp RD, for simplicity focusing on the case of two time periods. In  Appendix \ref{sec:app_impute} we also analyze a model-based imputation approach (e.g., \citet{borusyak2024revisiting}) that targets the same estimands as RD-DID.

\subsection{Contrasting the Estimands}

Assume the canonical RD-DID setup with $\mathcal{T}_0=\{1\}$ and $\mathcal{T}_{\mathrm{RD}}=\{2\}$. We adopt the classical DID notation that $Y_{i,t}(0,w)=Y_{i,t}(w)$ (\citet{roth2023s}). The target causal estimand in a DID design with two time periods is the global average treatment effect on the treated in the second time period, defined as
$$
ATT(2)=\mathbb{E}\left[Y_2(1) - Y_2(0)\mid W_2 = 1,R\geq c\right].
$$
In RD designs, the target causal estimand is the local average treatment effect in the second time period, 
$$ATE(c,2)=\mathbb{E}[Y_2(1) - Y_2(0)\mid R = c].$$
In RD-DID designs, the target causal estimand is the local average treatment effect on the treated in the second time period,  
$$ATT(c,2)=\mathbb{E}[Y_2(1) - Y_2(0)\mid W_2=1, R = c].$$
Contrasting RD and RD-DID with DID, the main difference is in the global vs. local population of the target estimand. Whether a global or local estimand should be preferred, depends on the subject matter of the respective study and the research question at hand. Intuitively, the global estimand provides information on the entire (treated) population and so might be more relevant in certain cases. Put differently, the local estimand lacks external validity, if we are interested in the effect on the general population (\citet{lee2008randomized}). Nevertheless, some research questions focus on local populations. For example, \citet{goldstein2023learning} studies how crossing a round test score boosts the self-esteem of young and initially low-scoring prospective students and is beneficial in the long-term for their careers. The target estimand for a relevant policy in this context would not include older or initially high scoring prospective students, and hence a local estimand is preferred.

\subsection{Contrasting the Identification Assumptions}\label{sec:did_comarison_iden}

Identification in DID without  covariates relies on a parallel trends assumption, namely that the mean trend in untreated potential outcomes is equal between the treated and untreated groups. Under the considered setup, this assumption can be written as  
\begin{equation*}
\bbE\left[Y_{2}\left(0\right)-Y_{1}\left(0\right)\mid R\geq c\right]=\mathbb{E}\left[Y_{2}\left(0\right)-Y_{1}\left(0\right)\mid R<c\right].
\end{equation*}
Conversely, the constant potential outcome discontinuity version of Assumption \ref{A.learn.alpha}, as presented in Corollary \ref{coro:id_special}(i), implies that $\alpha_{0,2}=\alpha_{0,1}$, which can be rewritten as 
\begin{equation*}
\lim_{r\rightarrow c^{+}}\mathbb{E}\left[Y_{2}\left(0\right)-Y_{1}\left(0\right)\mid R=r\right]=\lim_{r\rightarrow c^{-}}\mathbb{E}\left[Y_{2}\left(0\right)-Y_{1}\left(0\right)\mid R=r\right].
\end{equation*}
Contrasting the DID and RD-DID identification assumptions, they are similar yet different. Similar, in that they make an assumption on the mean trend of untreated potential outcomes. Different, in that one is global in its nature and the other is local. The rationale behind RD designs is to replace indefensible global assumptions (i.e. assumptions on the entire population) with more plausible assumptions on the units that are local to the cutoff. Therefore, the main drawback we see in the parallel trends assumption is that it goes against this rationale and might be too ambitious in many applications due to its global nature.
That is, under a sharp RD treatment assignment, we see the DID identification assumptions as stronger than the RD-DID identification assumptions.

To illustrate this point, Appendix Figure \ref{fig:comp_did} presents an illustrative example of a canonical RD-DID design. We compare two scenarios: first where the association between $Y_t(0)$ and $R$ is constant over time, and second where it is not. The identification assumptions underlying RD-DID hold in both scenarios, as RD-DID is agnostic to changes not at the cutoff. By contrast, the parallel trends is violated in the second scenario.



\section{Estimation and Inference}\label{sec:est}

We propose a new estimation framework for RD with multiple time periods. Our proposal estimates the outcome discontinuity at each time period separately, and then constructs the estimator of the target estimand using the assumed $g_0$ and/or $g_1$ (Assumption~\ref{A.learn.alpha}). 

A common approach in the literature is to pool multiple time periods into a single regression (e.g., \citet{grembi2016fiscal,avdic2018modern}). 
A second, more recent approach implements a single RD regression to differenced outcomes between two time periods  (\citet{picchetti2024difference}).\footnote{However, it is not clear how to generalize the differenced outcomes approach when there are more than two time periods and/or under fuzzy RD design.}
Our approach has three distinct advantages over these estimation frameworks. First, our estimation framework is directly motivated by the identification formula in Theorem \ref{thm:main_id}. Second, our approach allows us to use contemporary RD estimation methods, e.g., bias-correction procedures (\citet{calonico2014robust}). Third, as we show below, variances under different sampling schemes can be studied. While in previous frameworks it is unclear how to estimate the variance under different sampling schemes, in our framework the distinction is made explicit and easy for the researcher to specify.

We focus here on the widely used constant potential outcome discontinuity case under sharp RD, formally presented in Corollary \ref{coro:id_special}(i); we study a simplified version of the linear case (Corollary \ref{coro:id_special}(ii)) in Appendix \ref{sec:app_linear_est_var}.
We present estimators for $ATT(c,t)$;  the derivations for $ATU(c,t)$ are analogous. As is typical when analyzing RD estimators, the entire analysis is conditioned on the values of the running variable $R$.  Detailed calculations are given in Appendix \ref{sec:app_estimation}.

\subsection{Point Estimation}
\label{sec:point_estimation}

For each period $t\in\mathcal{T}$, we propose to estimate the outcome discontinuity $D^Y_t$ by the local linear RD estimator for a single time period (\citet{hahn2001identification,porter2003estimation}). Let $K(\cdot)$ be a kernel function, e.g., uniform or triangular, $h_n$ a bandwidth sequence, and $\boldsymbol{X}_p(r)=\boldsymbol{X}_p(r,c)=\left[1,(r-c)^1,...,(r-c)^p\right]^\prime$ a vector of polynomial terms with respect to the running variable centered at $c$.\footnote{One can use different kernel functions $K(\cdot)$ and bandwidths $h_n$ for the regression above $(+)$ and below $(-)$ the cutoff and at each time period $t$. Current practices for single time periods commonly choose the same bandwidth and kernel for both regressions (\citet{imbens2012optimal}). Therefore, we also analyze an estimator with the same bandwidth and kernel.}
For a given choice of $K(\cdot)$, $h_n$ and polynomial order $p$, the weighted least squares polynomial regression estimators on each side of the cutoff are
\begin{equation}
\label{eq:local_reg}
\begin{split}
\widehat{\beta}_{t,p,(-)}(h_n) & =\underset{\beta\in\mathbb{R}^{(p+1)}}{\arg\min}\bigg\{ \sum_{i:R_{i,t}<c}\Big[K\left(\frac{R_{i,t}-c}{h_n}\right)\times\left(Y_{i,t}-\beta^{\prime}\boldsymbol{X}_{p}(R_{i,t})\right)^{2}\Big]\bigg\}, \\
\widehat{\beta}_{t,p,(+)}(h_n) & =\underset{\beta\in\mathbb{R}^{(p+1)}}{\arg\min}\bigg\{ \sum_{i:R_{i,t}\geq c}\Big[K\left(\frac{R_{i,t}-c}{h_n}\right)\times\left(Y_{i,t}-\beta^{\prime}\boldsymbol{X}_{p}(R_{i,t})\right)^{2}\Big]\bigg\},
\end{split}
\end{equation}
where the minus and plus subscripts denote the regressions below and above the cutoff, respectively. Let $\widehat{\beta}^{(v)}_{t,p,(+)}(h_n)$ and $\widehat{\beta}^{(v)}_{t,p,(-)}(h_n)$ be the $v$-entries in the coefficient estimators above and below the cutoff, respectively. The estimator for the outcome discontinuity at time $t$ is $\widehat{D}^Y_{t,p}(h_n)=\widehat{\beta}^{(0)}_{t,p,(+)}(h_n) - \widehat{\beta}^{(0)}_{t,p,(-)}(h_n)$.
Following Corollary \ref{coro:id_special}, for a set of non-negative weights $\pi_t$, such that $\sum_{t \in t \in \mathcal{T}_0}\pi_t=1$, the ATT is estimated by
\begin{equation}
\label{eq:point_est_mult}
\widehat{ATT}(c,t^\star;h_n,p) = \widehat{D}^Y_{t^\star,p}(h_n) - \sum_{t\in\mathcal{T}_0}\pi_t\widehat{D}^Y_{t,p}(h_n).
\end{equation}
A natural question is how to choose the weights? A simple choice is to set the weight of the time period closest to $t^\star$ to one, and all others to zero. This choice will likely be more robust to deviations from the constant discontinuities assumption, if in practice there is some trend as we move to time periods further away from $t^\star$. On the other hand, this choice will result in a larger standard error compared to other choices, because it does not utilize all available data. A second option is to set all the weights according to a kernel, e.g., the uniform kernel which assigns equal weights. One consideration when choosing the weights is the sample size in each period. If, for example, more data is available from one time period compared to other time periods, one might opt for a weighting scheme that puts higher weight on that time period. If the sample size is equal in all time periods, a uniform kernel can be used. On the other hand, if there is a slight deviation over time from the constant discontinuities assumption, then the triangular kernel might be more robust. Hence the kernel choice may balance efficiency and robustness. 

A third potential approach is to weight time periods according to their similarity  to the time period of interest $t^\star$ with respect to covariates. A similar weighting scheme based on synthetic controls is proposed for DID estimation by \citet{arkhangelsky2021synthetic} and is shown to have good properties. Two more possible weighting schemes are based on bias or variance considerations; we discuss them below. 

\subsection{Estimator Bias and a Bias-Corrected Estimator}

We now derive the first-order asymptotic bias of $\widehat{ATT}(c,t^{\star};h_n,p)$ and propose a bias-corrected (BC) estimator for $ATT(c,t^{\star})$ (\citet{calonico2014robust}). We assume the existence of one-sided $v$ derivatives of the mean outcomes above and below the cutoff, denoted by $\mu_{t,(+)}^{(v)}$ and $\mu_{t,(-)}^{(v)}$, respectively. Combining known results on the asymptotic bias of $\widehat{\beta}_{t,p,(+)}(h_n)$ and $\widehat{\beta}_{t,p,(-)}(h_n)$, presented in Appendix \ref{sec:app_est_sing_bias}, we can write the bias in the multiple time period estimator as
\begin{equation*}
\mathbb{E}\big[\widehat{ATT}(c,t^{\star};h_n,p)\big]  -ATT(c,t^{\star})=\mathtt{B}_{t^\star,p}(h_n) - \sum_{t\in\mathcal{T}_{0}}\pi_{t}\mathtt{B}_{t,p}(h_n) +o_{p}\left(h_n^{2}\right),
\end{equation*}
where $\mathtt{B}_{t,p}(h_n)=\frac{h_n^{2}}{2}\Big[\mu_{t,(+)}^{\left(2\right)}B_{t,(+),0,p,2}(h_n)-\mu_{t,(-)}^{\left(2\right)}B_{t,(-),0,p,2}(h_n)\Big]$ and the terms $B_{t,(+),0,p,2}$ and $B_{t,(-),0,p,2}$ are known expressions defined explicitly in Appendix \ref{sec:app_est_sing_bias}, and $h_n\rightarrow 0$ as $n$ goes to infinity. Returning to choice of $\pi_{t}$, a fourth option is to choose weights that minimizes the first-order bias.

Using the above bias formula, we can construct a BC estimator, computed in two steps. Note that in $\mathtt{B}_{t,p}(h_n)$, the only unknown quantities are $\mu_{t,(+)}^{\left(2\right)}$ and $\mu_{t,(-)}^{\left(2\right)}$. In the first step, $\mu_{t,(+)}^{\left(2\right)}$ and $\mu_{t,(-)}^{\left(2\right)}$ are estimated by $\widehat{\mu}_{t,(+)}^{\left(2\right)}=2\widehat{\beta}^{(2)}_{t,(+),q}\left(b_{n}\right)$ and $\widehat{\mu}_{t,(-)}^{\left(2\right)}=2\widehat{\beta}^{(2)}_{t,(-),q}\left(b_{n}\right)$,  where $\widehat{\beta}_{t,(+),q}(b_n)$ and $\widehat{\beta}_{t,(+),q}(b_n)$ are estimated coefficient vectors, as in \eqref{eq:local_reg}, with bandwidth $b_n$ and polynomial order $q\geq2$.
Denote the obtained estimated bias by $\widehat{\mathtt{B}}_{t,p,q}(h_n,b_n)$.
In the second step, for each period we calculate the BC estimator of the outcome discontinuity by  $\widehat{D}^Y_{\mathtt{BC},t,p,q}(h_n,b_n)=\widehat{D}^Y_{t,p}(h_n)-\widehat{\mathtt{B}}_{t,p,q}(h_n,b_n)$. Finally, the BC estimator of $ATT(c,t^{\star})$ is
\begin{equation}
\label{eq:est_te_bc}
\widehat{ATT}_{\mathtt{BC}}(c,t^{\star};h_n,b_n,p,q)=\widehat{D}^Y_{\mathtt{BC},t^\star,p,q}(h_n,b_n) - \sum_{t\in\mathcal{T}_0}\pi_{t}\widehat{D}_{\mathtt{BC},t,p,q}(h_n,b_n).
\end{equation}

\subsection{Inference}\label{sec:est_inference}

We characterize the variance of the estimators, with and without the bias correction. An important implication of the multiple time period setup is that the data sampling scheme impacts the estimators' variance. We consider three different sampling types: a repeated cross-section (CS), a panel where the running variable is constant across time (PC), and a panel where the running variable varies over time (PV). 

Let $V_{p}^{\mathrm{CS}}\left(t^\star;h_{n}\right)$, $V_{p}^{\mathrm{PC}}\left(t^\star;h_{n}\right)$, and $V_{p}^{\mathrm{PV}}\left(t^\star;h_{n}\right)$ be the variance of $\widehat{ATT}(c,t^\star;h_n,p)$ under CS, PC and PV sampling, respectively. The CS, PC and PV cases differ by the covariance of estimators above and below the cutoff across time periods. In the CS case, there is zero covariance between estimators at different time periods. In the PC case, the covariance is zero between estimators from different sides of the cutoff. In the PV case, all covariances can be non-zero. As we show in Appendix \ref{sec:app_var}, the variance of the estimator in each of these cases can be written as
\begin{align}
V_{p}^{\mathrm{CS}}\left(t^{\star};h_{n}\right) & =V\big(\widehat{D}_{t^{\star},p}^{Y}(h_{n})\big)+\sum_{t\in\mathcal{T}_{0}}\pi_{t}^{2}V\big(\widehat{D}_{t,p}^{Y}(h_{n})\big), \label{eq:Var_ATT_CS} \\
V_{p}^{\mathrm{PC}}\left(t^{\star};h_{n}\right) & =V_{p}^{\mathrm{CS}}\left(t^{\star};h_{n}\right)-2\sum_{t\in\mathcal{T}_{0}}\pi_{t}C_{t^{\star},t,p}^{\mathrm{PC}}\left(h_{n}\right)+\sum_{t\in\mathcal{T}_{0}}\sum_{s\in\mathcal{T}_{0}:s\neq t}\pi_{t}w_{s}C_{s,t,p}^{\mathrm{PC}}\left(h_{n}\right), \label{eq:Var_ATT_PC} \\
V_{p}^{\mathrm{PV}}\left(t^{\star};h_{n}\right) & =V_{p}^{\mathrm{PC}}\left(t^{\star};h_{n}\right)+2\sum_{t\in\mathcal{T}_{0}}\pi_{t}C_{t^{\star},t,p}^{\mathrm{PV}}\left(h_{n}\right)-\sum_{t\in\mathcal{T}_{0}}\sum_{s\in\mathcal{T}_{0}:s\neq t}\pi_{t}w_{s}C_{s,t,p}^{\mathrm{PV}}\left(h_{n}\right), \label{eq:Var_ATT_PV}
\end{align}
where
\begin{align}
C_{s,t,p}^{\mathrm{PC}}\left(h_{n}\right) & =Cov\big(\widehat{\beta}_{s,(+),p}^{(0)}(h_{n}),\widehat{\beta}_{t,(+),p}^{(0)}(h_{n})\big)+Cov\big(\widehat{\beta}_{s,(-),p}^{(0)}(h_{n}),\widehat{\beta}_{t,(-),p}^{(0)}(h_{n})\big), \label{eq:COV_PC}\\
C_{s,t,p}^{\mathrm{PV}}\left(h_{n}\right) & =Cov\big(\widehat{\beta}_{s,(+),p}^{(0)}(h_{n}),\widehat{\beta}_{t,(-),p}^{(0)}(h_{n})\big)+Cov\big(\widehat{\beta}_{s,(-),p}^{(0)}(h_{n}),\widehat{\beta}_{t,(+),p}^{(0)}(h_{n})\big). \label{eq:COV_PV}
\end{align}
The variances of the outcome discontinuity estimators and the covariance of outcome discontinuity estimators between two time periods are given explicitly in Appendix \ref{sec:app_var}. 
Similar to the discussion in the previous subsection, a fifth option to set the weights $\pi_t$ is obtained by minimizing the variance of the estimator under the constraint that the weights sum to one. If the variances $V\big(\widehat{D}_{t,p}^{Y}(h_n)\big)$ are similar across $t$ and the covariances are negligible, the optimal choice is to take equal weights for all time periods.

Following previous results on RD estimation (\citet{hahn2001identification,calonico2014robust}), to test the null hypothesis that $H_0:ATT(c,t^\star)=0$, one can use the statistic $\widehat{ATT}(c,t^\star;h_n,p) / \sqrt{\widehat{V}_{p}(t^\star;h_{n}})$
and similarly calculate confidence intervals $\big[\widehat{ATT}(c,t^\star;h_n,p)\pm z_{1-\alpha/2}\sqrt{\widehat{V}_{p}(t^\star;h_{n})}\big]$. 

As thoroughly studied by (\citet{calonico2014robust}), BC procedures with desired asymptotic confidence interval coverages are possible and often preferred over the standard estimator.  We therefore  derive the variance of the BC estimator in \eqref{eq:est_te_bc}. Since the final form is similar in spirit to the variance of the standard estimator but the derivation requires further technical arguments, we refer to Appendix \ref{sec:app_rbc_var} for the explicit formulas.

\section{Simulations}\label{sec:simulations}

We conduct a Monte-Carlo simulation study to assess the finite-sample performance of the proposed estimation framework and compare the different variance estimators. We consider three data generating processes (DGPs), corresponding to CS, PC and PV as defined in Section \ref{sec:est_inference}. For simplicity, we consider the canonical sharp RD-DID.

We empirically motivate the DGP of the simulation study by taking parameters based on estimated values from the application in Section \ref{sec:app}, as done in \citet{imbens2012optimal,calonico2014robust}. We summarize the main parts of the DGP here, and describe it in more detail in Appendix \ref{sec:app_sim}. 

For each unit $i$, we generate the running variable by $R_i=\left(B_i-0.375\right)\times5000$, with
$B_i$ sampled from Beta(2,4) distribution. Under the CS DGP, a sample of different $n$ units is generated at each time period, and for each unit we draw $R_{i,t}$ from the above distribution. For PC DGP, we draw a sample of $n$ units, $R_{i,1}$ is drawn from the above distribution, and $R_{i,2}=R_{i,1}$. For PV DGP, we draw a sample of $n$ units, $R_{i,1}$ is drawn from the above distribution, and $R_{i,2}=0.97\times R_{i,1}+\tau_i$, with $\tau_i \sim N(153,410)$. In all three DGPs, the outcome model is 
$Y_{i,t}=f_{t}\left(R_{i,t}\right)+1_{\{R_{i,t}\geq0\}}\times D_{t}+\delta_{i}+\delta_{t}+\varepsilon_{i,t}$, where $f_t(.)$ is a fifth order polynomial, $1_{\{.\}}$ is the indicator function, $D_{t}$ is the outcome discontinuity at the cutoff, $\delta_i$ is a unit fixed effect, $\delta_t$ is a time fixed effect, and $\varepsilon_{i,t}$ is an idiosyncratic error for each unit and time period. The values and distributions of these parameters are discussed in Appendix \ref{sec:app_sim}.
We take constant potential outcome discontinuities (as in Corollary \ref{coro:id_special}(i)).
Hence, the true effect is set as the difference of the outcome discontinuities at the cutoff between the two time periods, i.e., $ATT(c,2)=D_2-D_1$. 

\begin{table}[t!]
\centering
\caption{Empirical Coverage by DGP Type and SE Assumption in the Simulations}
\label{tab:sim_results}
    \begin{tabular}[t]{lcccccccc}
\toprule
DGP & $n$ & $h$ & \multicolumn{3}{c}{Conventional} & \multicolumn{3}{c}{Bias Corrected}\\
\cmidrule(lr){4-6}\cmidrule(lr){7-9}
 &  &  & $\widehat{V}^{\mathrm{CS}}$ & $\widehat{V}^{\mathrm{PC}}$ & $\widehat{V}^{\mathrm{PV}}$ & $\widehat{V}^{\mathrm{BC,CS}}$ & $\widehat{V}^{\mathrm{BC,PC}}$ & $\widehat{V}^{\mathrm{BC,PV}}$ \\
\midrule
CS & 500 & 200 & 0.92 & 0.91 & 0.92 & 0.92 & 0.92 & 0.92\\
 & & 600 & 0.93 & 0.93 & 0.94 & 0.95 & 0.95 & 0.95\\
 & & 1000 & 0.86 & 0.86 & 0.86 & 0.91 & 0.91 & 0.91\\
 & 1000 & 200 & 0.95 & 0.95 & 0.95 & 0.95 & 0.95 & 0.95\\
 & & 600 & 0.94 & 0.94 & 0.94 & 0.95 & 0.95 & 0.95\\
 & & 1000 & 0.81 & 0.81 & 0.80 & 0.90 & 0.90 & 0.90\\
PC & 500 & 200 & 1.00 & 0.92 & 0.92 & 1.00 & 0.92 & 0.92\\
 & & 600 & 1.00 & 0.78 & 0.78 & 1.00 & 0.93 & 0.93\\
 & & 1000 & 1.00 & 0.14 & 0.14 & 1.00 & 0.52 & 0.52\\
 & 1000 & 200 & 1.00 & 0.94 & 0.94 & 1.00 & 0.94 & 0.94\\
 & & 600 & 1.00 & 0.65 & 0.65 & 1.00 & 0.91 & 0.91\\
 & & 1000 & 1.00 & 0.01 & 0.01 & 1.00 & 0.23 & 0.23\\
PV & 500 & 200 & 0.93 & 0.92 & 0.93 & 0.94 & 0.94 & 0.94\\
 & & 600 & 0.93 & 0.92 & 0.93 & 0.94 & 0.93 & 0.94\\
 & & 1000 & 0.87 & 0.82 & 0.87 & 0.92 & 0.89 & 0.91\\
 & 1000 & 200 & 0.95 & 0.94 & 0.95 & 0.95 & 0.94 & 0.94\\
 & & 600 & 0.92 & 0.90 & 0.92 & 0.95 & 0.93 & 0.95\\
 & & 1000 & 0.77 & 0.71 & 0.76 & 0.88 & 0.84 & 0.88\\
\bottomrule
\end{tabular}
\caption*{\footnotesize \textit{Notes:}
The table reports empirical coverage - the proportion of confidence intervals containing the true treatment effect - in the simulation study. Results presented by DGP type, sample size ($n$), bandwidth ($h$), and estimation approach. For the conventional columns,  the estimated variance is $\widehat{V}^{\mathrm{CS}}=\widehat{V}^{\mathrm{CS}}_p(t^\star;h)$ with $p=1$ and $t^\star=2$ as in \eqref{eq:Var_ATT_CS}. Similarly $\widehat{V}^{\mathrm{PC}}=\widehat{V}^{\mathrm{PC}}_p(t^\star;h)$ as in \eqref{eq:Var_ATT_PC} and $\widehat{V}^{\mathrm{PV}}=\widehat{V}^{\mathrm{PV}}_p(t^\star;h)$ as in \eqref{eq:Var_ATT_PV}. For the bias-corrected columns, $\widehat{V}^{\mathrm{BC,CS}}=\widehat{V}^{\mathrm{BC,CS}}_{p,q}(t^\star;h,b)$ with $q=2$ and $b=2h$, as defined in \eqref{eq:Var_ATT_BC} in Appendix \ref{sec:app_rbc_var}. Similarly, $\widehat{V}^{\mathrm{BC,PC}}=\widehat{V}^{\mathrm{BC,PC}}_{p,q}(t^\star;h,b)$ and $\widehat{V}^{\mathrm{BC,PV}}=\widehat{V}^{\mathrm{BC,PV}}_{p,q}(t^\star;h,b)$.
}
\end{table}

For each combination of DGP type (CS, PC, and PV) and sample size $n\in\{500,1000\}$, we simulate 1,000 samples. In each sample, we estimate the ATT using both the ''conventional'' estimator given by \eqref{eq:point_est_mult} and the BC estimator given by \eqref{eq:est_te_bc}, with $p=1$ and $q=2$. We estimate the variances assuming each of the three sampling types, for the conventional estimator using \eqref{eq:Var_ATT_CS}-\eqref{eq:Var_ATT_PV} and for the BC estimator using \eqref{eq:Var_ATT_BC} presented in Appendix \ref{sec:app_rbc_var}. We consider $h$ values between 200 to 1,000, and $b=2h$. We calculate confidence intervals using $z=1.96$ (targeting 95\% coverage) as explained in Section \ref{sec:est_inference}. Finally, we calculate the empirical coverage rate as the proportion of confidence intervals containing the true parameter across simulations.

The results are reported in Table \ref{tab:sim_results}. Starting from the CS DGP, the variance estimator corresponding to the correct DGP is $\widehat{V}^{\mathrm{CS}}$. However, the variance estimator assuming the DGP is PC or PV, $\widehat{V}^{\mathrm{PC}}$ or $\widehat{V}^{\mathrm{PV}}$,  have only negligible differences, since the estimated covariances over time periods between estimators are very small. The empirical coverage rates of the BC estimator are closer to the desired 95\% than the coverage of the conventional estimator, similar to results previously obtained for the single time period RD setup (\citet{calonico2014robust}). A similar result on the difference between the empirical coverage rates of the conventional and BC estimators can be observed for the PC and PV DGPs. A stark difference between the variance estimators is found in the PC DGP, where coverage rates are too high when estimated using variance that assumes CS data. This is due to the high covariance of estimators across time, which is correctly subtracted from the standard error estimator under the PC and PV specifications.

\section{Application}
\label{sec:app}

We illustrate our theoretical results by revisiting \citet{grembi2016fiscal}, who examined how fiscal laws affect fiscal outcomes at the municipality level. \citet{grembi2016fiscal} answer this question by analyzing a reform, enacted in Italy in 2001, which required municipalities with a population ($R$) larger than 5,000 ($c$) to abide by an annual deficit growth target ($W$). Therefore, treatment is assigned by a sharp RD. Here, we study the two primary outcomes ($Y$), fiscal gap (total expenditures minus total revenues net of transfers and debt services, in euros) and deficit (total expenditures minus total revenues, in euros). A challenge in employing RD in this application is that numerous regulations at the municipality level are set in motion depending on population size. One of these is the salary of the mayor, which also changes discontinuously at the 5,000 population threshold. Therefore, Assumption \ref{A.cont} is not defensible. Fortunately, data from multiple time periods is available, which calls for the RD-DID design. Below we present the estimand and identification assumptions in this study, and briefly summarize the main results. We then turn to falsification tests in the RD-DID context. We finish with the issue of time-varying running variable (Section \ref{sec:time_vary_r}), and its impact on identification and estimation in this application.

The sample consists of 1,375 municipalities, observed across $1997,...,2004$.\footnote{The sample size is around 1,200 each year, with a minimum of 1213 and a maximum of 1246, since not all municipalities have available data in all years. The sample construction process mimics \citet{grembi2016fiscal}, in that we do not use years 1997 and 1998 for the main analysis, and we drop observations with $R_{i,t}\leq3,500$ and $R_{i,t}\geq7,000$.} These time periods are classified into the following sets: $\mathcal{T}_1=\{1999,2000\}$, $\mathcal{T}_{\mathrm{RD}} =\{2001 , ... , 2004 \}$, and $\mathcal{T}_0=\emptyset$. Therefore, $t_1<t^\star$ for all $t_1\in\mathcal{T}_1$ and $t^\star\in\mathcal{T}_{\mathrm{RD}}$. Our target estimand is the $ATU(c,t^\star)$ for $c=5,000$ and $t^\star\in\mathcal{T}_{\mathrm{RD}}$.\footnote{As discussed in Section \ref{sec:carryover}, when allowing for carry-over effects, the definition of the ATU  depends on the possible treatment paths. Therefore, in this application,  for any  year $t^\star\in \{2001,2002,2003,2004\}$ the ATU is defined as $ATU\left(c,t^{\star}\right)=\lim_{r\rightarrow c^{-}}\mathbb{E}\left[Y_{t^{\star}}\left(1,1,1,1,1,1\right)\mid R=r\right]-\lim_{r\rightarrow c^{-}}\mathbb{E}\left[Y_{t^{\star}}\left(1,1,0,0,0,0\right)\mid R=r\right].$} The ordering of time periods makes  Assumption \ref{A.no_antic} (Section \ref{sec:carryover}) likely to hold. Consequently, we can use $t_1\in\mathcal{T}_1$ to identify $\alpha_{1,t^\star}$. 
To this end, by Theorem \ref{thm:main_id}, to identify $ATU(c,t^\star)$, we need to posit the functional form of $g_1$, the function connecting the unidentifiable bias $\alpha_{1,t^\star}$ with the identifiable $\alpha_{1,t}$ for $t\in\mathcal{T}_1$. 

\begin{table}[t!]
    \centering
    \caption{Estimated ATU by Identification Assumption}
    \label{tab:app_results}
    \begin{tabular}[t]{lcccccccc}
    \toprule
    Outcome & \multicolumn{4}{c}{Conventional} & \multicolumn{4}{c}{Bias Corrected} \\
    \cmidrule(lr){2-5}\cmidrule(lr){6-9}
    & 2001 & 2002 & 2003 & 2004 & 2001 & 2002 & 2003 & 2004\\
    \midrule
    \multicolumn{5}{l}{\textit{Panel A. Constant discontinuities}} \\
    Deficit  \\
    $\quad$ $\widehat{ATU}$ & 1 & -19 & -32 & -23 & 1 & -20 & -35 & -25\\
    $\quad$ SE under CS & (12) & (13) & (19) & (12) & (13) & (14) & (21) & (13)\\
    $\quad$ SE under PC & {}[11] & {}[12] & {}[20] & {}[13] & {}[12] & {}[14] & {}[22] & {}[14]\\
    $\quad$ SE under PV & \{11\} & \{12\} & \{20\} & \{13\} & \{12\} & \{14\} & \{22\} & \{14\}\\
    Fiscal Gap \\
    $\quad$ $\widehat{ATU}$& -80 & -107 & -120 & -103 & -89 & -117 & -133 & -113\\
     $\quad$ SE under CS & (33) & (34) & (35) & (33) & (37) & (38) & (39) & (37)\\
     $\quad$ SE under PC & {}[35] & {}[36] & {}[36] & {}[34] & {}[39] & {}[40] & {}[40] & {}[39]\\
     $\quad$ SE under PV & \{36\} & \{37\} & \{37\} & \{36\} & \{40\} & \{41\} & \{41\} & \{40\}\\
    \midrule
    \multicolumn{5}{l}{\textit{Panel B. Linear-in-time discontinuities}} \\
Deficit \\
$\quad$ $\widehat{ATU}$ & -31 & -74 & -108 & -120 & -35 & -80 & -119 & -132\\
 $\quad$ SE under CS & (21) & (30) & (42) & (50) & (23) & (33) & (47) & (55)\\
 $\quad$ SE under PC & {}[19] & {}[29] & {}[45] & {}[52] & {}[22] & {}[32] & {}[50] & {}[58]\\
 $\quad$ SE under PV & \{19\} & \{29\} & \{45\} & \{52\} & \{21\} & \{32\} & \{50\} & \{58\}\\
Fiscal Gap \\
$\quad$ $\widehat{ATU}$ & -126 & -184 & -228 & -241 & -139 & -201 & -250 & -264\\
 $\quad$ SE under CS & (60) & (89) & (120) & (151) & (66) & (98) & (132) & (166)\\
 $\quad$ SE under PC & {}[41] & {}[49] & {}[62] & {}[72] & {}[46] & {}[55] & {}[69] & {}[81]\\
 $\quad$ SE under PV & \{43\} & \{52\} & \{64\} & \{74\} & \{48\} & \{58\} & \{71\} & \{83\}\\
    \bottomrule
    \end{tabular}
    \caption*{\footnotesize \textit{Notes:}
    The table presents estimation results for the application. Panel A reports ATU conventional and bias-corrected estimates assuming constant potential outcome discontinuities across time periods. Standard errors estimated under CS, PC and PV are reported in round, square and curly parentheses, respectively. The estimation uses equal weights for 1999 and 2000, bandwidths $h=600$ and $b=1,200$, polynomial orders $p=1$ and $q=2$, and a triangular kernel. Panel B reports analogous estimates for the ATU, assuming potential outcome discontinuities evolve linearly over time. Units of outcomes are euros.
    }
\end{table}

Appendix Figure \ref{fig:grembi_es} presents the estimated yearly outcome discontinuities for the fiscal gap and the deficit. 
It presents visual evidence that fiscal rules lower fiscal gaps of municipalities with population size around 5,000. Table \ref{tab:app_results} panel A reports the estimated $ATU(5000,t^\star)$, for $t^\star \in \mathcal{T}_{\mathrm{RD}}$, assuming constant potential outcome discontinuities (i.e., constant $\alpha_{1,t}$). Estimates are presented both for the conventional estimator \eqref{eq:point_est_mult} and the BC estimator \eqref{eq:est_te_bc}. Standard errors under CS, PV and PV are calculated according to \eqref{eq:Var_ATT_CS}-\eqref{eq:Var_ATT_PV} for the conventional estimator and according to \eqref{eq:Var_ATT_BC} for the BC estimator, noting that in the application the data is of type PV.
The estimated bias and variance were approximately the same in 1999 and 2000 so we took the weights $w_{1999}=w_{2000}=0.5$. Similar to the results from \citet{grembi2016fiscal}, four years post implementation, the conventional (BC) estimated average treatment effect on the untreated is -23 (-25) with SE of 13 (14) under PV sampling for deficit, and -103 (-113) with SE of 36 (40) under PV sampling for fiscal gap, suggesting that fiscal rules improve fiscal outcomes at the municipality level.

\subsection{Falsification Tests}

We now study the plausibility of the constant discontinuity assumption and consider the alternative linear-in-time discontinuities.

For the deficit outcome, the estimated discontinuity $D^{Y}_{t}$ increases from 3 in 1999 to 23.2 in 2000 (Table \ref{tab:app_results}A), an increase of 673\%. In comparison, for the fiscal gap we observe an increase from 63.4 to 88.9, an increase of 40\%. If the discontinuity in either outcome that is unrelated to the fiscal law policy increases over time, assuming a constant $g_1$ will result in underestimating (in absolute terms) the ATU.
A formal testing procedure can help assessing the constant $g_1$ assumption. Denote $\Delta = \alpha_{1,2000} - \alpha_{1,1999}$. The null hypothesis of constant $g_1$ is $H_{0}:\Delta = 0$. The calculated t-statistic is $1.46$, which does not provide sufficient evidence to reject $H_{0}$ at the 10\% significance level. 
The similar time-constant discontinuity hypothesis for the fiscal gap outcome is also not rejected in the 10\% level, with a t-statistic of $0.62$.

As recently discussed in other designs (\citet{hartman2018equivalence,hartman2021equivalence,bilinski2018nothing}) not rejecting such falsification tests (here, tests for $H_{0}$) does not provide evidence for the assumption in question (here, the time-constant discontinuity). These tests control the type I error -- the error of falsely stating the constant discontinuity assumption does not hold -- while we would like to control the type II error -- the error of falsely not rejecting the constant discontinuity assumption. To alleviate these concerns, we introduce an equivalence testing procedure,   which tests the hypothesis $H_{\delta,0}:|\Delta|>\delta$ for a specified non-negative value $\delta$. We test $H_{\delta,0}$ using two one sided t-tests (TOST) (\citet{hartman2018equivalence}). Rejecting $H_{\delta,0}$ means there is evidence in the data that $-\delta\leq \Delta \leq \delta$, namely that the identification assumption of constant discontinuity is approximately (up to $\delta$) correct. 
We consider the following equivalence testing procedure (\citet{hartman2018equivalence}). We test $H_{\delta,0}$ for $\delta=0.36\times\sigma$, where $\sigma$ is the standard deviation of the outcome, using two one sided t-tests (TOST) each at $\alpha$ significance level.\footnote{The estimated standard deviation of deficit and fiscal gap in 1999 and 2000 ranges from 38.6 to 46.7. For comparison, we also study what equivalence does the data support, by calculating the minimal $\delta$ that will reject $H_{\delta,0}$. For deficit the minimal $\delta$ is 43.1, and for fiscal gap the minimal delta is 93.8. A similar procedure is proposed to suggest at equivalence of pre-trends in a DID setting by \citet{liu2024practical}.} If both TOST are rejected, we conclude that the data supports equivalence of $\delta$ at $1-2\alpha$ significance levels.
We find that for 10\% significance level $H_{\delta,0}$ is not rejected for $\delta=0.36\sigma$ for both the deficit and fiscal gap outcomes. Therefore, using the aforementioned equivalence testing procedure, the $\alpha_{1,t}$ values are not equivalent between 1999 and 2000 for both deficit and fiscal gap using 10\% significance level.

As an alternative to time-constant discontinuities, 
we use a simple linear approximation, as discussed in Corollary \ref{coro:id_special}(ii). The formulation of the conventional and BC estimators, and their variances, for the linear-in-time assumption is presented in Appendix \ref{sec:app_linear_est_var}.\footnote{Due to the data limitations of the application, with only two periods observed in $\mathcal{T}_1$, the linear-in-time discontinuity assumption is not testable, and the slope is calculated using only two data points. Hence we see these results as suggestive, and present them for illustration purposes. If more time periods were available, a linear model could be fit to the estimated outcome discontinuities of each period in $\mathcal{T}_1$.} Panel B of Table \ref{tab:app_results} reports the results of this analysis. The estimated treatment effects under linear-in-time discontinuity are stronger (more negative). For example, the estimated $ATU(5000, 2004)$ for deficit became about 400\% larger in absolute terms. 

\subsection{Composition Effects, Treatment Switchers and Density Tests}\label{sec:app_switchers}

As discussed in Section \ref{sec:time_vary_r}, composition effects may impact the analysis when the running variable is time varying. Consider the target estimand $ATU(5000,2004)$. The parameter $\alpha_{1,1999}$ is identifiable from the 1999 data, and assuming constant $\alpha_{1,t}$, such that $\alpha_{1,2004}=\alpha_{1,1999}$, identification of the ATU follows from Corollary \ref{coro:id_special}. Following Section \ref{sec:time_vary_r} and Equations \eqref{eq:time_vary_r_1}--\eqref{eq:time_vary_r_2}, the constant $\alpha_{1,t}$ assumption can be re-written as $\Delta Y_{2004,1999}+\Delta R_{2004,1999}=0$.

Such an assumption is not easily defensible if either $\Delta Y_{2004,1999}$ or $\Delta R_{2004,1999}$ are non-zero, since this implies that they cancel out. A more plausible assumption is that $\Delta Y_{2004,1999}=\Delta R_{2004,1999}=0$. Note that $\Delta Y_{2004,1999}$ is unobserved in the data, since $\lim_{r\rightarrow c^{-}}\mathbb{E}[Y_{2004}(1)\mid R_{2004}=r]$ is unobserved. However, $\Delta R_{2004,1999}$ is identifiable from the data, since in $t=1999$ for all units $Y_{i,t}=Y_{i,t}(1)$. We estimate $\Delta R_{2004,1999}$ as follows. Let $D_{s,t}=\lim_{r\rightarrow c^{+}}\mathbb{E}\left[Y_{t}\mid R_{s}=r\right]-\lim_{r\rightarrow c^{-}}\mathbb{E}\left[Y_{t}\mid R_{s}=r\right]$. 
We estimate $D_{2004,1999}$ and $D_{1999,1991}$ using local linear regressions, presented in Section \ref{sec:est}, denoted $\widehat{D}_{2004,1999,p}(h_n)$ and $\widehat{D}_{1999,1991,p}(h_n)$. The estimator for the composition effect is $\widehat{\Delta} R_{2004,1999,p}(h_n)=\widehat{D}_{2004,1999,p}(h_n)-\widehat{D}_{1999,1991,p}(h_n)$. If $\widehat{\Delta} R_{1999,2004,p}(h_n)$ is non-zero and statistically significant this is suggestive for composition effects. Replacing 1999 as the baseline year with 2000, the composition effect is $\Delta R_{2000,2004}$ and its estimator is $\widehat{\Delta} R_{2004,2000,p}(h_n)$. Since the estimator is a difference between two single time period outcome discontinuities, it is equivalent to the proposed estimation approach under constant potential outcome discontinuities, and hence estimation and variance estimation can be carried out by our conventional and BC procedures. 

\begin{table}[t!]
    \centering
    \caption{Composition Effects}
    \label{tab:app_comp_eff}
    \begin{tabular}[t]{lcccc}
    \toprule
    Outcome & \multicolumn{2}{c}{Conventional} & \multicolumn{2}{c}{Bias Corrected} \\
    \cmidrule(lr){2-3}\cmidrule(lr){4-5}
    & 1999 & 2000 & 1999 & 2000\\
    \midrule
    Deficit \\
    $\quad$ $\widehat{\Delta}R$& -4 & -21 & -4 & -24\\
    $\quad$ SE under CS & (12) & (20) & (13) & (22)\\
    $\quad$ SE under PC & {}[12] & {}[20] & {}[13] & {}[22]\\
    $\quad$ SE under PV & \{12\} & \{20\} & \{13\} & \{22\}\\
    Fiscal Gap \\
    $\quad$ $\widehat{\Delta}R$ & -65 & -93 & -74 & -104\\
    $\quad$ SE under CS & (36) & (39) & (40) & (43)\\
    $\quad$ SE under PC & {}[35] & {}[38] & {}[39] & {}[43]\\
    $\quad$ SE under PV & \{36\} & \{39\} & \{40\} & \{44\}\\
    \bottomrule
    \end{tabular}
    \caption*{\footnotesize \textit{Notes:}
    The table presents estimated composition effects for the application. Standard errors estimated under CS, PC and PV are reported in round, square and curly parentheses, respectively. The 1999 and 2000 columns represent the baseline year for calculating the composition effect. For estimation we use bandwidths $h=600$ and $b=1,200$, polynomial orders $p=1$ and $q=2$, and a triangular kernel. Units of outcomes are euros. 
    }
\end{table}

Table \ref{tab:app_comp_eff} reports point estimates and standard errors of estimated composition effects, for both 1999 and 2000 as baseline years.\footnote{In the considered application, since the observed $R_{i,t}$ does not change between $t^\star\in\mathcal{T}_{\mathrm{RD}}$, the estimated composition effects are the same for all target periods. That is, repeating the analysis for $t^\star\in\{2001,...,2004\}$ will produce equivalent $\widehat{\Delta}R$.} For deficit, focusing on the conventional estimator, taking 1999 as the baseline year, the composition effects are of small magnitude, $-4$ (SE of 12). However, using 2000 the estimated composition effects are larger, $-21$ (SE of 20). Compared to the estimated $\widehat{ATU}(c,2004)=-23$ (Table \ref{tab:app_results}), the estimated composition effects are non-negligible in magnitude, although they are not significant at the 5\% level. For fiscal gap we estimate $-65$ (SE of 36) using 1999 and $-93$ (SE of 39) using 2000. Again, compared to $\widehat{ATU}(c,2004)=-103$, the magnitude of the composition effects is non-negligible. Also, in 2000, the estimate is statistically different from zero. These results suggest that composition effects pose a problem for identification of $\alpha_{1,t^\star}$ for $t^\star\in\mathcal{T}_{\mathrm{RD}}$, and hence of $ATU(c,t^\star)$, in this context. To summarize, we found evidence of composition effects that are not small in magnitude but mostly not significant. And, if we assume composition effects enter the identification assumption, e.g., $\alpha_{1,t^\star}-\alpha_{1,2000}=\Delta R_{t^\star,2000}$, the estimated ATUs become much smaller (in absolute terms).

Given the above results, researchers might be tempted to drop treatment switchers from the sample. 
In our application, 137 (11\%) municipalities changed their treatment status. The running variable, population size, is measured using a census prior 2001 and a census post 2001, and hence for each municipality there is one observed value before the reform, and one observed value after the reform. Figure \ref{fig:grembi_densities} visualizes the distribution of the population size across municipalities in each census. Comparing the distributions before (in blue) and after (orange) removal of treatment switchers, the density at each census after omitting treatment switchers displays a drop around the cutoff. Hence, omitting treatment switchers may cause the researcher to interpret such evidence as manipulation, as it will plausibly not pass a Mc'Crary test (\citet{mccrary2008manipulation}). Furthermore, omitting treatment switchers produces a type of analysis similar to ``donut-hole'' RD, due to the drop in the density of the running variable around the cutoff.

\begin{figure}[t!]
    \centering
    \caption{Running Variable Density and Treatment Switching}\label{fig:grembi_treatment_switchers}
    \begin{subfigure}{\textwidth}
      \centering
      \caption{Running Variable Density When Omitting Treatment Switchers}
      \includegraphics[width=0.9\textwidth]{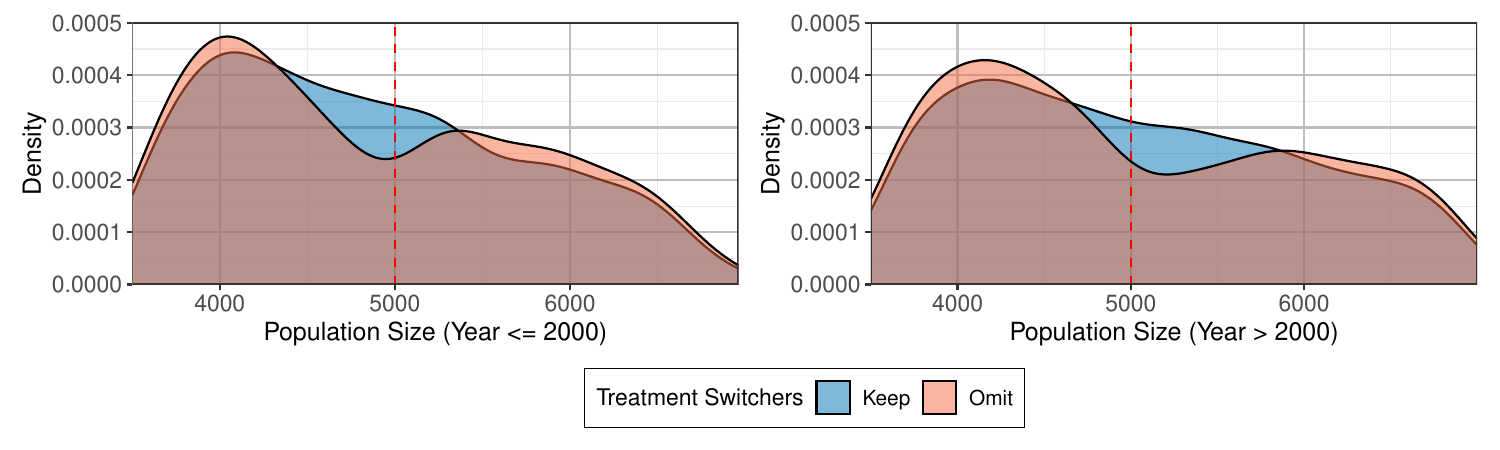}
      \label{fig:grembi_densities}
    \end{subfigure}
    \medskip
    \begin{subfigure}{\textwidth}
      \centering
      \caption{Treatment Switchers by Population Size}
      \includegraphics[width=0.9\textwidth]{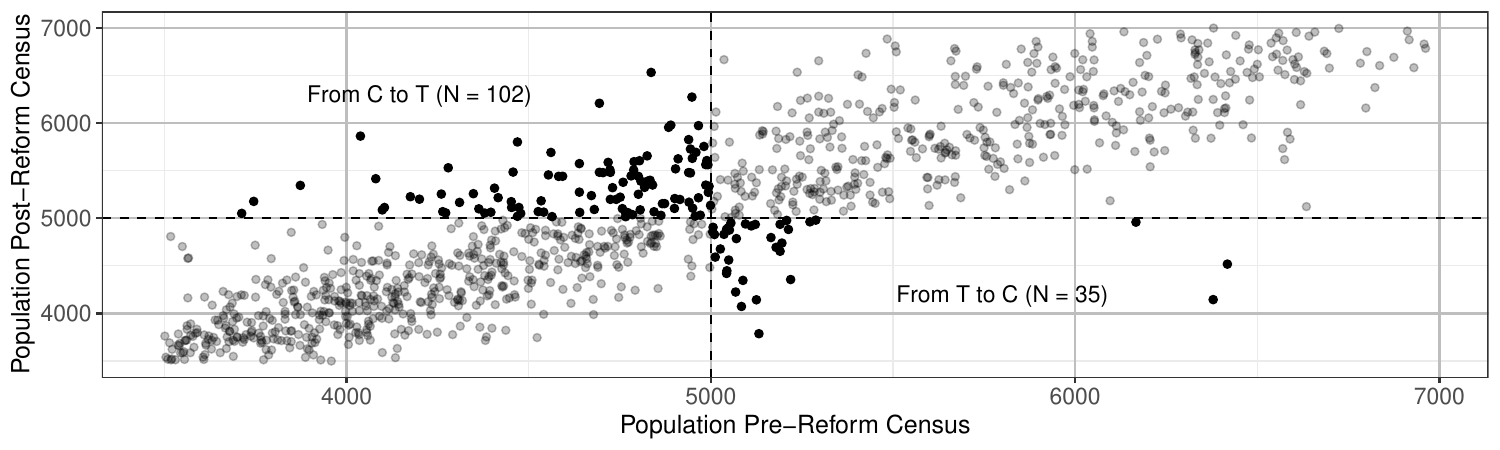}
    \label{fig:grembi_switchers}
    \end{subfigure}
    \caption*{\footnotesize \textit{Notes:}
    Panel (A) shows densities of municipalities by their population, using the population recorded in the census before the reform (left) and using the value of population in the census after the reform (right). Blue represents the density without omitting treatment switchers, and orange represents the density if treatment switchers are omitted. Panel (B) shows changes in population size of each municipality. Transparent points are municipalities that did not switch treatment status, and non-transparent points are municipalities that switched treatment status. The horizontal (x) axis represents the value of the running variable prior treatment, the vertical (y) axis represents the value of the running variable post treatment and each dot is a single municipality.
    }
\end{figure}

The drop in the distribution before the reform (left panel of Figure \ref{fig:grembi_densities}) appears slightly to the left of the cutoff, whereas the dip after the reform (right panel of Figure \ref{fig:grembi_densities}) appears slightly to the right of the cutoff. This pattern suggests that most treatment-switching municipalities  experienced population growth. Figure \ref{fig:grembi_switchers} shows that most of the 137 municipalities that switched treatment status are initially between 4,500 and 5,500 population size, with some outliers below the 4,500 mark. Out of the 137 treatment-switching municipalities, 102 municipalities moved from below the cutoff to above the cutoff, while only 35 municipalities moved in the opposite direction, confirming the above suspicion.

\section{Conclusion}
\label{sec:conc}

In this paper, we studied identification and estimation of designs combining RD without the continuity assumption, with multiple time periods, termed RD-DID. 
We formulated the bias in RD studies when continuity does not hold, developed a general identification framework for RD-DID and discussed several key extensions. We then compared the proposed RD-DID framework with DID, and derived estimators for the causal estimands and variances of the estimators under several sampling schemes. Finally,  we studied the finite-sample performance of the estimation framework in a simulation study, and  illustrated the utility of the identification and estimation approach in an application.

Our framework can be applied when the data can be classified into three sets analogous to $\mathcal{T}_0$, $\mathcal{T}_{\mathrm{RD}}$ and $\mathcal{T}_{1}$, even when these sets are not defined based on time periods. For example, one can consider comparing between a hospital where units are treated according to some sharp RD assignment and a hospital where the RD is not implemented, and no unit, or all units,  are treated. 

Furthermore, while our formulation focused on the continuity approach to RD, a second common approach for identification in RD designs is the local randomization framework (\citet{lee2010regression}). Extending our framework to the local randomization approach is left for future research.

Our work assumed the cutoff is fixed across time periods. Future research may consider time-varying cutoffs, and how they might be used to identify causal parameters. For example, multiple cutoffs might allow identification of treatment effects at different points across the support of the running variable, as discussed in \citet{cattaneo2021extrapolating}. In addition, we briefly discussed the possible use of model-based imputation combined with the multiple time-period data to identify treatment effects which are far from the cutoff. Although promising, a more rigorous study is warranted to understand the implications of such assumptions on identification and estimation. 

This study has shown the benefits of incorporating multiple time period data into an RD design. We believe that this paper will equip researchers with a principled framework on how to approach RD designs when the continuity assumption is violated. 

\printbibliography

\appendix

\renewcommand{\thefigure}{\Alph{section}\arabic{figure}}
\renewcommand{\thetable}{\Alph{section}\arabic{table}}
\renewcommand{\theequation}{\Alph{section}\arabic{equation}}
\renewcommand{\thealgorithm}{\Alph{section}\arabic{algorithm}}
\setcounter{figure}{0}
\setcounter{table}{0}
\setcounter{equation}{0}
\setcounter{algorithm}{0}

\gdef\thesection{\Alph{section}}

\section{Identification and Bias Proofs}\label{sec:app_proofs}


\begin{proof}[Proof of Theorem \ref{thm:bias}]
 Because the design is a sharp RD, by subtracting the limit above the cutoff of the expectation of the observed outcomes $Y_i$ from the limit below the cutoff of the expectation of $Y_i$, we get
$D^Y=\mu_{\left(+\right)}-\mu_{\left(-\right)}=\mu_{1,\left(+\right)}-\mu_{0,\left(-\right)}$. Next, by adding and subtracting the limit of the expectation of $Y_i(0)$ above the cutoff  we obtain $\mu_{(+)}-\mu_{(-)}=ATT(c)+\alpha_{0}$.
Similarly, by adding and subtracting $\mu_{1,(-)}$  we obtain $\mu_{\left(+\right)}-\mu_{\left(-\right)} = ATU(c) + \alpha_{1}$.  
\end{proof}


\begin{proof}[Proof of Theorem \ref{thm:bias_fuzzy}]
    
The observed outcome is $Y_i=Y_i(0) + W_i(Y_i(1) -  Y_i(0))$, and hence its expectation, conditionally on $R=r$ is equal to $\mu\left(r\right) = \mu_{0}(r) + \mathbb{E}[W(Y(1)-Y(0))\mid R=r]$. Taking limits above the cutoff we obtain 
$$
\mu_{(+)} = \mu_{0,(+)}+\lim_{r\rightarrow c^{+}}\mathbb{E}[W(Y(1)-Y(0))\mid R=r].
$$
By local CIA, we have
\begin{align*}
\lim_{r\rightarrow c^{+}}\mathbb{E}[W(Y(1)-Y(0))\mid R=r] & =\lim_{r\rightarrow c^{+}}\mathbb{E}[W\mid R=r]\mathbb{E}[Y(1)-Y(0)\mid R=r],
\end{align*}
and therefore, we can rewrite $\mu_{(+)}$ as $\mu_{(+)} =\mu_{0,(+)}+p_{(+)}(\mu_{1,(+)}-\mu_{0,(+)})$. Similar arguments for below the cutoff imply $\mu_{(-)} =\mu_{0,(-)}+p_{(-)}(\mu_{1,(-)}-\mu_{0,(-)})$. 
Subtracting $\mu_{(-)}$ from $\mu_{(+)}$, we obtain by the definition of $\alpha_{0}$
\begin{equation}\label{eq:proof_fuzzy1}
\mu_{(+)}-\mu_{(-)} =\alpha_{0}+p_{(+)}(\mu_{1,(+)}-\mu_{0,(+)})-p_{(-)}(\mu_{1,(-)}-\mu_{0,(-)}).
\end{equation}
Substituting $\mu_{w,(-)}=\mu_{w,(+)}-\alpha_{w}$
for $w=0,1$, we get
\begin{align*}
\mu_{(+)}-\mu_{(-)} & =\alpha_{0}+p_{(+)}(\mu_{1,(+)}-\mu_{0,(+)})-p_{(-)}(\mu_{1,(+)}-\alpha_{1}-\mu_{0,(+)}+\alpha_{0})\\
 & =(1-p_{(-)})\alpha_{0}+p_{(-)}\alpha_{1}+(p_{(+)}-p_{(-)})(\mu_{1,(+)}-\mu_{0,(+)}).
\end{align*}
Dividing by $p_{(+)}-p_{(-)}$ on both sides of the equation, we get the desired result,
$$
\frac{\mu_{(+)}-\mu_{(-)}}{p_{(+)}-p_{(-)}}=ATT(c)+\frac{(1-p_{(-)})\alpha_{0}+p_{(-)}\alpha_{1}}{p_{(+)}-p_{(-)}}.
$$
If we instead substitute $\mu_{w,\left(+\right)}=\mu_{w,\left(-\right)}+\alpha_{w}$
for $w=0,1$ into \eqref{eq:proof_fuzzy1}, we obtain the bias characterization for $ATU(c)$. Also note that for $p_{(+)}=1$ and $p_{(-)}=0$, we obtain the sharp RD result (Theorem \ref{thm:bias}).
\end{proof}

The proof of Theorem \ref{thm:main_id} is presented for the $ATT(c,t^\star)$. The proof for the $ATU(c,t^\star)$ is very similar.

\begin{proof}[Proof of Theorem \ref{thm:main_id}]
First, because for all $t\in\mathcal{T}_0$, $Y_{it} = Y_{i,t}(0)$, then for all $t\in\mathcal{T}_0$, $\alpha_{0,t}$ is identified by 
$\alpha_{0,t} = \mu_{t,\left(+\right)}-\mu_{t,\left(-\right)}$.
By Assumption \ref{A.discont_y_period} 
and the above identification of $\alpha_{0,t}$
then $\forall t\in\mathcal{T}_0$ we have $D^Y_t=\alpha_{0,t}$. Using Assumption \ref{A.learn.alpha}(ii), $\alpha_{0, t^\star}=g_0(\{D^Y_t:t\in\mathcal{T}_0\})$. Using Theorem \ref{thm:bias} for period $t^\star$ we have $ATT(c,t^\star) = D^Y_{t^\star} - g_0(\{D^Y_t:t\in\mathcal{T}_0\})$.
\end{proof}

\section{Estimation and Inference}
\label{sec:app_estimation}

In this appendix, we extend the derivation of weighted least squares estimators, asymptotic bias, and variance from a single time period to the multiple time period estimator, assuming constant discontinuities for identification and focusing on the ATT.

\subsection{The Weighted Least Square Estimator}\label{sec:app_estimator_single}

For the period of interest  $t^{\star}$, assume  
WLOG that $c=0$. 
For simplicity of presentation, we assume equal sample size  $n$  in each time period. 
Let 
\begin{align*} & \boldsymbol{Y}_{t}=\left[Y_{1,t},...,Y_{n,t}\right]^{\prime},\\
 & \boldsymbol{A}_{t,(+)}(h_{n})=diag\big(1_{\left\{ R_{1,t}\geq0\right\} }K_{h_{n}}(R_{1,t}),...,1_{\left\{ R_{n,t}\geq0\right\} }K_{h_{n}}(R_{n,t})\big), \\
 & \boldsymbol{X}_{t,p}^{\prime}\left(h_{n}\right)=\left[\boldsymbol{X}_{p}\left(R_{1,t}/h_{n}\right),...,\boldsymbol{X}_{p}\left(R_{n,t}/h_{n}\right)\right],\\
 & \boldsymbol{S}_{t,p}(h_{n})=\left[\left(R_{1,t}/h_{n}\right)^{p},...,\left(R_{n,t}/h_{n}\right)^{p}\right]^{\prime},\\
 & \boldsymbol{H}_{p}(h_{n})=diag\left(1,h_{n}^{-1},...,h_{n}^{-p}\right),
\end{align*}
where $1_{\{.\}}$ is the indicator function, $\boldsymbol{X}_{p}(r)=[r^{0},r^{1},...,r^{p}]^{\prime}$, and 
$K_{h}\left(u\right)=\frac{1}{h}K\left(\frac{u}{h}\right)$. 
Let
\begin{align*}
\boldsymbol{\Gamma}_{t,(+),p}\left(h_{n}\right) & =\frac{1}{n}\boldsymbol{X}_{t,p}\left(h_{n}\right)^{\prime}\boldsymbol{A}_{t,\left(+\right)}\left(h_{n}\right)\boldsymbol{X}_{t,p}\left(h_{n}\right),\\
\boldsymbol{\vartheta}_{t,(+),p,q}\left(h_{n}\right) & =\frac{1}{n}\boldsymbol{X}_{t,p}\left(h_{n}\right)^{\prime}\boldsymbol{A}_{t,(+)}\left(h_{n}\right)\boldsymbol{S}_{t,q}\left(h_{n}\right).
\end{align*}
We use similar notations with $\left(-\right)$ subscript for 
below the cutoff. Following \citet{calonico2014robust}, the local linear estimators below and above the
cutoff are
\begin{align*}
\widehat{\beta}_{t,(-),p}\left(h_{n}\right) & =\frac{1}{n}\boldsymbol{H}_{p}(h_{n})\left(\boldsymbol{\Gamma}_{t,(-),p}\left(h_{n}\right)\right)^{-1}\boldsymbol{X}_{t,p}\left(h_{n}\right)^{\prime}\boldsymbol{A}_{t,(-)}\left(h_{n}\right)\boldsymbol{Y}_{t},\\
\widehat{\beta}_{t,(+),p}\left(h_{n}\right) & =\frac{1}{n}\boldsymbol{H}_{p}(h_{n})\left(\boldsymbol{\Gamma}_{t,(+),p}\left(h_{n}\right)\right)^{-1}\boldsymbol{X}_{t,p}\left(h_{n}\right)^{\prime}\boldsymbol{A}_{t,(+)}\left(h_{n}\right)\boldsymbol{Y}_{t}.
\end{align*}
Let $\widehat{\beta}_{t,(-),p}^{\left(v\right)}\left(h_{n}\right)=v!e_{v,p}^{\prime}\widehat{\beta}_{t,(-),p}\left(h_{n}\right)$
and $\widehat{\beta}_{t,(+),p}^{\left(v\right)}\left(h_{n}\right)=v!e_{v,p}^{\prime}\widehat{\beta}_{t,(+),p}\left(h_{n}\right)$
where $e_{v,p}$ is a zero vector of length $p+1$ with one in the $v+1$ element. $\widehat{D}_{t,p}^{Y}\left(h_{n}\right)=\widehat{\beta}_{t,(+),p}^{\left(0\right)}\left(h_{n}\right)-\widehat{\beta}_{t,(-),p}^{\left(0\right)}\left(h_{n}\right)$
is the estimator for the outcome discontinuity at time $t$. The estimator $\widehat{ATT}\left(c,t^{\star};h_{n},p\right)$ is then given by \eqref{eq:point_est_mult} for user-specified weights.

\subsection{First Order Bias}\label{sec:app_est_sing_bias}

We now derive the first-order asymptotic bias of the ATT estimator given in \eqref{eq:point_est_mult}. Let
\begin{align*}
\ensuremath{B_{t,\left(+\right),v,p,q}(h_{n})} & =e_{v,p}^{\prime}\boldsymbol{H}_{p}\left(h_{n}\right)\left(\boldsymbol{\Gamma}_{t,\left(+\right),p}\left(h_{n}\right)\right)^{-1}\boldsymbol{\vartheta}_{t,\left(+\right),p,q}\left(h_{n}\right),\\
\ensuremath{B_{t,\left(-\right),v,p,q}(h_{n})} & =e_{v,p}^{\prime}\boldsymbol{H}_{p}\left(h_{n}\right)\left(\boldsymbol{\Gamma}_{t,\left(-\right),p}\left(h_{n}\right)\right)^{-1}\boldsymbol{\vartheta}_{t,\left(-\right),p,q}\left(h_{n}\right).
\end{align*}
Following previous research on bandwidths in single time period RD ( \citet{imbens2012optimal,calonico2014robust}), we can write 
\begin{equation}\label{eq:bias_pos_neg}
\begin{split}
\mathbb{E}\Big[\widehat{\beta}_{t,(+),p}^{\left(0\right)}(h_n)\Big]-\mu_{t,\left(+\right)}&=h_{n}^{2}\frac{\mu_{t,\left(+\right)}^{\left(2\right)}}{2}B_{t,\left(+\right),0,p,2}(h_n)+o_{p}\left(h_{n}^{2}\right), \\
\mathbb{E}\Big[\widehat{\beta}_{t,(-),p}^{\left(0\right)}(h_{n})\Big]-\mu_{t,\left(-\right)}&=h_{n}^{2}\frac{\mu_{t,\left(-\right)}^{\left(2\right)}}{2}B_{t,\left(-\right),0,p,2}(h_{n})+o_{p}\left(h_{n}^{2}\right).  
\end{split}
\end{equation}
Subtracting the two expressions in \eqref{eq:bias_pos_neg} we obtain (recall $D^Y_t=\mu_{t,\left(+\right)}-\mu_{t,\left(-\right)}$)
\begin{equation}
\label{eq:bias_single_period}
\mathbb{E}\Big[\widehat{D}_{t,p}^{Y}(h_{n})\Big]  -D^Y_t =h_{n}^{2}\frac{\mu_{t,(+)}^{(2)}}{2}B_{t,(+),0,p,2}(h_{n})-h_{n}^{2}\frac{\mu_{t,\left(-\right)}^{(2)}}{2}B_{t,(-),0,p,2}(h_{n})+o_{p}\left(h_{n}^{2}\right)
\end{equation}
Let $\mathtt{B}_{t,p}(h_n)=h_{n}^{2}\frac{\mu_{t,(+)}^{\left(2\right)}}{2!}B_{t,(+),0,p,2}(h_{n})-h_{n}^{2}\frac{\mu_{t,(-)}^{(2)}}{2!}B_{t,(-),0,p,2}(h_{n})$. Taking the expectation of the ATT estimator in \eqref{eq:point_est_mult}, subtracting the (weighted) outcome discontinuity in each period, and substituting \eqref{eq:bias_single_period}, we can characterize the bias as
\begin{equation*}
\mathbb{E}\big[\widehat{ATT}\left(t^{\star},c;p,h_{n}\right)\big]-\big(D_{t^{\star}}^{Y}-\sum_{t\in\mathcal{T}_{0}}\pi_{t}D_{t^{\star}}^{Y}\big)=\mathtt{B}_{t^{\star},p}\left(h_{n}\right)-\sum_{t\in\mathcal{T}_{0}}\pi_{t}\mathtt{B}_{t,p}\left(h_{n}\right)+o_{p}\left(h_{n}^{2}\right)
\end{equation*}

\subsection{Variance Estimation}
\label{sec:app_var}
We begin with deriving the (conditional) variance of the conventional estimator in \eqref{eq:point_est_mult}. First, $V\big(\widehat{ATT}\left(c,t^{\star}:h_{n},p\right)\big)$ is equal to
\begin{align*}
V\left(\widehat{D}_{t^{\star},p}^{y}(h_{n})\right)+V\Big(\sum_{t\in\mathcal{T}_{0}}\pi_{t}\widehat{D}_{t,p}^{y}(h_{n})\Big)-2Cov\Big(\widehat{D}_{t^{\star},p}^{y}(h_{n}),\sum_{t\in\mathcal{T}_{0}}\pi_{t}\widehat{D}_{t,p}^{y}(h_{n})\Big).
\end{align*}
The second term is equal to 
$$
\sum_{t\in\mathcal{T}_{0}}\pi_{t}^{2}V\big(\widehat{D}_{t,p}^{y}(h_{n})\big) + \sum_{t\in\mathcal{T}_{0}}\sum_{s\in\mathcal{T}_{0}:s\ne t}\pi_{t}w_{s}Cov\big(\widehat{D}_{t,p}^{y}(h_{n}),\widehat{D}_{s,p}^{y}(h_{n})\big),$$
while the covariance term is equal $\sum_{t\in\mathcal{T}_{0}}\pi_{t}Cov\left(\widehat{D}_{t^{\star},p}^{y}(h_{n}),\widehat{D}_{t,p}^{y}(h_{n})\right)$. 
Putting it all together we obtain 
\begin{equation}\label{eq:app_general_var}
\begin{split}
V\left(\widehat{ATT}\left(c,t^{\star}:h_{n},p\right)\right) & =V\left(\widehat{D}_{t^{\star},p}^{y}(h_{n})\right)+\sum_{t\in\mathcal{T}_{0}}\pi_{t}^{2}V\left(\widehat{D}_{t,p}^{y}(h_{n})\right)\\
 & -2\sum_{t\in\mathcal{T}_{0}}\pi_{t}Cov\left(\widehat{D}_{t^{\star},p}^{y}(h_{n}),\widehat{D}_{t,p}^{y}(h_{n})\right)\\
 & +\sum_{t\in\mathcal{T}_{0}}\sum_{s\in\mathcal{T}_{0}:s\ne t}\pi_{t}w_{s}Cov\left(\widehat{D}_{t,p}^{y}(h_{n}),\widehat{D}_{s,p}^{y}(h_{n})\right).
\end{split}
\end{equation}
Under CS sampling, each unit is only observed once. Therefore, the covariance of estimators between time periods is zero. Hence, \eqref{eq:app_general_var} under CS sampling is equal to \eqref{eq:Var_ATT_CS}.
Under panel sampling, the covariances in \eqref{eq:app_general_var} will generally be non-zero. For two time periods $s,t$, suppressing $h_n$ for brevity, the covariance is equal to
\begin{equation}\label{eq:app_covariances}
\begin{split}Cov\left(\widehat{D}_{t,p}^{Y},\widehat{D}_{s}^{Y}\right) & =Cov\left(\widehat{\beta}_{t,(+),p}^{(0)},\widehat{\beta}_{s,(+),p}^{(0)}\right)+Cov\left(\widehat{\beta}_{t,(-),p}^{(0)},\widehat{\beta}_{s,(-),p}^{(0)}\right)\\
 & -Cov\left(\widehat{\beta}_{t,(+),p}^{(0)},\widehat{\beta}_{s,(-),p}^{(0)}\right)-Cov\left(\widehat{\beta}_{t,(-),p}^{(0)},\widehat{\beta}_{s,(+),p}^{(0)}\right).
\end{split}
\end{equation}
Under PC sampling the running variable is constant over time, and hence the covariances of estimators at different sides of the cutoff are zero, which means that \eqref{eq:app_covariances} is equal to \eqref{eq:COV_PC}. This implies that  \eqref{eq:app_general_var} under PC sampling is equal to \eqref{eq:Var_ATT_PC}. 
Finally, under PV sampling, the the covariances of estimators at different sides of the cutoff are non-zero, which means that \eqref{eq:app_covariances} is equal to \eqref{eq:COV_PC} minus \eqref{eq:COV_PV}. This implies that  \eqref{eq:app_general_var} under PV sampling is equal to \eqref{eq:Var_ATT_PV}.

The variances \eqref{eq:Var_ATT_CS}, \eqref{eq:Var_ATT_PC} and \eqref{eq:Var_ATT_PV} are composed of the variances of outcome discontinuity estimators in a single time period, and covariances of such estimators over periods. We now derive each of these terms and their  estimators. 
Throughout the derivation, let
\begin{equation*}
\boldsymbol{\Sigma}\left(t,s\right)=\left[\begin{array}{cc}
\boldsymbol{\Sigma}_{t} & \boldsymbol{\Sigma}_{t,s}\\
\boldsymbol{\Sigma}_{s,t} & \boldsymbol{\Sigma}_{s}
\end{array}\right]
\end{equation*}
be the $(2n\times 2n)$ block covariance matrix of all $2n$ observations at time periods $t$ and $s$.  Specifically, $\boldsymbol{\Sigma}_t= diag\left(\sigma^{2}\left(R_{1,t}\right),...,\sigma^{2}\left(R_{n,t}\right)\right)$ where $\sigma^2(R_{i,t})=V(\varepsilon_{i,t}\mid R_{i,t})$. The diagonal variance matrix $\boldsymbol{\Sigma}_s$ is analogous. The matrix  $\boldsymbol{\Sigma}_{s,t}$ represents the covariance matrix of the errors between the two time periods. That is, the $(i,j)$ entry represents the covariance between $e_{i,s}$ and $e_{j,t}$. Finally, $\boldsymbol{\Sigma}_{s,t}=\boldsymbol{\Sigma}^{\prime}_{t,s}$. 

The variance of $\widehat{D}_{t,p}^{Y}\left(h_{n}\right)$ is $V\left(\widehat{D}_{t,p}^{Y}\left(h_{n}\right)\right)=V\left(\widehat{\beta}_{t,(+),p}^{\left(0\right)}\left(h_{n}\right)\right)+V\left(\widehat{\beta}_{t,(-),p}^{(0)}(h_{n})\right)$
since the covariance of the estimators in the same 
time period is zero. 
The conditional variance of $\widehat{\beta}_{t,(+),p}$ is equal to
\begin{equation*}
V\left(\widehat{\beta}_{t,\left(+\right),p}\left(h_{n}\right)\right)=\frac{1}{n}\boldsymbol{H}_{p}\left(h_{n}\right)\left(\boldsymbol{\Gamma}_{t,\left(+\right),p}\left(h_{n}\right)\right)^{-1}\boldsymbol{\Psi}_{t,\left(+\right),p,p}\left(h_{n}\right)\left(\boldsymbol{\Gamma}_{t,\left(+\right),p}\left(h_{n}\right)\right)^{-1}\boldsymbol{H}_{p}^{\prime}(h_{n}),
\end{equation*}
where  $\boldsymbol{\Psi}_{t,\left(+\right),p,q}\left(h,b\right)=\frac{1}{n}\boldsymbol{X}_{t,\left(+\right),p}^{\prime}\left(h\right)\boldsymbol{A}_{t,\left(+\right)}\left(h\right)\boldsymbol{\Sigma}_{t}\boldsymbol{A}_{t,\left(+\right)}\left(b\right)\boldsymbol{X}_{t,\left(+\right),q}\left(b\right)$. 
For the variance of the $v$-th coefficient we obtain  $V\left(\widehat{\beta}_{t,(+),p}^{\left(v\right)}\left(h_{n}\right)\right)=(v!)^{2}e_{v,p}^{\prime}V\left(\widehat{\beta}_{t,\left(+\right),p}\left(h_{n}\right)\right)e_{v,p}$, in which the only unknown quantity in is $\boldsymbol{\Sigma}_{t}$. To estimate $\boldsymbol{\Sigma}_{t}$ we can use the standard heteroskedastic robust estimator, which estimates $\sigma^2(R_{i,t})$ by $\widehat{\epsilon}^2_{i,t}$, where $\widehat{\epsilon}_{i,t}$ is the residual from the regression: $\widehat{\varepsilon}_{i,t}=Y_{i,t}-\boldsymbol{X}^\prime_p(R_{i,t})\widehat{\beta}_{t,(+),p}(h_{n})$ if $R_{i,t}\geq0$ and $\widehat{\varepsilon}_{i,t}=Y_{i,t}-\boldsymbol{X}^\prime_p(R_{i,t})\widehat{\beta}_{t,(-),p}(h_{n})$ if $R_{i,t}<0$. The derivation for the variance of the intercept below the cutoff is analogous.

Next, we turn to the covariance of outcome discontinuity estimators over time. By \eqref{eq:app_covariances},  it is decomposed to covariances of intercept estimators above and below the cutoff across time periods. For placeholders $\left(\blacktriangle\right)$ and $\left(\blacktriangledown\right)$
for either $\left(+\right)$ or $\left(-\right)$, and two time periods $t$ and $s$, these covariances are 
\begin{align*}
Cov&\left(\widehat{\beta}_{s,\left(\blacktriangle\right),p}\left(h_{n}\right),\widehat{\beta}_{t,\left(\blacktriangledown\right),q}\left(b_{n}\right)\right) \\
&=\frac{1}{n}\boldsymbol{H}_{p}\left(h_{n}\right)\left(\boldsymbol{\Gamma}_{s,\left(\blacktriangle\right),p}\left(h_{n}\right)\right)^{-1}\boldsymbol{\Psi}_{s,t,\left(\blacktriangle\right),\left(\blacktriangledown\right),p,q}\left(h_{n},b_{n}\right)\left(\boldsymbol{\Gamma}_{t,\left(\blacktriangledown\right),q}\left(b_{n}\right)\right)^{-1}\boldsymbol{H}_{q}^{\prime}\left(b_{n}\right),
\end{align*}
where
\begin{equation*}
\boldsymbol{\Psi}_{s,t,\left(\blacktriangle\right),\left(\blacktriangledown\right),p,q}\left(h,b\right)=\frac{1}{n}\boldsymbol{X}_{s,p}^{\prime}\left(h\right)\boldsymbol{A}_{s,\left(\blacktriangle\right)}\left(h\right)\boldsymbol{\Sigma}_{s,t}\boldsymbol{A}_{t,\left(\blacktriangledown\right)}\left(b\right)\boldsymbol{X}_{t,q}\left(b\right).
\end{equation*}
Similar to the variance of the estimator of the intercept, in the covariance of estimators of intercepts across time the only unknown term is the residual covariance matrix $\boldsymbol{\Sigma}_{s,t}$. As before, this can be estimated from the data using the residuals from the local linear regressions in each time period. That is, given  $\widehat{\epsilon}_{i,t}$ and $\widehat{\epsilon}_{i,s}$, for $i=1,...,n$, we estimate the $(i,i)$ element in $\boldsymbol{\Sigma}_{s,t}$ by $\widehat{\epsilon}_{i,t}\widehat{\epsilon}_{i,s}$. Assuming units are independent, we set the elements $(i,j)$, for $i\neq j$ in $\boldsymbol{\Sigma}_{s,t}$ to zero.

\subsection{Bias-Corrected Estimator's Variance}\label{sec:app_rbc_var}
To write the variance of the BC estimator in \eqref{eq:est_te_bc}, we first re-write the BC estimator for the intercept. 
Substituting the definitions of $\widehat{\beta}_{t,(+),p}\left(h_{n}\right)$ and $\widehat{\beta}_{t,(+),q}\left(b_{n}\right)$, re-arranging scalars and  combining elements, the BC estimator of the intercept above the cutoff can be written as $e_{0,p}^{\prime}\widehat{\beta}_{t,(+),p,q}(h_{n},b_{n})=\frac{1}{n}e_{0,p}^{\prime}\left(\boldsymbol{\Gamma}_{t,(+),p}\left(h_{n}\right)\right)^{-1}\boldsymbol{Q}_{t,\left(+\right),p,q}\left(h_{n},b_{n}\right)\boldsymbol{Y}_{t}$ 
where 
\begin{align*}
\boldsymbol{Q}_{t,\left(+\right),p,q}\left(h_{n},b_{n}\right) & =\boldsymbol{X}_{t,p}\left(h_{n}\right)^{\prime}\boldsymbol{A}_{t,(+)}\left(h_{n}\right)\\
 & -h_{n}^{2}\boldsymbol{\vartheta}_{t,\left(+\right),p,q}\left(h_{n}\right)e_{2,q}^{\prime}\boldsymbol{H}_{q}(b_{n})\left(\boldsymbol{\Gamma}_{t,(+),q}\left(b_{n}\right)\right)^{-1}\boldsymbol{X}_{t,q}\left(b_{n}\right)^{\prime}\boldsymbol{A}_{t,(+)}\left(b_{n}\right).
\end{align*}
We define the estimator below the cutoff, $e_{0,p}^{\prime}\widehat{\beta}_{t,(-),p,q}\left(h_{n},b_{n}\right)$, analogously. 
Substituting these definitions, the BC estimator for the outcome discontinuity is
$
\widehat{D}_{\mathtt{BC},t,p,q}^{Y}(h_{n},b_{n})=e_{0,p}^{\prime}\widehat{\beta}_{t,(+),p,q}(h_{n},b_{n})-e_{0,p}^{\prime}\widehat{\beta}_{t,(-),p,q}(h_{n},b_{n}).
$
Let $V_{t^\star,p,q}^{\mathrm{BC,CS}}\left(h_{n},b_n\right)$, $V_{t^\star,p,q}^{\mathrm{BC,PC}}\left(h_{n},b_n\right)$, and $V_{t^\star,p,q}^{\mathrm{BC,PV}}\left(h_{n},b_n\right)$ be the variance of BC estimator of the ATT in  \eqref{eq:est_te_bc} under CS, PC and PV sampling schemes, respectively. Suppressing $t^\star$, $h_n$ and $b_n$ for brevity, these variances can be written as
\begin{equation}\label{eq:Var_ATT_BC}
\begin{split}
V_{p,q}^{\mathrm{BC,CS}} & =V\left(\widehat{D}_{\mathtt{BC},t^{\star},p,q}^{Y}\right)+\sum_{t\in\mathcal{T}_{0}}\pi_{t}^{2}V\left(\widehat{D}_{\mathtt{BC},t,p,q}^{Y}\right)\\
V_{p,q}^{\mathrm{BC,PC}} & =V_{p,q}^{\mathrm{BC,CS}}-2\sum_{t\in\mathcal{T}_{0}}\pi_{t}C_{t^{\star},t,p}^{\mathrm{BC,PC}}+\sum_{t\in\mathcal{T}_{0}}\sum_{s\in\mathcal{T}_{0}:s\neq t}\pi_{t}w_{s}C_{s,t,p}^{\mathrm{BC,PC}}\\
V_{p,q}^{\mathrm{BC,PV}} & =V_{p,q}^{\mathrm{BC,PC}}+2\sum_{t\in\mathcal{T}_{0}}\pi_{t}C_{t^{\star},t,p}^{\mathrm{BC,PV}}-\sum_{t\in\mathcal{T}_{0}}\sum_{s\in\mathcal{T}_{0}:s\neq t}\pi_{t}w_{s}C_{s,t,p}^{\mathrm{BC,PV}},
\end{split}
\end{equation}
where
\begin{align*}
C_{s,t,p,q}^{\mathrm{BC,PC}}\left(h_{n},b_{n}\right) & =Cov\left(e_{0,p}^{\prime}\widehat{\beta}_{s,(+),p,q}\left(h_{n},b_{n}\right),e_{0,p}^{\prime}\widehat{\beta}_{t,(+),p,q}\left(h_{n},b_{n}\right)\right)\\
 & +Cov\left(e_{0,p}^{\prime}\widehat{\beta}_{s,(-),p,q}\left(h_{n},b_{n}\right),e_{0,p}^{\prime}\widehat{\beta}_{t,(-),p,q}\left(h_{n},b_{n}\right)\right)\\
C_{s,t,p,q}^{\mathrm{BC,PV}}\left(h_{n},b_{n}\right) & =Cov\left(e_{0,p}^{\prime}\widehat{\beta}_{s,(-),p,q}\left(h_{n},b_{n}\right),e_{0,p}^{\prime}\widehat{\beta}_{t,(+),p,q}(h_{n},b_{n})\right)\\
 & +Cov\left(e_{0,p}^{\prime}\widehat{\beta}_{s,(+),p,q}(h_{n},b_{n}),e_{0,p}^{\prime}\widehat{\beta}_{t,(-),p,q}(h_{n},b_{n})\right).
\end{align*}
The derivation of \eqref{eq:Var_ATT_BC} is equivalent to the derivation of the variance of the conventional estimator (Appendix \ref{sec:app_var}). Next, we present estimators of the above variances and covariances. We present the variances only for the estimator above the cutoff; the derivation is similar in spirit  below the cutoff. We can write
\begin{align*}
 & V\left(e_{0,p}^{\prime}\widehat{\beta}_{t,(+),p,q}\left(h_{n},b_{n}\right)\right)\\
 & =\frac{1}{n^{2}}e_{0,p}^{\prime}\left(\boldsymbol{\Gamma}_{t,(+),p}\left(h_{n}\right)\right)^{-1}\boldsymbol{Q}_{t,\left(+\right),p,q}\left(h_{n},b_{n}\right)\Sigma_{t}\boldsymbol{Q}_{t,\left(+\right),p,q}^{\prime}\left(h_{n},b_{n}\right)\left(\boldsymbol{\Gamma}_{t,(+),p}\left(h_{n}\right)\right)^{-1}e_{0,p}.
\end{align*}
The only unknown quantity is $\Sigma_{t}$, which can be estimated using the regression residuals,  as discussed in Appendix \ref{sec:app_var}. Next, for placeholders $(\blacktriangle)$ and $(\blacktriangledown)$ for $(+)$ and $(-)$, 
\begin{align*}
 & Cov\left(e_{0,p}^{\prime}\widehat{\beta}_{s,(\blacktriangle),p,q}\left(h_{n},b_{n}\right),e_{0,p}^{\prime}\widehat{\beta}_{t,(\blacktriangledown),p,q}\left(h_{n},b_{n}\right)\right)\\
 & =\frac{1}{n^{2}}e_{0,p}^{\prime}\left(\boldsymbol{\Gamma}_{s,(\blacktriangle),p}\left(h_{n}\right)\right)^{-1}\boldsymbol{Q}_{s,\left(\blacktriangle\right),p,q}\left(h_{n},b_{n}\right)\Sigma_{s,t}\boldsymbol{Q}_{t,\left(\blacktriangledown\right),p,q}^{\prime}\left(h_{n},b_{n}\right)\left(\boldsymbol{\Gamma}_{t,(\blacktriangledown),p}\left(h_{n}\right)\right)^{-1}e_{0,p}.
\end{align*}
The only unknown quantity is $\Sigma_{s,t}$,  which can be estimated using the regression residuals, as discussed in Appendix \ref{sec:app_var}.

\section{Bias and Identification with Carry-Over Effects}\label{sec:app_thm_carry_over}

We assume the same setup and use the same notation as in Section 4.3 in the main text. 
We begin with characterizing the bias in sharp RD when allowing for carry-over effects, similar to Theorem 1 in the main text, and focusing only on the ATT; the results and proofs for the ATU are analogous. 

\begin{theorem}
\label{thm:bias_carry_over}
Consider a time period $t^\star \in \mathcal{T}_{RD}$. Under Assumption 2 and allowing for carry-over effects, in a sharp RD design $D^Y_{t^\star} = ATT(c,t^\star) + \alpha_{0,t^\star}$.
\end{theorem}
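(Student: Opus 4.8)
The plan is to replicate the argument used for Theorem \ref{thm:bias}, working with the treatment-path potential outcomes $Y_{t}(tp)$ in place of the single-period potential outcomes $Y_{t}(w)$. The first step is to pin down the observed conditional-mean limits in period $t^{\star}$. Because the running variable and cutoff are time-invariant and treatment is common to all units in periods of $\mathcal{T}_{0}$ and $\mathcal{T}_{1}$, every unit realizes one of only two treatment paths, $tp_{0}$ or $tp_{1}$, according to whether it lies below or above $c$. Since the RD assignment in $t^{\star}$ is sharp, units with $R\geq c$ are treated in period $t^{\star}$ and hence follow $tp_{1}$, while units with $R<c$ follow $tp_{0}$; by consistency of observed and potential outcomes this gives $\mu_{t^{\star},(+)}=\mu_{tp_{1},t^{\star},(+)}$ and $\mu_{t^{\star},(-)}=\mu_{tp_{0},t^{\star},(-)}$.

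The second step is the algebraic decomposition. From the previous step, $D^{Y}_{t^{\star}}=\mu_{t^{\star},(+)}-\mu_{t^{\star},(-)}=\mu_{tp_{1},t^{\star},(+)}-\mu_{tp_{0},t^{\star},(-)}$. Adding and subtracting $\mu_{tp_{0},t^{\star},(+)}$, which exists under Assumption \ref{A.discont_y} read with the treatment-path potential outcomes of Section \ref{sec:carryover}, yields
\begin{equation*}
D^{Y}_{t^{\star}}=\big(\mu_{tp_{1},t^{\star},(+)}-\mu_{tp_{0},t^{\star},(+)}\big)+\big(\mu_{tp_{0},t^{\star},(+)}-\mu_{tp_{0},t^{\star},(-)}\big).
\end{equation*}
By the definitions in Section \ref{sec:carryover}, the first parenthesis equals $ATT(c,t^{\star})$ and the second equals $\alpha_{0,t^{\star}}$, which is the claim. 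For the analogous ATU statement one instead adds and subtracts $\mu_{tp_{1},t^{\star},(-)}$, obtaining $D^{Y}_{t^{\star}}=ATU(c,t^{\star})+\alpha_{1,t^{\star}}$.

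I do not expect any genuine analytic difficulty: once the two relevant treatment paths are identified, the statement is an algebraic identity, exactly as in Theorem \ref{thm:bias}. The only point requiring care — and the closest thing to an obstacle — is the bookkeeping in the first step: justifying that it suffices to index potential outcomes by the two paths $tp_{0},tp_{1}$, so that $ATT(c,t^{\star})$, $\alpha_{0,t^{\star}}$ and $D^{Y}_{t^{\star}}$ are all expressed in terms of the same two paths, and that in the sharp RD period $t^{\star}$ membership above or below the cutoff coincides with following $tp_{1}$ or $tp_{0}$. Note that, unlike the subsequent identification result, this bias characterization does not invoke Assumption \ref{A.no_antic}; that assumption enters only when one afterwards tries to recover $\alpha_{0,t^{\star}}$ from the periods in $\mathcal{T}_{0}$.
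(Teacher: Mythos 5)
Your proposal is correct and follows essentially the same route as the paper's own proof: identify $\mu_{t^\star,(+)}=\mu_{tp_1,t^\star,(+)}$ and $\mu_{t^\star,(-)}=\mu_{tp_0,t^\star,(-)}$ from sharpness, then add and subtract $\mu_{tp_0,t^\star,(+)}$ to split $D^Y_{t^\star}$ into $ATT(c,t^\star)$ plus $\alpha_{0,t^\star}$. Your added remarks — that only the two paths $tp_0,tp_1$ can arise under a time-invariant running variable, and that Assumption \ref{A.no_antic} is not needed for the bias characterization itself — are accurate and consistent with the paper.
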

\begin{proof}
Because the design is a sharp RD, by subtracting the limit above the cutoff of the expectation of the observed outcomes $Y_{i,t^\star}$ from the limit below the cutoff of the expectation of $Y_{i,t^\star}$, we get $D^Y_{t^\star}=\mu_{t^\star,\left(+\right)}-\mu_{t^\star,\left(-\right)}=\mu_{tp_1,t^\star,\left(+\right)}-\mu_{tp_0,t^\star,\left(-\right)}$.
Next, by adding and subtracting $\mu_{tp_{0},t^{\star},\left(+\right)}$  we obtain
\begin{equation}\label{eq:theorem.sa1.1}
    D^Y_{t^\star}=\underset{ATT(c,t^{\star})}{\underbrace{\mu_{tp_{1},t^{\star},(+)}-\mu_{tp_{0},t^{\star},(+)}}}+\underset{\alpha_{0,t^{\star}}}{\underbrace{\mu_{tp_{0},t^{\star},(+)}-\mu_{tp_{0},t^{\star},(-)}}}.
\end{equation}
\end{proof}

\noindent Before stating and proving the identification theorem, we prove the following lemma. 

\begin{lemma}\label{lemma1_carry_over}
Under Assumption 2 and allowing for carry-over effects, if Assumption 5(i) holds $\alpha_{0,t}=D_{t}^{Y}$ for all $t\in\mathcal{T}_0$.
\end{lemma}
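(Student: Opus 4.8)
The plan is to reproduce the bias decomposition of Theorem~\ref{thm:bias_carry_over}, but applied to a period $t\in\mathcal{T}_0$ rather than to a period in $\mathcal{T}_{\mathrm{RD}}$, and then to invoke the part of the no-anticipation assumption pertaining to $\mathcal{T}_0$ in order to eliminate the residual causal term. First I would record what is observed in a period $t\in\mathcal{T}_0$: although no unit is treated at time $t$, the cohort structure is inherited from the RD periods, so a unit whose running variable lies above $c$ belongs to the cohort on treatment path $tp_1$, while a unit below $c$ belongs to the cohort on path $tp_0$. Hence $Y_{i,t}=Y_{i,t}(tp_1)$ for units above the cutoff and $Y_{i,t}=Y_{i,t}(tp_0)$ for units below it, and taking the one-sided limits at $c$ --- which exist by Assumption~\ref{A.discont_y} in its carry-over formulation --- gives $\mu_{t,(+)}=\mu_{tp_1,t,(+)}$ and $\mu_{t,(-)}=\mu_{tp_0,t,(-)}$. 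Consequently $D^Y_t=\mu_{t,(+)}-\mu_{t,(-)}=\mu_{tp_1,t,(+)}-\mu_{tp_0,t,(-)}$.

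Next I would add and subtract $\mu_{tp_0,t,(+)}$, exactly as in \eqref{eq:theorem.sa1.1}, to obtain
\[
D^Y_t=\underbrace{\big(\mu_{tp_1,t,(+)}-\mu_{tp_0,t,(+)}\big)}_{ATT(c,t)}+\underbrace{\big(\mu_{tp_0,t,(+)}-\mu_{tp_0,t,(-)}\big)}_{\alpha_{0,t}},
\]
using the carry-over definitions of $ATT(c,t)$ and $\alpha_{0,t}$. Finally I would invoke the part of Assumption~\ref{A.no_antic} that applies to $\mathcal{T}_0$ --- namely $\mu_{tp_0,t,(+)}=\mu_{tp_1,t,(+)}$ for $t\in\mathcal{T}_0$, equivalently $ATT(c,t)=0$ there --- which collapses the displayed decomposition to $D^Y_t=\alpha_{0,t}$ for every $t\in\mathcal{T}_0$, as claimed.

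I do not expect a substantial obstacle; the one point that needs care is the bookkeeping in the first step, i.e., recognizing that in a $\mathcal{T}_0$ period the two one-sided limits of the observed conditional mean are still governed by two \emph{different} treatment paths ($tp_1$ above, $tp_0$ below), just as in an RD period, so that the argument behind Theorem~\ref{thm:bias_carry_over} transfers verbatim. It is also worth stating explicitly that the existence-of-limits assumption is precisely what makes $\alpha_{0,t}$ and each term of the decomposition well-defined.
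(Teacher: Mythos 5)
Your proof is correct and follows essentially the same route as the paper's: apply the decomposition $D^Y_t = ATT(c,t) + \alpha_{0,t}$ from Theorem~\ref{thm:bias_carry_over} to a period $t\in\mathcal{T}_0$ (using that observed outcomes above and below the cutoff correspond to paths $tp_1$ and $tp_0$, respectively) and then eliminate the $ATT(c,t)$ term via the no-anticipation condition for $\mathcal{T}_0$. You correctly pin down the substantively relevant clause of Assumption~\ref{A.no_antic}, namely $\mu_{tp_0,t,(+)}=\mu_{tp_1,t,(+)}$ for $t\in\mathcal{T}_0$, which the lemma statement cites as part (i) but which the paper's surrounding discussion numbers as part (ii) --- a labeling inconsistency in the paper, not a gap in your argument.
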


\begin{proof}
Following the same argument as in the proof of Theorem \ref{thm:bias_carry_over} we obtain $D^Y_t=ATT(c,t)+\alpha_{0,t}$. 
Substituting $ATT(c,t)=0$ for $t\in\mathcal{T}_0$ (Assumption 5(i))  we obtain that $D_{t}^{Y}=\alpha_{0,t}$.
\end{proof}

We now turn to stating and proving the identification theorem under carry-over effects.
\begin{theorem}\label{thm:main_id_carry_over}
Assume the RD design is sharp for all time periods in $\mathcal{T}_{\mathrm{RD}}$, Assumption
3 holds, there are carry-over effects and Assumption 5(i) holds, and the running variable does not vary with time.
For $t^\star\in\mathcal{T}_{\mathrm{RD}}$, if Assumption 4(i) holds for a function $g_0$ and $\mathcal{T}_0$ is non-empty, then $ATT(c,t^\star) =   D^Y_{t^\star} - g_0(\{D^Y_t:t \in \mathcal{T}_0\})$.
\end{theorem}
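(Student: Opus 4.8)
The plan is to mirror the three-step structure of the proof of Theorem \ref{thm:main_id}, substituting the carry-over analogues that have already been established. The key observation is that all the content specific to carry-over effects has been absorbed into Theorem \ref{thm:bias_carry_over} and Lemma \ref{lemma1_carry_over}, so what remains is essentially an assembly of those two ingredients with Assumption \ref{A.learn.alpha}(i).

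First I would invoke Lemma \ref{lemma1_carry_over}: since Assumption \ref{A.no_antic}(i) holds and $\mathcal{T}_0$ is non-empty, for every $t\in\mathcal{T}_0$ we have $\alpha_{0,t}=D^Y_t$. This is the step where the (local) no-anticipation assumption does its work, ensuring that in a period where all units are untreated the observed outcome discontinuity equals the potential-outcome discontinuity $\alpha_{0,t}$ along the path $tp_0$, uncontaminated by carry-over from subsequent periods. Second, I would apply Assumption \ref{A.learn.alpha}(i), which states $\alpha_{0,t^\star}=g_0(\{\alpha_{0,t}:t\in\mathcal{T}_0\})$; substituting the identity from the previous step yields $\alpha_{0,t^\star}=g_0(\{D^Y_t:t\in\mathcal{T}_0\})$, an expression in observables. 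Third, I would apply Theorem \ref{thm:bias_carry_over} at period $t^\star$, namely $D^Y_{t^\star}=ATT(c,t^\star)+\alpha_{0,t^\star}$, and rearrange to $ATT(c,t^\star)=D^Y_{t^\star}-\alpha_{0,t^\star}=D^Y_{t^\star}-g_0(\{D^Y_t:t\in\mathcal{T}_0\})$, which is the claim.

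There is no genuine obstacle here; the only point requiring care is bookkeeping on the definition of $\alpha_{0,t}$ once potential outcomes are indexed by treatment paths. One must read $\alpha_{0,t}$ as $\mu_{tp_0,t,(+)}-\mu_{tp_0,t,(-)}$ in every period $t$, and check that the $\alpha_{0,t}$'s inside $g_0$ in Assumption \ref{A.learn.alpha}(i), in Lemma \ref{lemma1_carry_over}, and in Theorem \ref{thm:bias_carry_over} all denote the same object. Because a time-invariant running variable and cutoff admit only the two treatment paths $tp_0$ and $tp_1$, this consistency is immediate, and the time-invariance hypothesis is otherwise used only implicitly to rule out composition effects, exactly as in Theorem \ref{thm:main_id}.
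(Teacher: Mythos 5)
Your proposal is correct and follows exactly the same route as the paper's proof: Lemma \ref{lemma1_carry_over} gives $\alpha_{0,t}=D^Y_t$ for $t\in\mathcal{T}_0$, Assumption \ref{A.learn.alpha}(i) then expresses $\alpha_{0,t^\star}$ in terms of the observed discontinuities, and Theorem \ref{thm:bias_carry_over} applied at $t^\star$ yields the result. Your added remark that $\alpha_{0,t}$ must consistently mean $\mu_{tp_0,t,(+)}-\mu_{tp_0,t,(-)}$ across all three ingredients is a worthwhile bookkeeping check that the paper leaves implicit.
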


\begin{proof} 
Using Lemma \ref{lemma1_carry_over} then $\forall t\in\mathcal{T}_0:D^Y_t=\alpha_{0,t}$. Using Assumption 4(i) $\alpha_{0, t^\star}=g_0(\{D^Y_t:t\in\mathcal{T}_0\})$. Using Theorem \ref{thm:bias_carry_over} for period $t^\star$ we have $ATT(c,t^\star) = D^Y_{t^\star} - g_0(\{D^Y_t:t\in\mathcal{T}_0\})$.
\end{proof}

\section{Alternative Imputation-based Identification}\label{sec:app_impute}

We briefly discuss here an alternative identification framework, which we term the imputation framework. It postulates parametric modeling assumptions on the untreated potential outcomes,  and is based on several recent studies that developed imputation estimators for DID (\citet{athey2021matrix,gardner2022two,borusyak2024revisiting,liu2024practical}). We focus on identification of the ATT, but remark that analogous approach can be used for the ATU.\footnote{For identification of the ATT we only needed to specify a model for untreated potential outcomes. Similarly, to identify the ATU we only need to specify a model for treated potential outcomes. If all $t_1\in\mathcal{T}_1$ are after $t^\star\in\mathcal{T}_{RD}$, which is the natural ordering of time periods in most contexts, these periods might have carry-over effects, as discussed in Section 4.3, which posses a challenge. A possible solution is to invoke Assumption 5, if it is plausible.} 

In its simplest form, the imputation framework models the untreated potential outcomes by
$Y_{i,t}(0) = \delta_t + \delta_i + \varepsilon_{it}$,
where $\delta_t$ and $\delta_i$ are time and unit fixed effects, and $\varepsilon_{it}$ are idiosyncratic time-varying errors. However, the running variable is expected to be associated with the untreated potential outcome. Assuming the running variable does not vary over time,  the model for the untreated potential outcome is extended to 
\begin{equation}
\label{eq:imupte}
Y_{i,t}(0) = f_t(R_i) + \delta_t + \delta_i + \varepsilon_{i,t},
\end{equation}
for a user-specified function $f_t$.\footnote{Typically, the model-based DID imputation framework is motivated by a parallel trends assumption conditional on covariates. However, this is not the case for model \eqref{eq:imupte}.
Parallel trends conditionally on $R=r$ is impossible, because of the sharp RD design.}
Under this model, when $f_t,\delta_t,\delta_i$ are identifiable from the data, identification of the $ATT(c,t^\star)$ proceeds as follows. First, unobserved untreated potential outcomes of treated units  are imputed by $\widetilde{Y}_{i,t}=f_t(R_i) + \delta_t + \delta_i$. Then, the imputed untreated potential outcomes are used to identify the causal estimand by
$$
ATT(c,t^\star)=\lim_{r\rightarrow c^+}\mathbb{E}\left[Y_{t^\star}\mid R = r\right]-\lim_{r\rightarrow c^+}\mathbb{E}\left[\widetilde{Y}_{t^\star}\mid R = r\right].
$$ 
In practice, estimation is carried out by fitting model \eqref{eq:imupte} on the untreated observations and then continues as above using the estimated $\hat{f}_t,\hat{\delta}_t,\hat{\delta}_i$.

Contrasting between RD-DID and the above imputation identification frameworks, the imputation framework makes modeling assumptions on untreated potential outcomes that are far from the cutoff, similar to DID. Section 5 in the main text concluded that this should be done with care. A second difference is that RD-DID targets and identifies the bias parameters $\alpha_{0,t^\star},\alpha_{1,t^\star}$, which are not directly targeted in the imputation approach. Due to these differences, and in light of the discussion in Section 5, we see RD-DID as the more natural way to identify causal estimands in a RD setting where continuity does not hold.

While above we focused on the local ATT, the imputation approach identifies causal estimands for any $R$ larger than the cutoff. Identification of causal effects away from the cutoff is an ongoing and important area of research in the RD literature (\citet{angrist2015wanna,cattaneo2021extrapolating}) due to the locality, and hence external validity, of the RD design. 
Let $CATE(r,t^\star)=\mathbb{E}\left[Y_{t^\star}(1)\mid R = r\right]-\mathbb{E}\left[Y_{t^\star}(0)\mid R = r\right]$ denote the conditional average treatment effect, conditionally on the running variable value. In the imputation approach, for any $r>c$, $CATE(r,t^\star)$ is identified by
$CATE(r,t^\star)=\mathbb{E}\left[Y_{t^\star}\mid R = r\right]-\mathbb{E}\left[\widetilde{Y}_{t^\star}\mid R = r\right]$.



\section{Estimator and Variance under Linear-in-time Discontinuities}\label{sec:app_linear_est_var}

This appendix presents estimation and inference for the $ATU(c,t^\star)$ under the assumption that potential outcome discontinuities are linear in time when there are only two time periods. Following Corollary 1(ii), we need to estimate the linear slope $m_1$. In the application presented in Section 8, $\mathcal{T}_1$ consists of only two periods: 1999 and 2000. The estimated slope, based on the conventional and BC estimators of the outcome discontinuities in years 1999 and 2000, is 
\begin{equation}\label{eq:m_hat}
\begin{split}
\widehat{m}_{p}\left(h_{n}\right) & =\widehat{D}_{2000,p}^{Y}\left(h_{n}\right)-\widehat{D}_{1999,p}^{Y}\left(h_{n}\right)\\
\widehat{m}_{\mathtt{BC},p,q}\left(h_{n},b_{n}\right) & =\widehat{D}_{\mathtt{BC},2000,p,q}^{Y}\left(h_{n},b_{n}\right)-\widehat{D}_{\mathtt{BC},1999,p,q}^{Y}\left(h_{n},b_{n}\right).
\end{split}
\end{equation}
Assuming linear-in-time evolution of $\alpha_{1,t}$, \eqref{eq:m_hat} can be used to estimate $\alpha_{1,t^\star}$ as follows.
\begin{align*}
\widehat{\alpha}_{1,t^{\star},p}\left(h_{n}\right) & =\widehat{D}_{2000,p}^{Y}\left(h_{n}\right)+\left(t^{\star}-2000\right)\widehat{m}_{p}\left(h_{n}\right).\\
\widehat{\alpha}_{\mathtt{BC},1,t^{\star},p,q}\left(h_{n},b_{n}\right) & =\widehat{D}_{\mathtt{BC},2000,p,q}^{Y}\left(h_{n},b_{n}\right)+\left(t^{\star}-2000\right)\widehat{m}_{\mathtt{BC},p,q}\left(h_{n},b_{n}\right).
\end{align*}
For any $t^{\star}\in\{2001,...,2004\}$, the causal estimand of interest is estimated by the conventional or BC approaches as
\begin{align*}
\widehat{ATU}\left(c,t^{\star};h_{n},p\right) & =\widehat{D}_{t^{\star},p}^{Y}\left(h_{n}\right)-\widehat{\alpha}_{1,t^{\star},p}\left(h_{n}\right)\\
\widehat{ATU}_{\mathtt{BC}}\left(c,t^{\star};h_{n},b_{n},p,q\right) & =\widehat{D}_{\mathtt{BC},t^{\star},p,q}^{Y}\left(h_{n},b_{n}\right)-\widehat{\alpha}_{\mathtt{BC},1,t^{\star},p,q}\left(h_{n},b_{n}\right).
\end{align*}
Next we derive the variance of the estimator. We begin with the conventional estimator. Suppressing inputs, the variance of the conventional estimator can be written as 
\begin{equation}\label{eq:app_linear_1}
V\left(\widehat{ATU}\left(t^{\star}\right)\right)=V\left(\widehat{D}_{t^{\star}}^{Y}\right)+V\left(\widehat{\alpha}_{t^{\star}}\right)-2Cov\left(\widehat{D}_{t^{\star}}^{Y},\widehat{\alpha}_{t^{\star}}\right).
\end{equation}
The variance of the estimated bias is
\begin{equation}\label{eq:app_linear_2}
\begin{split}
V\left(\widehat{\alpha}\right) & =V\left(\left(t^{\star}-1999\right)\widehat{D}_{2000}^{Y}-\left(t^{\star}-2000\right)\widehat{D}_{1999}^{Y}\right)\\
 & =\left(t^{\star}-1999\right)^{2}V\left(\widehat{D}_{2000}^{Y}\right)\\
 & +\left(t^{\star}-2000\right)^{2}V\left(\widehat{D}_{1999}^{Y}\right)\\
 & -2\left(t^{\star}-1999\right)\left(t^{\star}-2000\right)Cov\left(\widehat{D}_{2000}^{Y},\widehat{D}_{1999}^{Y}\right).
\end{split}
\end{equation}
The covariance term is equal to 
\begin{equation}
\begin{split}\label{eq:app_linear_3}
Cov\left(\widehat{D}^{Y},\widehat{\alpha}\right) & =Cov\left(\widehat{D}_{t^{\star}}^{Y},\left(t^{\star}-1999\right)\widehat{D}_{2000}^{Y}-\left(t^{\star}-2000\right)\widehat{D}_{1999}^{Y}\right)\\[1ex]
 & =(t^{\star}-1999)Cov\left(\widehat{D}_{t^{\star}}^{Y},\widehat{D}_{2000}^{Y}\right)
  -(t^{\star}-2000)Cov\left(\widehat{D}_{t^{\star}}^{Y},\widehat{D}_{1999}^{Y}\right).
\end{split}
\end{equation}
The variance for the outcome discontinuity estimator, $V\left(\widehat{D}_{t^{\star}}^{Y}\right)$, as well as covariance terms of outcome discontinuity estimators between time periods, has been analyzed in Appendix B.3. Substituting \eqref{eq:app_linear_2} and \eqref{eq:app_linear_3} into \eqref{eq:app_linear_1} we arrive at the final form of the variance
\begin{align*}V\left(\widehat{ATU}\left(t^{\star}\right)\right) & =V\left(\widehat{D}_{t^{\star}}^{Y}\right)+\left(t^{\star}-1999\right)^{2}V\left(\widehat{D}_{2000}^{Y}\right)+\left(t^{\star}-2000\right)^{2}V\left(\widehat{D}_{1999}^{Y}\right)\\
 & -2\left(t^{\star}-1999\right)\left(t^{\star}-2000\right)Cov\left(\widehat{D}_{2000}^{Y},\widehat{D}_{1999}^{Y}\right)\\
 & -2\left(t^{\star}-1999\right)Cov\left(\widehat{D}_{t^{\star}}^{Y},\widehat{D}_{2000}^{Y}\right)\\
 & +2\left(t^{\star}-2000\right)Cov\left(\widehat{D}_{t^{\star}}^{Y},\widehat{D}_{1999}^{Y}\right).
\end{align*}
Similar arguments to the above can be used to derive the variance for the BC estimator, which is equal to 
\begin{align*}
V\left(\widehat{ATU}_{\mathtt{BC}}\left(t^{\star}\right)\right) & =V\left(\widehat{D}_{\mathtt{BC},t^{\star}}^{Y}\right)+V\left(\widehat{D}_{\mathtt{BC},2000}^{Y}\right)\left(t^{\star}-1999\right)^{2}+V\left(\widehat{D}_{\mathtt{BC},1999}^{Y}\right)\left(t^{\star}-2000\right)^{2}\\
 & -2Cov\left(\widehat{D}_{\mathtt{BC},2000}^{Y},\widehat{D}_{\mathtt{BC},1999}^{Y}\right)\left(t^{\star}-1999\right)\left(t^{\star}-2000\right)\\
 & -2Cov\left(\widehat{D}_{\mathtt{BC},t^{\star}}^{Y},\widehat{D}_{\mathtt{BC},2000}^{Y}\right)\left(t^{\star}-1999\right)\\
 & +2Cov\left(\widehat{D}_{\mathtt{BC},t^{\star}}^{Y},\widehat{D}_{\mathtt{BC},1999}^{Y}\right)\left(t^{\star}-2000\right).
\end{align*}
The variances and covariances of BC outcome discontinuity estimators were analyzed in Appendix B.4. As discussed in Section 6, different sampling schemes imply different assumptions on the covariance terms. Since in the application the data is of a PV sampling type, we do not explicitly write out the variance of the estimator under the linear-in-time potential outcome discontinuities assumption for the CS or PC cases. However, doing so is straightforward, using similar arguments to those used for the estimator analyzed under the constant potential outcome discontinuities assumption.

\section{Simulation}\label{sec:app_sim}

To motivate the DGP of the simulation study in Section 7, we estimate several parameters from the data used in the application in Section 8. This appendix discusses the parameters estimated, their values, and how they are used in the DGP.

We use the data from \citet{grembi2016fiscal}. The outcome variable ($Y$) is set as the fiscal gap, which is one of the two main outcomes. The running variable ($R$), population size, is limited to be between 3,500 and 7,000, before centering, as in the application. We then center the running variable at 5,000 so the cutoff ($c$) is zero. Only years 1999 and 2004 are kept, corresponding to a time period where all units are treated (1999, or $t=1$) and a time period where all units are treated according to sharp RD-treatment year (2004, or $t=2$). For simplicity, we will assume that all units are untreated in $t=1$, as in canonical RD-DID with two time periods.

Recall that the running variable is drawn from  $R=(B_i-0.375)\times5000$ where $B_i$ is sampled from a Beta distriubtion with parameters 2 and 4. This distribution was used since it has a density which is similar to the observed density of the running variable in the first time period, as shown in Figure \ref{fig:app_sim}A. We base the simulation of the running variable on Beta distributions as previously done by \citet{imbens2012optimal,calonico2014robust}.
For the PV DGP, the value of the running variable at the second time period ($R_{i,2}$) needs to be correlated with the running variable at the first period ($R_{i,1}$). In the data, we fit a linear regression of $R_{i,2}$ on $R_{i,1}$. The estimated intercept is 153, and the estimated slope is 0.97. Also, the standard deviation of the residuals is 410. These estimates motivate our choice of setting $R_{i,2}=0.97\times R_{i,1}+\tau_i$, with $\tau_i\sim N(153,410)$, in the PV DGP. Figure \ref{fig:app_sim}B shows the observed $R_{i,2}$ in the data vs. simulated $R_{i,2}=0.97\times R_{i,1}+N(153,410)$ which adds to the observed $R_{i,1}$ draws from the normal distribution with mean 153 and standard deviation 410.

Next, we discuss the outcome model. The outcome is generated as 
\begin{equation*}
Y_{i,t}=f_{t}\left(R_{i,t}\right)+1_{\{R_{i,t}\geq0\}}\times D_{t}+\delta_{i}+\delta_{t}+\varepsilon_{i,t},
\end{equation*}
where $f_t(.)$ is a fifth order polynomial, $1_{\{.\}}$ is the indicator function, $D_{t}$ is the outcome discontinuity at the cutoff, $\delta_i$ is a unit fixed effect, $\delta_t$ is a time fixed effect, and $\varepsilon_{i,t}$ is an idiosyncratic error for each unit and time period.

The function $f_t(.)$ is based on estimated polynomial coefficients from the application data. Let $\widehat{\gamma}_{t,p,(-)}$ and $\widehat{\gamma}_{t,p,(+)}$ be the estimated coefficient vectors of the $p$-th polynomial degree in time $t$ on units below and above the cutoff, respectively. The estimated coefficients are used to define two regression functions, one for each period. 
\begin{align*}
f_{1}\left(r\right) & =\begin{cases}
\sum_{p=1}^{5}\widehat{\gamma}_{1,p,\left(-\right)}r^{p} & r<0\\
\sum_{p=1}^{5}\widehat{\gamma}_{1,p,\left(+\right)}r^{p} & r\geq0
\end{cases}\\
f_{2}\left(r\right) & =\begin{cases}
\sum_{p=1}^{5}\widehat{\gamma}_{2,p,\left(-\right)}r^{p} & r<0\\
\sum_{p=1}^{5}\widehat{\gamma}_{2,p,\left(+\right)}r^{p} & r\geq0
\end{cases}
\end{align*}
Next, to motivate the distribution of the individual fixed effects and the parameters of the time fixed effects, a fixed effects regression is estimated. Specifically, a linear regression is fit to the data. The regression, given below, includes a fifth order polynomial of the running variable by time period and side of the cutoff, a unit fixed effect, a time fixed effect and an idiosyncratic error term. 
\begin{equation}\label{eq:app_sim_est}
\begin{split}
Y_{i,t} & =\sum_{p=0}^{5}\theta_{p}^{1}\times R_{i,t}^{p}+\sum_{p=0}^{5}\theta_{p}^{1}\times 1_{\{R_{i,t}\geq0\}}\times R_{i,t}^{p}\\
 & +\sum_{p=0}^{5}\theta_{p}^{2}\times 1_{\{t=2\}}\times R_{i,t}^{p}+\sum_{p=0}^{5}\theta_{p}^{2}\times 1_{\{t=2\}}\times 1_{\{R_{i,t}\geq0\}}\times R_{i,t}^{p}+\delta_{t}+\delta_{i}+u_{i,t}.
 \end{split}
 \end{equation}
Note that some of the intercepts in this regression will be omitted due to collinearity with the fixed effects. The mean and standard deviation of the estimated unit fixed effects from fitting \eqref{eq:app_sim_est} to the data in the application are 155 and 117, respectively. Hence in the simulation DGP we set $\delta_i\sim N(155,117)$. The estimated time fixed effect for $t=1$ is $-46$, and for $t=2$ is (normalized at) zero, and so we use these values for $\delta_t$ in the simulated DGP. Finally, the standard deviation of the residuals, $\widehat{u}_{i,t}$, from \eqref{eq:app_sim_est} is 40, so we set $\varepsilon_{i,t}\sim N(0,40)$ in the DGP. 

The last parameter that we need to set is $D_t$, which is estimated in each time period separately using default values of the \texttt{rdrobust} package in \textbf{R}. The BC estimated value for $t=1$ is 63 and for $t=2$ is $-63$. Hence we set $D_1=63$ and $D_2=-63$ in the DGP. This implies that the true ATT of the simulation study, assuming constant potential outcome discontinuities, is $ATT(c,2)=-63-(63)=-126$.

\begin{figure}[p]
    \centering
    \caption{Running Variable in Observed Data vs. Simulated Sample}
    \label{fig:app_sim}
    \begin{subfigure}{\textwidth}
      \centering
      \caption{Time 1}
      \includegraphics[width=\textwidth]{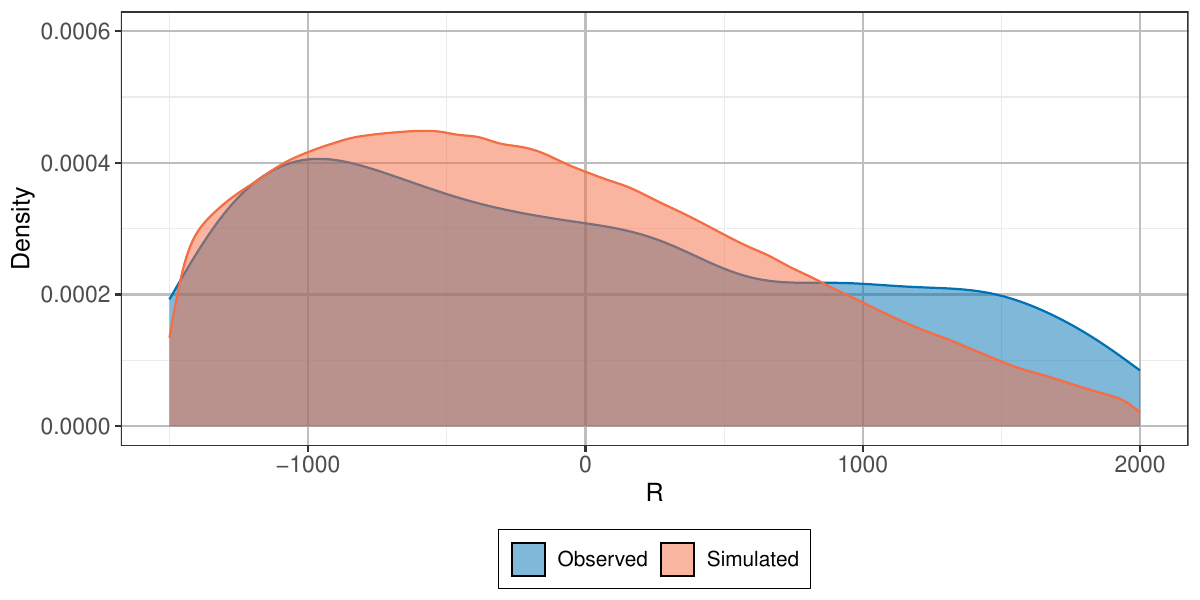}
    \end{subfigure}
    \begin{subfigure}{\textwidth}
      \centering
      \caption{Time 2}
      \includegraphics[width=\textwidth]{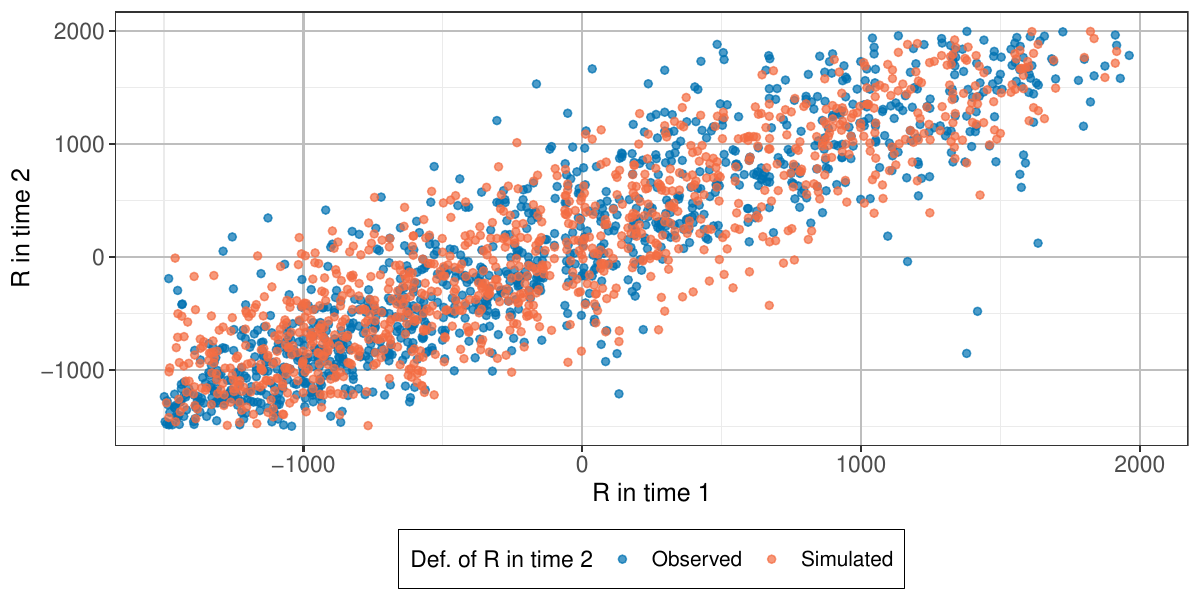}
    \end{subfigure}
    \caption*{\footnotesize \textit{Notes:}
    The top panel presents the density of the observed values of the running variable in the first time period, versus the density of the simulated values of the running variable. The bottom panel presents a comparison between the observed values of the running variable in the second time period and simulated values in the second time period which add draws from a normal distribution to the observed values of the running variable in the first period.
    }
\end{figure}

\clearpage
\section{Appendix Figures}

\begin{figure}[h]
    \centering
    \caption{Illustrative Example Comparing DID and RD-DID}
    \label{fig:comp_did}
    \includegraphics[width=\textwidth]{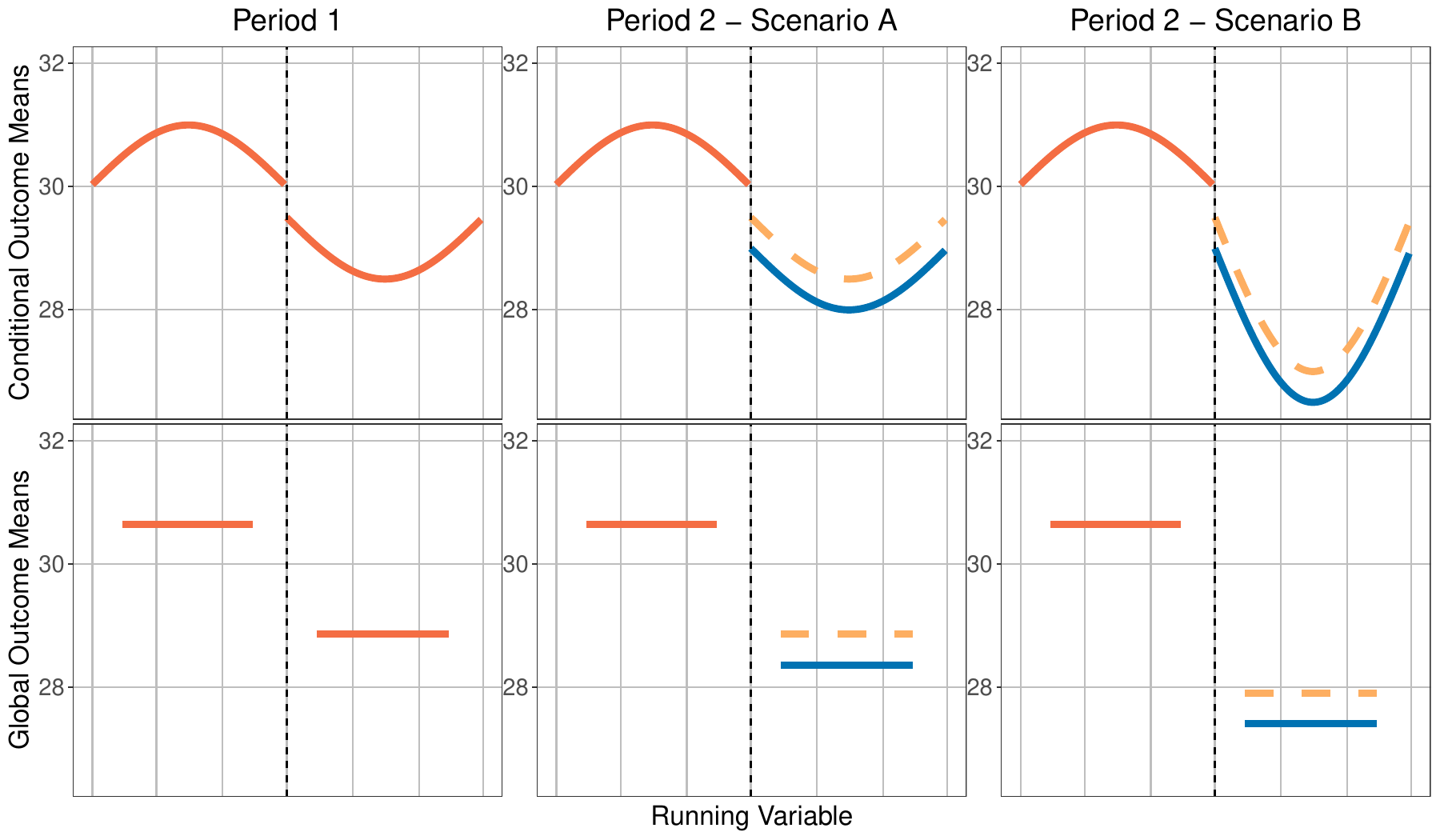}
    \caption*{\footnotesize \textit{Notes:} The figure presents an example that compares RD-DID (top row) and DID (bottom row). The left column presents outcomes in period 1 where no unit is treated. The middle and right columns present two scenarios of outcomes where units are treated according to a sharp RD design. The top row presents the outcomes means conditional on the running variable.  The bottom row presents the global outcome mean to the left and right of the cutoff (and hence does not change with $R$). The color oranges represents untreated outcomes and the color blue represents treated outcomes. The dashed orange line represents unobserved untreated potential outcomes.}
\end{figure}

\begin{figure}
    \centering
    \caption{Event Study Plot of Discontinuities}\label{fig:grembi_main}
    \begin{subfigure}{\textwidth}
      \centering
      \caption{Outcome = Fiscal Gap}
      \includegraphics[width=\textwidth]{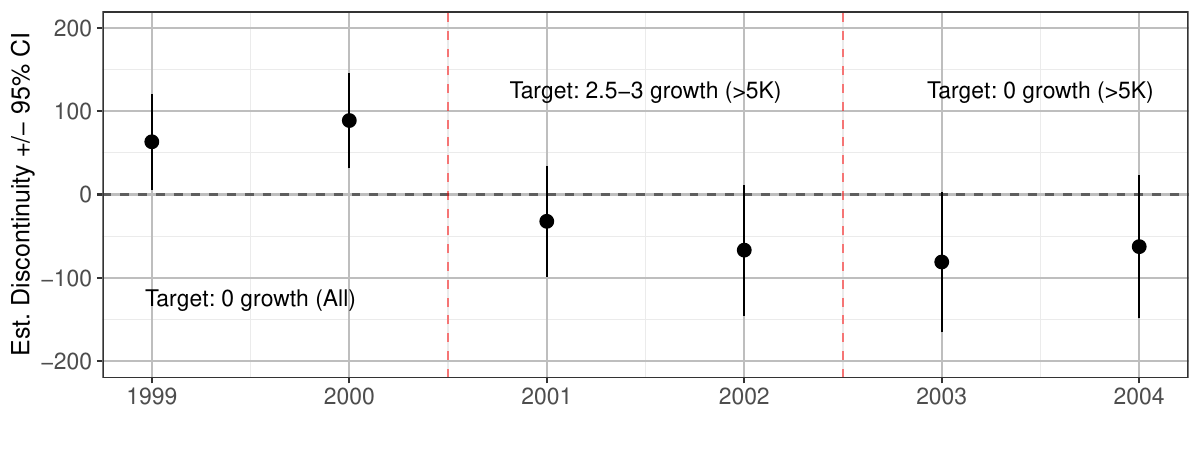}
      \label{fig:grembi_es_1}
    \end{subfigure}
    \medskip
    \begin{subfigure}{\textwidth}
      \centering
      \caption{Outcome = Deficit}
      \includegraphics[width=\textwidth]{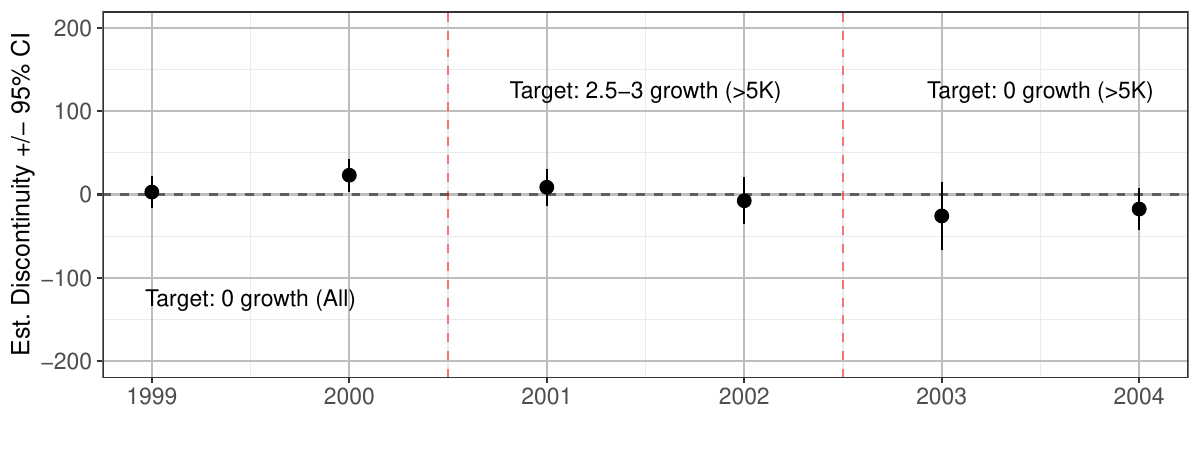}
      \label{fig:grembi_es_2}
    \end{subfigure}
    \label{fig:grembi_es}
    \caption*{\footnotesize \textit{Notes:}
    The figure presents event study plots configured for the RD-DID framework. In panel (A) we show results for the outcome fiscal gap, and in panel (B) for the outcome deficit. Each panel presents estimated outcome discontinuities and 95\% confidence intervals by year. The red dashed lines represent a change in the fiscal law regulation, which is also annotated: in 1999 and 2000 al municipalities were mandated to follow a fiscal rule of zero growth in debt. From 2001 and onwards, only municipalities above the 5,000 population size were mandated to follow the fiscal rule.
    }
\end{figure}

\end{document}